\newcommand{\footnoterecall}[1]{\hyperref[#1]{\footnotemark[\value{#1}]}}
\declaretheoremstyle[
    spaceabove=6pt, 
    spacebelow=6pt, 
    headfont=\normalfont\bfseries,
    notefont=\mdseries\bfseries, 
    notebraces={(}{)}, 
    bodyfont=\normalfont\itshape,
    postheadspace=1em,
    headpunct={:}]{thmstyle}
\newtheorem{theorem}{Theorem}[section]
\newtheorem{corollary}[theorem]{Corollary}
\newtheorem{lemma}[theorem]{Lemma}
\numberwithin{equation}{section} 
	\newcommand{\blue}[1]{\textcolor{Blue}{#1}}
	\newcommand{\green}[1]{\textcolor{OliveGreen}{#1}}
    \newcommand{\ket}[1]{\vert  #1 \rangle}
    \newcommand{\bra}[1]{\langle #1 |}
    \newcommand{\inprod}[2]{\langle #1 | #2 \rangle}
	\newcommand{\proj}[2]{\ket{#1}\bra{#2}}
	\newcommand{\id}{\mathbbm{1}}
	\newcommand{\tr}{\operatorname{Tr}  }
	\renewcommand{\vec}[1]{\mathbf{#1}}
	\newcommand{\avg}[1]{\langle #1 \rangle}
\newcommand{\ApplyGradient}[1]{%
        \pgfmathsetmacro{\PercentColor}{100}
        \hspace{-0.33em}\colorbox{cyan!\PercentColor!white}{}
}
\newcommand{\colorcell}[1]{%
        \hspace{0em}\colorbox{#1}{}
}
\newcolumntype{X}{>{\collectcell\colorcell}c<{\endcollectcell}}
\newcolumntype{R}{>{\collectcell\ApplyGradient}c<{\endcollectcell}}
\newcommand{\SetNumber}[1]{%
        \hspace{0.9em}#1\hspace{0.9em}
}
\newcommand{\SetNumberGrid}[1]{%
        \hspace{0.6em} #1 \hspace{0.6em}
}
\newcolumntype{C}{>{\collectcell\SetNumberGrid}c<{\endcollectcell}}
\newcolumntype{M}{>{\collectcell\SetNumber}c<{\endcollectcell}}
\newenvironment{homemadetoy}
    {\ \begin{tabular}
         {|{c}@{}|@{}{c}|{c}|{c}|}    
        \hline}
    {\\ \hline 
        \end{tabular}  \ }
\NewDocumentCommand{\lonecell}{o m o }
    {\ \scalebox{0.75}{\begin{tabular}{|c|} \hline\toycell[#1]{#2}[#3]\\ \hline \end{tabular}} \ }
\newcommand\DecodeToyColor[1]%
\newcommand{\DecodeToyPointer}[1]%
    {\IfStrEq{#1}{nil}%
        {}%
        {}%
   }
\NewDocumentCommand{\toycell}{o >{\DecodeToyColor}m o }
    {\scalebox{\IfValueTF{#1}{0.7}{1}}{\colorbox{#2}{\IfValueT{#1}{{\LARGE{#1}}}\IfValueT{#3}{\tikzmark{#3}}}}}
\newcommand{\customrow}[4]
    {\toycell{#1}&\toycell{#2}&\toycell{#3}&\toycell{#4}}
\NewDocumentCommand{\onecustomtoy}{>{\SplitArgument{3}{}} m}%
    {\begin{homemadetoy}
        \customrow #1
    \end{homemadetoy}
    }
\NewDocumentCommand{\twocustomtoys}{>{\SplitArgument{3}{}} m >{\SplitArgument{3}{}} m >{\SplitArgument{3}{}} m >{\SplitArgument{3}{}} m}{%
    \begin{homemadetoy}
        \customrow #1\\ \hline
        \customrow #2\\ \hline
        \customrow #3\\ \hline
        \customrow #4
    \end{homemadetoy}%
}
\newcommand{\toy}[1]{
\IfStrEqCase{#1}{
    {0}{\onecustomtoy{1100}}
    {1}{\onecustomtoy{0011}}
    {+}{\onecustomtoy{1010}}
    {-}{\onecustomtoy{0101}}
    {i}{\onecustomtoy{1001}}
    {-i}{\onecustomtoy{0110}}
    {mix}{\onecustomtoy{1111}}
    {nil}{\onecustomtoy{0000}}
    }[]%
}
\newcommand{\toys}[1]{
\IfStrEqCase{#1}{
    {00}{\twocustomtoys{0000}{0000}{1100}{1100}}
    {01}{\twocustomtoys{0000}{0000}{0011}{0011}}
    {10}{\twocustomtoys{1100}{1100}{0000}{0000}}
    {+0}{\twocustomtoys{0000}{1100}{0000}{1100}}
    {0+}{\twocustomtoys{0000}{0000}{1010}{1010}}
    {1+}{\twocustomtoys{1010}{1010}{0000}{0000}}
    {11}{\twocustomtoys{0011}{0011}{0000}{0000}}
    {bell}{\twocustomtoys{0001}{0010}{0100}{1000}}
    {-bell}{\twocustomtoys{1000}{0100}{0010}{0001}}
    {mixedleft}{\twocustomtoys{1100}{1100}{1100}{1100}}
    {mixedright}{\twocustomtoys{0011}{0011}{0011}{0011}}
    {mixedup}{\twocustomtoys{1111}{1111}{0000}{0000}}
    {mixedantidiagonal}{\twocustomtoys{1100}{1100}{0011}{0011}}
    {mixeddiagonal}{\twocustomtoys{0011}{0011}{1100}{1100}}
    {mix}{\twocustomtoys{1111}{1111}{1111}{1111}}
    {nil}{\twocustomtoys{0000}{0000}{0000}{0000}}
    }
    []
}
\newenvironment{toyplay}
    {\begin{tikzpicture}
        [remember picture, overlay]
    }
    {\end{tikzpicture}}
\newcommand{\permutarrow}[3][ForestGreen]{%
    \draw [#1, <->, line width=1.5pt] 
         ({pic cs:#2}) -- ({pic cs:#3});}
\newcommand{\toyCNOT}{%
        \begin{homemadetoy}
            \toycell{g}[firstin]&\toycell{Orange}[secondin]&\toycell{g}[firstout]&\toycell{orange}[fourthout]\\ \hline%
            \toycell{g}[thirdin]&\toycell{Orange}[fourthin]&\toycell{g}[thirdout]&\toycell{Orange}[secondout]\\ \hline%
            \toycell{0}&\toycell{blue}[fifthin]&\toycell0&\toycell{blue}[sixthin]\\ \hline%
            \toycell{0}&\toycell{blue}[fifthout]&\toycell0&\toycell{blue}[sixthout]
        \end{homemadetoy}
        \begin{toyplay}
        \permutarrow{firstin}{firstout}
        \permutarrow{thirdin}{thirdout}
        \permutarrow[Orange]{secondin}{secondout}
        \permutarrow[Orange]{fourthin}{fourthout}
        \permutarrow[blue]{fifthin}{fifthout}
        \permutarrow[blue]{sixthin}{sixthout}
    \end{toyplay}
}
\newcommand{\toymeasurementz}{
    \begin{homemadetoy}
        \toycell[0]{orange}&\toycell[0]{orange}&\toycell[1]{g}&\toycell[1]{g}
        \end{homemadetoy}
}
\newif\ifproofread
\newcommand{\changemarker}[1]{%
\ifproofread
\textcolor{OliveGreen}{#1}%
\else
#1%
\fi
}
\begin{document}

\proofreadtrue

\title{Toys can't play: physical agents in Spekkens' toy theory}

\author{Ladina Hausmann}
\affiliation{Institute for Theoretical Physics, ETH Zurich, 8093 Z\"{u}rich, Switzerland}
\email{hladina@phys.ethz.ch}

\author{Nuriya Nurgalieva}
\affiliation{Institute for Theoretical Physics, ETH Zurich, 8093 Z\"{u}rich, Switzerland}
\email{nuriya@phys.ethz.ch}

\author{Lídia del Rio}
\affiliation{Institute for Theoretical Physics, ETH Zurich, 8093 Z\"{u}rich, Switzerland}

\date{}

\begin{abstract}
\emph{Information is physical}~\cite{Landauer1961}, and for a physical theory to be universal, it should model observers as physical systems, with concrete memories where they store the information acquired through experiments and reasoning. Here we address these issues in Spekkens' toy theory~\cite{Spekkens_2005}, a non-contextual epistemically restricted model that partially mimics the behaviour of quantum mechanics. 
We propose a way to model physical implementations of agents, memories, measurements, conditional actions and information processing.   
We find that the actions of toy agents are severely limited: although there are non-orthogonal states in the theory, there is no way for physical agents to consciously prepare them.  Their memories are also constrained:  agents cannot forget in which of two arbitrary states a system is. 
Finally, we formalize the process of making inferences about other agents' experiments and model multi-agent experiments like Wigner's friend. 
Unlike quantum theory~\cite{Frauchiger_2018, Brukner2018, Nurgalieva_2019, Nurgalieva2021, Fraser2020} or box world~\cite{Vilasini_2019}, in toy theory there are no inconsistencies when physical agents reason about each other's knowledge.  
\end{abstract}

\maketitle


\setlength{\epigraphwidth}{4in}
\epigraph{There were realities the human mind was never meant to withstand, pressures it was never meant to survive. Knowledge is like the sea. \\ Go too deep, and the crushing weight of it could kill you.}{Seanan McGuire, \emph{Laughter at the Academy}}

{\setlength{\epigraphwidth}{3.8in}
\epigraph{You\dots are\dots a\dots toy!!!  You aren't the real Buzz Lightyear, you're an\dots aw, you're an action figure! You are a child's\dots plaything! }{\emph{Toy Story}}}
\newpage

\tableofcontents

\section{Introduction and summary of the toy theory}
\label{sec:introduction}
\paragraph{Physical information.} Physical theories that aim to describe the world at macroscopic scales should ideally be able to physically model observers, the experiments they perform, and their reasoning within the theory. 
For instance, the information acquired by agents must be stored in some physical form, in systems that we call memories. These can encompass biological brains but also classical and quantum computer memories, or even just measurement devices.
Processing that information is ultimately a physical process, and manipulations of an agent's memory don't simply result abstract epistemic changes of their knowledge, but  also in concrete physical changes.
For example, quantum measurement schemes~\cite{vonNeumann1955} show us that the process of acquiring information about a system entangles the system with our physical memory, and Landauer's principle~\cite{Landauer1961,Bennett1998} tells us that forgetting that information is to shuffle those correlations to the environment, at a thermodynamic cost. 

\paragraph{Abstract logic.}
Another feature that physical theories should satisfy is to allow for the information contained in the memories to be operated according to simple reasoning principles. Ideally we would like inferences such as “if Alice knows that $a$ is true, and she knows that $a$ implies $b$, then she knows that $b$ is true” to hold independently of the physical origins of $a$ and $b$.\footnote{For example, Alice knows that (a) it is raining, and (b) whenever it rains outside, Bob carries an umbrella with him. She concludes that Bob indeed is walking with an umbrella today. Here Alice is able to combine her knowledge of Bob's reasoning (b)) with her own observation (a) to infer Bob's behaviour. We will examine quantum examples later in the manuscript.} In other words, an abstract system of \emph{epistemic logic} should ideally be applicable to any physical setting within the theory~\cite{Nurgalieva_2019, Fraser2020}.  

\paragraph{Tension between abstract logic and physical information.} It was shown that for some theories where both of of these requirements are satisfied -- where agents reason about each other's knowledge and are themselves modeled as physical memories within the scope of the model -- experience inconsistencies. For example, this is the case for the quantum theory and generalized probability theories (in particular, so-called box world), where agents applying standard logic to reason about physical experiments can come to contradictory conclusions~\cite{Frauchiger_2018, Vilasini_2019}. 
Our ultimate goal is to understand which classes of theories exhibit similar incompatibility between multi-agent logic and physics. In previous work~\cite{Vilasini_2019}, we discuss the elements needed to define a reasoning agent for an arbitrary physical theory  (including physical descriptions of agents' memories and measurements, subjective state update rules, and how these interface with an abstract logic system). In this work we analyse these concepts specifically in Spekkens' toy theory.

\paragraph{Spekkens' toy theory.}
Spekkens' toy theory is an \emph{epistemically-restricted theory}~\cite{Spekkens_2005,Pusey_2012, SpekkensFoundations2016, Catani_2017,Coecke2011,Coecke2010,Backens2014,Comfort2021}. Such theories distinguish two types of states: ontic states, which encode the physical state of a system, and epistemic states -- the states of knowledge that an observer has about the system. The theory imposes restrictions on agents' knowledge (in Spekkens' case, an agent can only have access to half of the total information about the ontic state). The evolution of an ontic state is governed by the dynamics of underlying ontic theory, whereas the epistemic state is on top subject to particular rules, for example for updating after a measurement.  
For a detailed description of the formalism of the toy theory, we refer the reader to our review~\cite{Hausmann2021}.

\paragraph{Contribution and structure.} 
In Section~\ref{subsec:toy-summary}, we provide a minimal summary of the essence of Spekkens' toy theory. In Section~\ref{sec:learning}, we discuss the physical implementations of measurements in quantum theory and their analogues in the toy theory, and analyze restrictions on agents' choices in conditional preparation scenarios in the toy theory in Section~\ref{sec:acting}. In Section~\ref{sec:forgetting}, we consider the process of forgetting, and show that not all expressions of ignorance are valid in the toy theory. We discuss how agents can make inferences in Section~\ref{sec:reasoning}, and consider the consequences of our results for thought experiments like the Frauchiger-Renner setup~\cite{Frauchiger2018}, formulated for the case of the toy theory. We summarize our conclusions and outline future directions in Section~\ref{sec:conclusions}.
To keep the main manuscript light, we present only intuitive examples in low dimensions;  our general results hold for arbitrary dimensions, and are formalized in the appendix.

\subsection{Minimal summary of the Spekkens' toy theory}
\label{subsec:toy-summary}

\paragraph{Knowledge balance principle.}
Spekkens' toy theory is an \textbf{epistemically-restricted theory}~\cite{Spekkens_2005}. Such theories distinguish two types of states: \textbf{ontic states}, which encode the physical state of a system, and \textbf{epistemic states} -- the states of knowledge that an observer has about the system. 
In the toy theory, epistemic states  are constrained by the 
 \emph{knowledge balance principle}, inspired by the Heisenberg uncertainty principle:
            \begin{displayquote}
                ``If one has maximal knowledge, then for every system,
                at every time, the amount of knowledge one possesses
                about the ontic state of the system at that time must
                equal the amount of knowledge one lacks.'' \cite{Spekkens07}
            \end{displayquote}
{In practice, suppose that a system can be in one of  $2d$ possible ontic states, that is, specifying the value of $\log_2 (2d)=1+ \log_2 d$ degrees of freedom is sufficient to identify each ontic state. Then an observer is only allowed to access at most  $\log_2 d$ bits of information, and we say that the system \emph{has dimension} $d$; this and other constraints will become clearer with the examples ahead. }

\paragraph{System dimensions.} {Our results apply to the} general case where a toy system can be decomposed into $N$ subsystems of dimension $d$ each. The continuous limit, analogous to a particle moving in space, corresponds to $d \to \infty $. For example, the elementary systems analogous to qubits (toy bits) have $d=2$.  In the following we stick to intuitive visualizations for small discrete dimensions (one or two toy bits); we review the general case in Appendix~\ref{appendix:toyformalism}.

\paragraph{{Notation for toy bits.}} {We can represent up to two discrete toy systems via simple grid diagrams. For higher dimensions, grid diagrams aren't as useful; the interested reader may find a review of the necessary mathematical notation in the appendix.
A single system with two degrees of freedom ($d=2)$ has four different ontic states (labelled for example 1,2,3,4). To see this, note that answering two binary questions would be sufficient to identify the ontic state of the system, for example ``is the ontic state odd?'' and ``is the ontic state smaller than 3?''.
Each ontic state is represented by one of four boxes,}
\scalebox{0.7}{\begin{homemadetoy}%
    \toycell[1]0&\toycell[2]0&%
    \toycell[3]0&\toycell[4]0
\end{homemadetoy}}\!.
{To represent an observer's knowledge about the ontic state --- that is the epistemic state from their perspective --- we colour in some of the boxes. For example, $\{1,2\} =$}
\scalebox{0.7}{\begin{homemadetoy}%
    \toycell[\textcolor{white}{\textbf1}]1&\toycell[\textcolor{white}{\textbf2}]1&%
    \toycell[3]0&\toycell[4]0
\end{homemadetoy}}%
{represents ``the observer knows that the system's ontic state is either $o=1$ or $o=2$.''  For simplicity we usually omit the number labels. 
One can draw an analogy between toy states and quantum states, }
\begin{align*}
    \ket0 \sim &\toy0 =\{1,2\}, &\  \ket+ \sim &\toy+  =\{1,3\},& \ \ket{+i} \sim &\toy i =\{1,4\}, \\
    \ket1 \sim& \toy1  =\{3,4\}, & \  \ket- \sim &\toy-  =\{2,4\}, &  \ket{-i} \sim &\toy{-i} =\{2,3\}.
\end{align*}
{These are the only pure states at $d=2$: they are states of maximal allowed information, according to the knowledge balance principle, for which the observer knows half of the degrees of freedom (for example \scalebox{0.7}{\toy+}\! represents ``the ontic state is odd'').  The quantum analogy  carries through to a stabilizer formulation of the toy theory \cite{Pusey_2012}.
The fully mixed state (a state of maximal ignorance) is represented as the mixture}
\begin{align*}
   \{1,2,3,4\}= \toy{mix} 
        &= \toy0 \vee \toy1  \sim  \proj00+\proj11 \\
        &= \toy+ \vee \toy- \sim  \proj+++\proj-- \\
        &= \toy{i} \vee \toy{-i} \sim  \proj {\,i\,}{\, i\,}+\proj{-i}{-i}.
\end{align*}
{ Unlike quantum theory, this is the only physical mixed state at $d=1$, that is the only mixed state that can emerge as a marginal of a globally pure state (like the entangled state which we will consider in a moment). The epistemic restriction implies that for a system composed of $N$ elementary systems, an agent is only allowed to have access to exactly $0\leq J\leq N$ bits of information (corresponding to an epistemic state spanning $2^{2N-J}$ ontic states). For example, for $N=1$ the valid epistemic states can span either 2 or 4 ontic states. The mixture}
\begin{align*}
    \begin{homemadetoy}
        \toycell{red}&\toycell{red}&\toycell{red}&\toycell0
    \end{homemadetoy}= \toy0 \vee \toy+
\end{align*}
{is not a valid epistemic state \cite{Spekkens07}.}\footnote{{At the time of writing (late 2022),  Spekkens and colleagues are investigating how to articulate a layer of (Bayesian) probabilistic knowledge on top of these epistemic ``physical'' states. If successful, this would allow for Bayesian states whose measurement statistics are identical to the illegal ``physical'' epistemic state \scalebox{0.7}{\begin{homemadetoy}\toycell{red}&\toycell{red}&\toycell{red}&\toycell0\end{homemadetoy}}, at least for local measurements. We leave it as future work to study the consequences and stability of that approach.}}

\paragraph{{Composing systems.}} {When we consider two toy systems, we represent the ontic states with a $4\times4$ grid, where the rows determine the possible ontic states of  system $A$ and the columns those of  system $B$, for example}
\begin{align*}
\begin{tabular}{c c c}
     $A$ \ & \toys{+0} 
        & $={\toy+}_A \otimes {\toy0}_B \ 
        \sim  \ \ket+_A \otimes \ket1_B$, \\ 
         \vspace{2mm} \\
     &  $B$ \\
    \vspace{5mm} \\
   $A$\ &\toys{mixedup} & $={\toy{1}}_A \otimes {\toy{mix}}_B \sim \proj11_A \otimes \id_B$ . \\
    \vspace{2mm} \\
     &  $B$ 
\end{tabular} 
\end{align*}
{We omit the subsystem labels when they are clear from context.
There are also global toy states which are not product states, for example the classically correlated state}
\begin{align}\label{toy:bell}
 \toys{mixeddiagonal} & = \toys{00} \vee \toys{11} \sim \proj{00}{00} + \proj{11}{11} , 
\end{align}
{which is a mixed state, and the pure entangled state }
\begin{align*}
\toys{bell} &\sim \ket{00} + \ket{11} .
\end{align*}
{Analogously to a Bell state in quantum theory, this latter state is used for toy teleportation and dense coding protocols \cite{Spekkens07}.  
Like their quantum analogues, entangled toy states are globally pure states with mixed marginals. In this example, the observer has maximal information about the correlations between $A$ and $B$, and maximal ignorance about the reduced state of individual subsystems. }

\paragraph{Reduced states.} { For discrete toy systems, taking the reduced state over one system corresponds to projecting the grid diagram into one axis. For example,}
\begin{align*}
    \begin{tabular}{c c}
         $E_{AB}= \quad A$ \ & \toys{+0}  \vspace{2mm} \\
         &  $B$ 
    \end{tabular} 
    \implies 
    E_{B}= 
     {\begin{homemadetoy}
         \customrow0000\\ 
         \customrow1100\\
         \customrow0000\\
         \customrow1100
     \end{homemadetoy}}_B
    ={\toy0}_B
\end{align*}
{More formally, if the global epistemic state is $E_{AB}$, the reduced (or marginal)  state on system $B$ is $E_B= \{o_B: \  (o_A, o_B) \in E_{AB} \}$. In the above example, $E_{AB} = \{(1,1), (1,2), (3,1), (3,2) \}$ so $E_B= \{1,2\}$. }

\paragraph{{Notation for transformations.}}
{For discrete dimensions, the allowed toy transformations are permutations of ontic states, constrained to mapping valid epistemic states to valid epistemic states (in all subsystems). For example consider 
the transformation that permutes the second and third ontic states,}
\begin{align*}
    H= \begin{homemadetoy}%
            \toycell0&\toycell{g}[leftend]&\toycell{g}[rightend]&\toycell0
        \end{homemadetoy} = (2,3).
    \begin{toyplay}
        \permutarrow{leftend}{rightend}
    \end{toyplay}
\end{align*}
{This permutation acts analogously to the Hadamard gate in quantum theory,  }
\begin{align*} 
    \ket0 \sim    \toy0 &\xleftrightarrow{H} \toy+ \sim \ket+, \\ 
     \ket1 \sim \toy1 &\xleftrightarrow{H}  \toy- \sim\ket-.
\end{align*}
{We will see other examples (like a CNOT gate) further ahead; in particular we will see that a controlled Hadamard is not a valid operation, and we will explore the implications of this for agents' free choice. }

\paragraph{{Notation for measurements.}}
{
In the toy theory, a measurement is a partition of the ontic state space into valid epistemic   states, called the measurement basis. The outcome of the measurement is determined by the position of the ontic state. The observer can then update their knowledge; in the toy theory the measurement update rule leads to an ontic disturbance, which we will later see emerges from the physical implementation of a measurement. The consequence for the updated epistemic state is a bit cumbersome to explain, but will become clear from examples right ahead. In short, the updated description should guarantee that if the observer repeats the measurement they obtain the same outcome, and be compatible with their overall knowledge \cite{Spekkens07}. 
 For example, consider the measurement 
$\mathcal M_Z= \scalebox{0.7}{\toymeasurementz}$\!, where the numbered partitions correspond to the epistemic states of the measurement basis. This is analogous to a Pauli-Z measurement of a single qubit. Suppose that Bob measures a toy bit in state $\scalebox{0.7}{\toy+}$ in this basis, and obtains outcome $\scalebox{0.8}{\lonecell[0]{orange}}$. This allows him to deduce that the ontic state previous to the measurement was $0$; however }
\begin{align*}
\begin{homemadetoy}%
    \toycell{violet}&\toycell0&\toycell0&\toycell0
\end{homemadetoy} = \toy+ \wedge \begin{homemadetoy}%
    \toycell{orange}&\toycell{orange}&\toycell0&\toycell0
\end{homemadetoy}
\end{align*}
{is not a valid epistemic state. Rather, it is a \textbf{pre- and post-selected state}, a concept that has analogues in quantum theory. The smallest epistemic state compatible with his knowledge of the pre- and post-selection  and with the requirement for repeated outcomes is \scalebox{0.7}{\toy0}. To see this, first note that there are four epistemic states compatible with Bob's knowledge, }
\begin{align*}
\begin{homemadetoy}%
\toycell{violet}&\toycell0&\toycell0&\toycell0
\end{homemadetoy} \subseteq  
\toy0, \toy+, \toy i, \toy{mix}.
\end{align*}
{The requirement for repeated outcomes translates to ``the post-measurment description must be a subset of $\scalebox{0.7}{\begin{homemadetoy}%
    \toycell{orange}&\toycell{orange}&\toycell0&\toycell0
\end{homemadetoy}}$, so that if I apply the same measurement again, I am guaranteed to obtain the same outcome \lonecell[0]{orange}.'' Of the four candidates, only the first epistemic state satisfies the condition,}
\begin{align*}
    \begin{homemadetoy}%
    \toycell{violet}&\toycell0&\toycell0&\toycell0
    \end{homemadetoy} 
    \subseteq  
\toy0 
    \subseteq   
    \begin{homemadetoy}%
    \toycell{orange}&\toycell{orange}&\toycell0&\toycell0
\end{homemadetoy}
\end{align*}
{This is analogous to the quantum measurement update rule for projective  outcomes: if Bob measured $\ket+$ in the $Z$ basis and obtained outcome 0, he would henceforth describe the state as $\ket0$. Now suppose that Bob is making the same measurement, but now on his half of the entangled toy Bell state (\ref{toy:bell}).  The global view of Bob's local measurement is }
\begin{align*}
\mathcal I \otimes \mathcal M_Z = 
\begin{homemadetoy}%
    \toycell[0]{orange}&\toycell[0]{orange}&\toycell[1]{g}&\toycell[1]{g}\\ \hline
    \toycell[0]{orange}&\toycell[0]{orange}&\toycell[1]{g}&\toycell[1]{g}\\ \hline
    \toycell[0]{orange}&\toycell[0]{orange}&\toycell[1]{g}&\toycell[1]{g}\\ \hline
    \toycell[0]{orange}&\toycell[0]{orange}&\toycell[1]{g}&\toycell[1]{g}
\end{homemadetoy}
\end{align*}
{If Bob obtains outcome \lonecell[0]{orange} he can deduce that the global ontic state before the measurement was one of  the two coloured squares in the bottom-left corner,}
\begin{align*}
    \begin{homemadetoy}%
        \toycell0&\toycell0&\toycell0&\toycell0\\ \hline 
        \toycell0&\toycell0&\toycell0&\toycell0\\ \hline 
        \toycell0&\toycell{violet}&\toycell0&\toycell0\\ \hline 
        \toycell{violet}&\toycell0&\toycell0&\toycell0
    \end{homemadetoy}
    = \toys{bell} \wedge 
    \begin{homemadetoy}%
        \toycell{orange}&\toycell{orange}&\toycell0&\toycell0\\ \hline 
        \toycell{orange}&\toycell{orange}&\toycell0&\toycell0\\ \hline 
        \toycell{orange}&\toycell{orange}&\toycell0&\toycell0\\ \hline 
        \toycell{orange}&\toycell{orange}&\toycell0&\toycell0
    \end{homemadetoy} .
\end{align*}
{This is the pre- and post-selected state that describes his knowledge of the ontic state of $AB$ just before the measurement. There are two valid epistemic states that are  compatible with this knowledge and live in the measurement partition \lonecell[0]{orange}, }
\begin{align*}
 \begin{homemadetoy}%
        \toycell0&\toycell0&\toycell0&\toycell0\\ \hline 
        \toycell0&\toycell0&\toycell0&\toycell0\\ \hline 
        \toycell0&\toycell{violet}&\toycell0&\toycell0\\ \hline 
        \toycell{violet}&\toycell0&\toycell0&\toycell0
    \end{homemadetoy}
    \subseteq
    \toys{00} ,\quad  &  \toys{mixedleft}
    \subseteq 
    \begin{homemadetoy}%
        \toycell{orange}&\toycell{orange}&\toycell0&\toycell0\\ \hline 
        \toycell{orange}&\toycell{orange}&\toycell0&\toycell0\\ \hline 
        \toycell{orange}&\toycell{orange}&\toycell0&\toycell0\\ \hline 
        \toycell{orange}&\toycell{orange}&\toycell0&\toycell0
    \end{homemadetoy} . \\
     \sim\proj00_A \otimes \proj00_B \quad  &  \sim \id_A \otimes \proj00_B
\end{align*}
{Of the two candidates, Bob picks the leftmost (the smallest state), which is the description that makes the most use of his knowledge about the state before the measurement (picking the state on the right would be discarding what he knew of the correlations between the two systems). 
The quantum analogy is when Bob makes a local $Z$ measurement on a Bell state $\propto \ket{00} + \ket{11}$, and upon seeing outcome $0$, updates his global description of the state to $\ket{00}$.}

\section{Learning: physical implementations of measurements}
\label{sec:learning}
\paragraph{Minimal settings.} 
In theories where agents can improve their knowledge about the state of the system by posing questions or performing measurements on the system, they have to update their epistemic state as a consequence. Hence, if we model agents as physical systems, the bare minimum of their description has to include the degree of freedom of the memory entry corresponding to the system being inquired by the agent. We call the process of revising the memory entry after a measurement a \textit{memory update}.
In this section, we consider how one can model the process of learning -- measuring a system and registering  the outcome in a memory -- in the toy theory.  It is in these settings that we find the most dramatic differences between quantum and toy theory. At every step, we start by reviewing the quantum process, and then try to find an analogous in the toy theory.

\begin{figure}[t!]
\centering
    \begin{subfigure}{\textwidth}
         \begin{align*}
        \Qcircuit @C=1.2em @R=1.2em {
        \lstick{\ket{\phi}_S} & \multigate{2}{V= e^{-\frac {i tg } {\hbar} \ \hat A_S \otimes \hat B_M}} & \qw  \\
        &&& \rstick{\sum_k \alpha_k \ket{a_k}_S \ket{\psi_k}_M} \\
        \lstick{\ket{\psi_0}_M} & \ghost{V= e^{-\frac {i tg } {\hbar} \ \hat A_S \otimes \hat B_M}} & \qw \gategroup{1}{3}{3}{3}{1em}{\}} \\
        }
        \end{align*}
        \caption{\textbf{Von Neumann measurement scheme~\cite{vonNeumann1955}.} An observable $\hat A_S = \sum_k a_k \ket{a_k}\bra{a_k}$ is measured by coupling the system of interest $S$ to a pointer $M$ through the Hamiltonian $\hat H= g\ \hat A_S \otimes \hat B_M$. If the original state of the system is $\ket\phi_S = \sum_k \alpha_k \ket{a_k}_S$, then the final state is  $\sum_k \alpha_k \ket{a_k}_S \otimes \ket{\psi_k}_M$, where $\ket{\psi_k}_M = e^{\frac{-i t g a_k}\hbar \hat B_M} \ket{\psi_0}_M$.
        }
    \end{subfigure} \\
    \begin{subfigure}{0.45\textwidth}
        \begin{align*}
        \Qcircuit @C=1.2em @R=1.2em {
        \lstick{\alpha \ket 0_S + \beta \ket 1_S} & \ctrl{2} & \qw  \\
        &&& \rstick{\alpha \ket{00}_{SM} + \beta \ket{11}_{SM}} \\
        \lstick{\ket 0_M} & \targ & \qw \gategroup{1}{3}{3}{3}{1em}{\}} \\
        }
        \end{align*}
        \caption{{ \bf  Discrete quantum measurement.} A qubit $S$ is measured in computational basis: it is coupled to another qubit $M$, acting as a memory system, by a CNOT gate. The operation coherently copies the state of $S$ to $M$.}
        \label{fig:discrete-cnot}
    \end{subfigure} \quad
    \begin{subfigure}{0.5\textwidth}
    \centering 
    \includegraphics[scale=0.23]{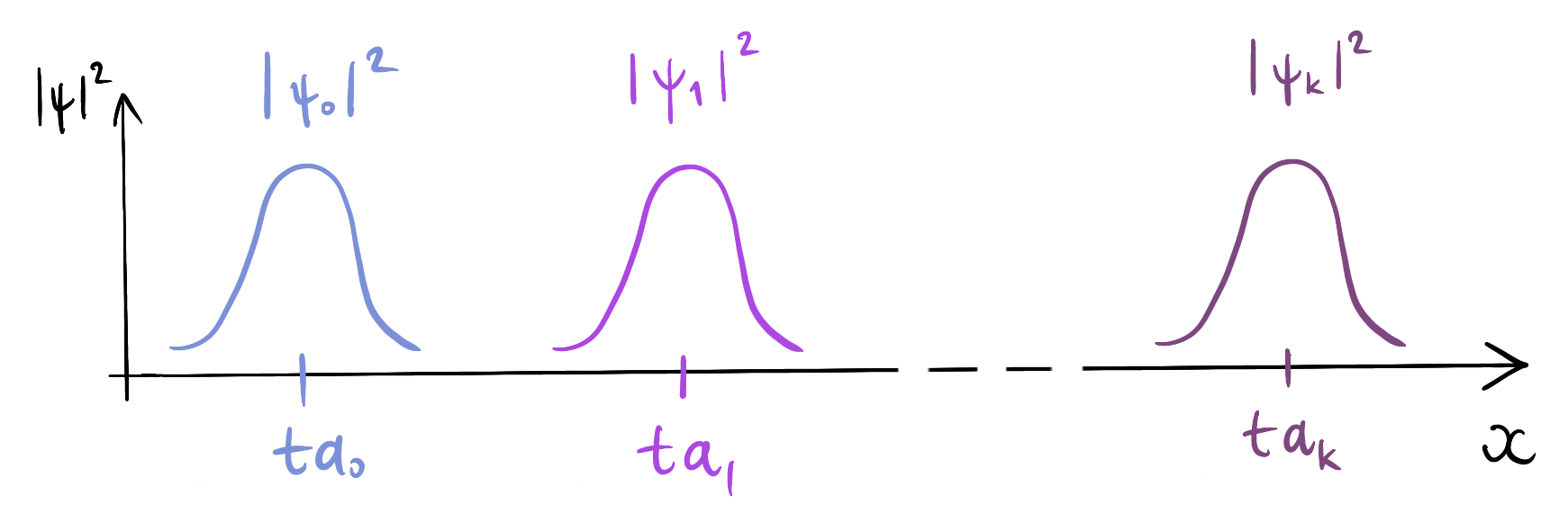}
        \caption{{ \bf Continuous quantum measurement.}  
        If $M$ is a continuous system and $\hat B_M = \hat P_M$ is the momentum operator and  then  each $\ket{\psi_k}_M$ has the original wave function of the pointer shifted by $a_k$, $\psi_k(x) = \psi_0(x- t a_k)$. For strong measurements, these peaks don't overlap.}
        \label{fig:continuous-cnot}
    \end{subfigure} 
\caption{{\bf Quantum measurements as physical evolutions.}  In quantum theory, we can implement measurements as unitary processes, which entangle the state of the system measured to the measurement device in the basis of the observable. Here are two familiar examples. Zooming out to the perspective of an external agent who sees the experimenter as just another quantum system, we can include the experimenter's memory and lab as part of the measurement device, and use the same unitary view to model the whole measurement process.}
\label{fig:measurement}
\end{figure}

\paragraph{Quantum measurements as entangling operations.} In quantum mechanics, measurements are entangling processes between the system of interest and a measurement device. For example, in the Stern-Gerlach experiment, the internal spin of a particle is coupled to its position degree of freedom through the application of a magnetic field; it is the position of the particle that acts as a pointer or measurement device when it hits a screen, as the arrival position is correlated with the internal spin.  

\paragraph{Quantum von Neumann measurement scheme.} More concretely, we can measure a discrete observable $\hat A_S = \sum_k a_k \proj{a_k}{a_k}_S$ on a system $S$ by coupling it to a measurement device (or pointer) $M$, through a Hamiltonian of the form $H = g\ \hat A_S \otimes \hat B_M$, where $\hat B_M$ is a suitable observable on $M$ and $g$ a tunable constant~\cite{vonNeumann1955}. 
Letting the two systems evolve for some time $t$ corresponds to the reversible evolution modelled by the unitary $V=  e^{\frac{-i  t} \hbar H } $. 
If $S$ is initially in an arbitrary state $\ket\phi_S = \sum_k \alpha_k \ket{a_k}_S$ {(with unknown coefficients $\alpha_k=\langle a_k | \phi \rangle_S$)}, and we prepare the measurement device (or  ``pointer'') in state $\ket{\psi_0}_M$, then after time $t$ the  global state becomes 
\begin{align*}
    e^{\frac{-i  t} \hbar H } \left( \ket\phi_S \otimes \ket{\psi_0}_M \right)
    =
    e^{\frac{-i g t} \hbar \ \hat A_S \otimes \hat B_M} \ \sum_k \alpha_k \ket{a_k}_S \otimes \ket{\psi_0}_M
    = \sum_k \alpha_k \ket{a_k}_S \otimes \underbrace{e^{\frac {-i t g a_k} \hbar \hat B_M} \ket{\psi_0}_M  }_{ \ket{\psi_k}_M}.
\end{align*}
This simple unitary evolution can be modified to account for noise, finite-size effects, coarse and continuous observables and other corrections, in order to cover realistic implementations of quantum measurements. A relevant remark for later is that ultimately, the observer's memory is  itself a quantum system that becomes entangled with the system measured --- at least from the perspective of an external agent.  

\paragraph{Quantum examples.}
Two examples for continuous and discrete measurements are summarized in Figure~\ref{fig:measurement}. The simplest case is when both the system to measured and the pointer are single qubits (Figure~\ref{fig:discrete-cnot}). An entangling CNOT gate\footnote{An interaction Hamiltonian that implements this gate in a suitable amount of time (e.g. $t=\pi$) is for example $H_\text{int} = \frac 1 4 Z'_S \otimes X'_M$, where $Z'_S$ and $X'_M$ are shifted $Z$ and $X$ operators, $Z'_S = Z_S - \id_S$,  $X'_M = X_M - 3 \id_M$.} between $S$ and $M$ implements a strong measurement of $S$ in the $Z$ basis: 
\begin{align}
   (\alpha \ket0_S + \beta \ket1_S) \otimes \ket0_M  \quad  \underrightarrow{\quad V= \operatorname{\textsc{cnot}} \quad } \quad  \alpha \ket0_S\ket0_M + \beta \ket1_S\ket1_M 
\end{align}
A familiar example in continuous systems is the position measurement of a particle, which entangles the particle and pointer in the position basis. This is achieved for example by setting $\hat A_S = \hat X_S$ and $\hat B_M = \hat P_M$, and initializing the pointer to a well-localized state (like a Gaussian wave\footnote{This measurement can be made sharper or weaker by tuning the parameters of the initial wave function of the pointer and the interaction time.} ). The measurement process results in the physical evolution
\begin{align*} 
    \left(\int dx \ \phi(x)\ \ket x_S\right) \otimes  \left(\int dx'\ \psi_0(x')\ \ket {x'}_M\right)  
     \quad  \underrightarrow{\quad V \quad } \quad 
    \int dx  \int d x'\  \phi(x)\  \psi_0(x'-t g x )\  \ket x_S \ket{x'}_M.
\end{align*}

\paragraph{Emergence of the post-measurement state.} {In quantum theory, the post-measurement state of a system $S$ emerges from the physical picture by moving the Heisenberg cut one level up and thinking of a projective measurement on the agent's memory  $M$ after the physical entangling operation between $S$ and $M$. For example, the (unnormalized) post-measurement state  when obtaining outcome $1$ on a $Z$ measurement of a qubit in state $\ket\psi_S$ can be expressed as }
\begin{align*}
    \tr_M \left[ \underbrace{\green{\id_S \otimes \proj11_M}}_{\green{\text{measuring memory}}} \ \underbrace{\blue{\operatorname{CNOT}} \  (\ \proj \psi\psi_S \  \blue{\otimes\  \proj00_M}) \ \blue{\operatorname{CNOT}}}_{\blue{\text{entangling $S$ and memory}}}   \right] =  \underbrace{\green{\proj11_S}}_{\green{\text{measuring $S$}}} \proj\psi\psi_S.
\end{align*}
{For details on the general case we refer to Appendix~\ref{appendix:quantum-analogues}. }

\paragraph{Measurements in the toy theory: discrete example.}
{We can now try to bring over the same concept of physical measurement to the toy theory.} Let us consider two separate toy bit systems: the first one is the system we measure, the second is the memory where the outcome is registered. 
We can proceed analogously to the quantum case, as there is an allowed transformation corresponding to the CNOT gate in the original toy theory~\cite{Spekkens_2005}. In the notation of grid diagrams it permutes the ontic states as
\begin{align*}
\operatorname{CNOT}= \toyCNOT
\end{align*}
Analogously to the quantum CNOT gate, the toy CNOT transformation  correlates two toy-bits, for example
\begin{align*}
   \ket+\ket0 \  \sim & \toys{+0} \xrightarrow{CNOT}  \toys{bell} \sim  \ket0\ket0+\ket1\ket1 \  , \\  
   \ket1\ket0 \  \sim  &\toys{10} \xrightarrow{CNOT} \toys{11}  \sim \ket1\ket1.  
\end{align*}
Once again, the post-measurement state of the system emerges from conditioning on a projective measurement on the memory, and then finding the reduced state.

\paragraph{Measurements in the toy theory: general case.} For the general case of arbitrary continuous or discrete dimensions, we can also find global transformations on the system and pointer that implement any valid measurements (Theorem~\ref{thm:coherentcopyobservable}). The transformation is essentially  analogous to the von Neumann measurement procedure, correlating the two toy subsystems in a way that reflects the properties of the observable, and recovers the post-measurement state of the system measured when we condition on the outcome. There are a few extra constraints (for example, on how the dimensions of system and pointer should match), but the main qualitative difference to quantum measurements comes from the fact that toy observables are already restricted at the abstract level, as we saw in the introduction. See Appendix~\ref{appendix:measproofs} for examples in continuous dimensions (like a position measurement) and detailed derivations of the formal results.

\section{Acting: restrictions on agents' choices} 
\label{sec:acting}

\paragraph{Choices start with measurements.} If we model agents' actions as physical processes, we see that they can always be decomposed as a measurement followed by a conditional transformation. The process of deciding which of a series of actions to take is ultimately a measurement --- of one's memory, of a randomness generated, or any relevant external systems. We look at the weather to decide what to wear, consult our agenda to decide on appointments, and even when making random decisions we can model our source of randomness as an explicit physical system. Taking this to the extreme, when we try to implement a statement like ``a system $S$ can be prepared in one of many states $\{\psi_k\}_k$'', whoever chooses the state $k$ does it  by measuring another system, obtain an outcome $k$, and then apply a physical transformation on their own memory and $S$ that prepares the state. We will see that in Spekkens' toy theory, physical agents are dramatically restricted in this action. Indeed, only agents outside the toy theory can prepare non-orthogonal states. In other words, toys can't play.

\paragraph{Quantum conditional preparation scenarios.} First consider a simple quantum  measure-and-prepare scenario: Alice measures a state $\ket{\phi}_R = \sum_k \alpha_k \ket{a_k}_R$  and, depending on her observed outcome $k$, prepares another system $S$  in state $\ket{\eta_k}_S$. 
This procedure can be described from the outside as a global two-step unitary process (Figure~\ref{fig:preparation}). First,  Alice's measurement is modelled by a unitary $V$ that couples $R$ to Alice's memory $A$; This is followed  by another unitary $U$, which implements the conditional state preparation, correlating $S$ with $A$. In the case of a strong measurement ($\inprod{\psi_k}{\psi_\ell}_A = \delta_{k\ell}$),  the overall transformation  acts as
\begin{align*}
    \left(\sum_k \alpha_k \ket {a_k}_R\right) \otimes \ket0_A \otimes \ket0_S \  \underrightarrow{\quad V \quad } \ \left(\sum_k \alpha_k \ket {a_k}_R \otimes \ket {\psi_k}_A \right)\otimes \ket0_S \ \underrightarrow{\quad U \quad }\  \sum_k \alpha_k  \ket {a_k}_R \otimes \ket{\psi_k}_A \otimes \ket{\eta_k}_S. 
\end{align*} 
After a strong quantum measurement ($\inprod{\psi_k}{\psi_\ell}_A = \delta_{k\ell}$), Alice is not restricted in the states $\{\ket{\eta_k}_S\}_S$ that she can conditionally prepare. In particular, she can conditionally prepare non-orthogonal states: for example, she can prepare $\ket0_S$ if she observes outcome $0$, and $\ket+_S$ if the outcome is 1,
\begin{align*}
    (\alpha \ket0_R + \beta \ket1_R)  \ket0_A  \ket0_S \  \underrightarrow{\quad V \quad } \ (\alpha \ket0_R \ket0_A  + \beta \ket1_R \ket1_A ) \ket 0_S \  \underrightarrow{\quad U \quad } \ \alpha \ket0_R \ket0_A \ket0_S + \beta \ket1_R \ket1_A \ket +_S.
\end{align*}
In contrast, the toy theory imposes unexpected constraints, and this kind of conditional preparation is not allowed.

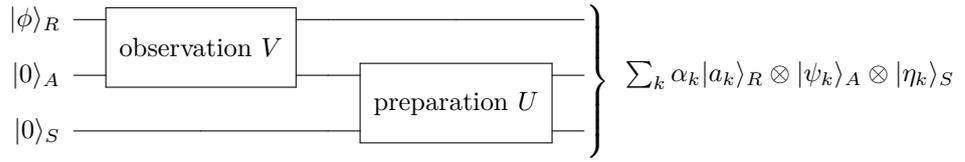
\begin{figure}[t]
\centering
\begin{align*}
        \Qcircuit @C=1.2em @R=1.2em {
        \lstick{\ket{\phi}_R} & \multigate{1}{\text{observation } V} & \qw & \qw & \qw \\
    \lstick{\ket{0}_A} & \ghost{\text{observation } V} & \qw & \multigate{1}{\text{preparation } U} & \qw & \hspace{13em} \sum_k \alpha_k  \ket {a_k}_R \otimes \ket{\psi_k}_A \otimes \ket{\eta_k}_S \\
        \lstick{\ket{0}_S} & \qw & \qw & \ghost{\text{preparation } U} & \qw
        \gategroup{1}{5}{3}{4}{1em}{\}} \\
        }
        \end{align*}
\caption{{\bf Conditional preparation.}  Alice first observes the system $R$, writing down the outcome to her memory, and then initializes the state of the system $S$ depending on her outcome.}
\label{fig:preparation}
\end{figure}

\paragraph{Forbidden conditional preparations in the toy theory.} Conditional preparations of non-orthogonal states are not allowed in the {toy} theory. 
The problem is not in Alice's measurement (which we've seen are very similar to quantum measurements), but in the second step, $U$, where Alice performs the conditional preparation of non-orthogonal states. 
This transformation would have to act on the joint state of $A$ and $S$ analogously to 
\begin{align*}
     U: \quad \ket0_A \ket0_S &\to  \ket0_A \ket0_S, \\ 
    \ket1_A \ket0_S &\to  \ket1_A \ket+_S, 
\end{align*} 
which in the toy theory would look like 
\begin{align*}
\toys{00}  \xrightarrow{U}  \toys{00}, \quad \toys{10}  \xrightarrow{U}  \toys{1+}
\end{align*}
There are no allowed transformations that implement this action in the toy theory, even when we consider transformations on a larger system, which are irreversible at this scale (\changemarker{Corollary}~\ref{lemma:no_conditional_transf_example}). 
The generalization of this example is a dramatic restriction on agents' actions;  see Theorem~\ref{thm:condtrafo} for the formal version of this  result. \changemarker{This theorem applies to arbitrary system dimension and arbitrary amount of systems, that can later be traced out.}

\begin{theorem}[Restrictions on conditional action of agents in the Spekkens' toy theory]\label{thm:conditional_action_informal}
In Spekkens toy theory, if an agent measures a system $R$, obtaining outcome $k$, and prepares a second system $S$ in one of several states $\{\psi_k\}_k$  depending on the outcome $k$, then any two of these states $(\psi_k, \psi_\ell)$ must be either identical or orthogonal. The number of identical states of each type is the same.
\end{theorem}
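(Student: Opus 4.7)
My plan is to exploit the characterization of toy transformations as ontic-state permutations that send valid epistemic states to valid epistemic states, and to propagate this constraint from the pre-preparation state through the joint image down to the reduced state on $S$. Let $\alpha_k$ denote the pure memory state that records Alice's outcome $k$; these are pairwise disjoint pure states that partition the ontic state space of $A$. The conditional preparation is implemented by a single toy transformation $U$ on $A\otimes S$ (absorbing any ancilla into $S$), characterised by $U(\alpha_k \otimes 0_S) = \alpha_k \otimes \psi_k$ for every outcome $k$, where $0_S$ is the known initial state of $S$.

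The first key step is to apply $U$ to coarse-grained preimages. For any pair of outcomes $(k,\ell)$, the union $\alpha_k \cup \alpha_\ell$ is a valid epistemic state on $A$ (one bit less information than pure, analogous to $\toy0\vee\toy1$), so $(\alpha_k \cup \alpha_\ell)\otimes 0_S$ is valid on $AS$. Since $U$ must preserve validity, the image
\begin{align*}
U\bigl((\alpha_k \cup \alpha_\ell) \otimes 0_S\bigr) = (\alpha_k \otimes \psi_k) \cup (\alpha_\ell \otimes \psi_\ell)
\end{align*}
is a valid epistemic state on $AS$. Reducing to $S$ (which preserves validity in Spekkens' toy theory), I conclude that $\psi_k \cup \psi_\ell$ is a valid epistemic state on $S$.

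The first part of the theorem then follows from a pure cardinality argument. For $S$ composed of $N'$ toy bits, valid epistemic state sizes are $2^{2N'-J}$ for $J \in \{0,1,\ldots,N'\}$, i.e., the powers of two in $\{2^{N'},\ldots,2^{2N'}\}$. Since each $\psi_k$ is pure with $|\psi_k|=2^{N'}$, the union satisfies $|\psi_k \cup \psi_\ell| \in [2^{N'}, 2^{N'+1}]$, and the only valid sizes in this range correspond to $|\psi_k \cap \psi_\ell| = 2^{N'}$ (so $\psi_k=\psi_\ell$) or $|\psi_k \cap \psi_\ell| = 0$. The latter is precisely ontic disjointness, which is the toy-theoretic notion of orthogonality (perfect distinguishability by a single measurement).

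For the second part (equal multiplicities), I would use the affine structure of valid epistemic states. By bijectivity of $U$, the full image $E = \bigcup_k \alpha_k \otimes \psi_k$ is a valid epistemic state whose cardinality equals that of $\id_A \otimes 0_S$. Grouping outcomes by their prepared state, the slice of $E$ above any fixed distinct $\psi^{(i)}$ projects on $A$ to $\bigcup_{k : \psi_k=\psi^{(i)}} \alpha_k$, which must itself be a valid epistemic state on $A$; valid coarse-grainings of a single measurement basis all have equal-sized pieces (the isotropic-subspace structure of the toy theory forces knowledge-balanced partitions), so each distinct $\psi^{(i)}$ is prepared by the same number of outcomes. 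The main obstacle I anticipate is the rigorous justification of this uniformity step, since it goes beyond cardinality: the cleanest route is to pass to the stabilizer/symplectic language of~\cite{Pusey_2012}, where $E$ becomes a coset of an isotropic affine subspace of the joint symplectic space and the claim reduces to a statement about the rank of the projection onto the $A$-block. For the main text, it may help to first exhibit a small forbidden configuration---e.g., three outcomes preparing $\toy0$ and one outcome preparing $\toy1$ on a single output toy bit---to show concretely how the resulting joint state fails the knowledge balance principle, deferring the general argument to the appendix alongside Theorem~\ref{thm:condtrafo}.
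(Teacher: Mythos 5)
Your route is genuinely different from the paper's. The formal proof (Theorem~\ref{thm:condtrafo}) works in the symplectic formalism and rests on one observation: the known-variable space of the prepared marginal, $V_T^f = R_T(S_{ST}^{-1})^T(V_S^i \oplus V_T^i)$, cannot depend on the measurement outcome, so all prepared states are cosets of one and the same subspace $(V_T^f)^\perp$ --- hence identical or disjoint --- and the multiplicities are the (equal-sized) cosets of the kernel of the linear map from source valuations to target valuations. You instead argue combinatorially with ontic supports and cardinalities. Your argument is clean and correct for the case it actually covers (elementary systems of dimension $d=2$, pure prepared states, no traced-out ancillas), but it does not reach the theorem as the paper intends it, namely ``arbitrary system dimension and arbitrary amount of systems, that can later be traced out.''

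The concrete gaps are these. (i) The hinge of your argument is that $\alpha_k \cup \alpha_\ell$ is a valid epistemic state. For $d=2$ this holds, because the union of two cosets of $V_\pi^\perp$ over $\mathbb{Z}_2$ is itself a coset of $V_\pi^\perp \oplus \langle \vec{u}_k - \vec{u}_\ell\rangle$; but for $d>2$ or continuous systems the union of two out of $d$ outcomes of a measurement is not a coset (e.g.\ ``$q\in\{1,2\}$'' for a $d=3$ system violates classical complementarity), so there is no valid coarse-grained preimage to push through $U$. (ii) Your cardinality count needs each $\psi_k$ to be pure: once mixed preparations are allowed (or ancillas are traced out) one can have $\psi_k \subsetneq \psi_\ell$ with $\psi_k \cup \psi_\ell = \psi_\ell$ perfectly valid, which size-counting does not exclude, whereas the paper's coset argument does. (iii) ``Identical or disjoint'' on the extended output does not survive marginalization over ancillas: the disjoint pure joint states $\{1,2\}\times\{1,2\}$ and $\{3,4\}\times\{1,3\}$ reduce on the second factor to $\{1,2\}$ and $\{1,3\}$, which are neither identical nor orthogonal. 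So the parenthetical ``absorbing any ancilla into $S$'' hides a real step; the paper handles it by noting that the partial trace only removes vectors from the common $V_T^f$, preserving the shared-coset structure that your cardinality certificate discards. (iv) You correctly flag the equal-multiplicities step as incomplete; the kernel-coset argument in the paper is exactly the rank-of-the-projection statement you anticipate. Repairing (i)--(iii) in general dimension essentially forces you to reintroduce the coset/known-variable structure, at which point you have rederived the paper's proof; as a pedagogical shortcut for toy bits, however, your version is a nice complement.
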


\section{Forgetting: valid expressions of ignorance}
\label{sec:forgetting}
\paragraph{Forgetting with explicit quantum memories.}
{The physical process of forgetting information originally stored in a memory can be modelled through an interaction between the memory and its environment; this may lead to the loss of correlation between the current memory content and the system it referred to.  For example, suppose that you write your credit card number on a notepad --- the notepad is correlated with your credit card. If later the notepad is smudged, erased or burned (all physical interactions with an environment), it will no longer be perfectly correlated with the credit card. The same can be said of information stored in a (quantum or classical) hard drive which is subject to noise and decoherence originating from the interaction with its environment, as expressed by the data processing inequality. In a simple example (Figure~\ref{fig:forgetting}), suppose that the agent measures a bipartite system $S = S_1 \otimes S_2$, storing the outcome in their bipartite memory $M=M_1 \otimes M_2$, through a standard von Neumann scheme which from an outside perspective is modelled like a coherent copy operation, entangling $S$ and $M$,}
\begin{align*}
    \left(\sum_{k,j} \alpha_{kj}\ \ket{\phi_k}_{S_1} \ket{\psi_j}_{S_2}  \right) \otimes \ket{0}_{M_1} \ket{0}_{M_2} 
    &\longrightarrow \sum_{k,j} \alpha_{kj}  \ket{\phi_k}_{S_1} \ket{\psi_j}_{S_2} \otimes \ket{k}_{M_1} \ket{j}_{M_2} =: \ket{\Psi}_{SM}.
\end{align*}
{Now let the memory interact with an  environment system $E$. An example is complete thermalization of the second register, in which the environment is initially in a thermal state $\tau_E$ and the interaction swaps the state of the second register  $M_2$ with the environment,}
\begin{align*}
    U_{ME}:  \id_{M_1} \otimes \operatorname{SWAP}_{M_2 E}
\end{align*}
{After the global state evolves under $\id_S \otimes U_{ME}$, the final (mixed) state of $S$ and the memory is }
\begin{align*}
    \rho_{SM} &= \tr_E([\id_S \otimes U_{ME}]\  [\proj{\Psi}{\Psi}_{SM} \otimes \tau_E ]\ [\id_S \otimes U^\dagger_{ME}])\\
    &=
    \sum_{k, k', j} \alpha_{kj}\ \alpha_{k'j}^* \ \proj{\phi_k}{\phi_{k'}}_{S_1} \otimes \proj{\psi_j}{\psi_j}_{S_2} \otimes \proj{k}{k'}_{M_1} \otimes \tau_{M_2},
\end{align*}
{where all the information about $S_2$ is lost to the environment, as can be seen from the mutual information}
\begin{align*}
     I(S_2:M)_\rho &= 0,  \quad I(S_1:M)_\rho = I(S_1:M)_\Psi. 
\end{align*}

\paragraph{{Abstract uncertainty in quantum theory.}}
{In addition to this physical process of forgetting, in quantum theory mixed states $\rho_S = \sum_i p_i \rho_i$ can be used to describe a  state of knowledge  where an agent has abstract uncertainty about which of the states $\{\rho_i\}_i$  describes $S$, and their best Bayesian guess is described by a probability distribution $\{p_i\}_i$.  In quantum theory the probabilities and states in these abstract mixtures can be arbitrary, and one can always find a physical forgetting process on an explicit memory that connects the physical and abstract representations. 
In  the toy theory (as it stands at the time of writing) descriptions of uncertainty are severely limited, perhaps because there are no natural physical sources for this uncertainty in the toy world.}

\begin{figure}[t]
    \centering
    \begin{align*}
        \Qcircuit @C=1.4em @R=1.4em {
        & &   \mbox{\qquad Memory update} & & \\
        \lstick{R_1} & \qw & \ctrl{2} & \qw & \qw & \qw & \qw  \\
        \lstick{R_2} & \qw & \qw & \ctrl{2} & \qw &\qw & \qw \\
        \lstick{M_1} & \qw & \targ & \qw & \qw & \qw & \qw \\
        \lstick{M_2} & \qw & \qw & \targ & \qw & \qswap & \qw \\
        \lstick{E} & \qw & \qw & \qw & \qw & \qswap \qwx & \qw & \rstick{\text{Interaction memory/environment}} \gategroup{2}{3}{5}{4}{1.1em}{--} \gategroup{5}{5}{6}{7}{1.1em}{--} \\
        }
        \end{align*}
    \caption{{\bf Losing information as a quantum evolution.} One can imagine a process where the information stored in an agent's memory is partially lost to the environment. Here, the agent first performs a memory update, writing down the outcomes of measurements on systems $S_1$ and $S_@$ in their memory systems $M_1$ and $M_2$. Due to an interaction with the environment $E$ (for example a complete thermalization represented here as a SWAP gate), the information contained in memory qubit $M_2$ is exchanged with the environment, so that correlations between the memory and $S_2$ are lost to the agent through this process.}
    \label{fig:forgetting}
\end{figure}
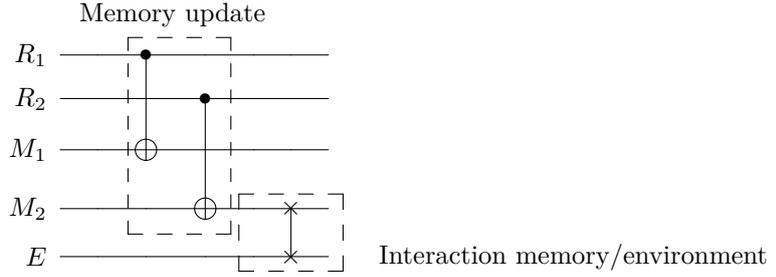

\paragraph{Forgetting with an implicit toy memory.} In the toy theory,  not all states can be mixed in a way such that the resulting state is allowed by the epirestricted picture. For example, while it is possible to mix the states
\begin{align*}
    \toys{00} \bigvee \toys{10}  = \toys{mixedleft},
\end{align*}
we are not allowed to mix states as 
\begin{align*}
    \toys{00} \bigvee \toys{10}  \bigvee \toys{11} 
    = \twocustomtoys{rrrr}{rrrr}{rr00}{rr00} !
\end{align*}
This means that for certain sets of states we are not allowed to forget which states we had initially. Moreover, even if we are able to 'forget', we only forget each state with an equal probability: the epistemic states always constitute uniform probability distributions over the corresponding set of ontic states.  
One can argue that only uniform distributions over the states we choose to forget (when we can) are physical, as it is not clear how one would assign non-uniform priors to the probabilities of forgetting for different states. For example, we cannot model a setting where we forget that the system is in the state $\{1,2\}$ with probability $\frac{1}{3}$, and in the state $\{3, 4\}$ with probability $\frac{2}{3}$ --- we are only allowed to forget both states with an equal probability of $\frac{1}{2}$.

\paragraph{{Forgetting as a physical process with explicit toy memories.}}{ We can now see the equivalent of the quantum example where we explored the physical process of forgetting. Suppose that $S$ starts in state }
\begin{align*}
   \begin{tabular}{c c }
         $S_1$ & \toys{1+} 
        \vspace{2mm}  \\ 
         & $S_2$
    \end{tabular}   = {\toy1}_{S_1}\otimes {\toy+}_{S_2}.
\end{align*}
{and the agent has a memory $M=M_1\otimes M_2$ originally in state }
\begin{align*}
   \begin{tabular}{c c }
         $M_1$ & \toys{00} 
        \vspace{2mm}  \\ 
         & $M_2$
    \end{tabular}   = {\toy0}_{M_1}\otimes {\toy0}_{M_2}.
\end{align*}
{The agent measures the two toy bits of $S$ in the $Z$ basis, storing the outcome of the first measurement in  $M_1$ and the second in $M_2$. From the outside we can model this measurement as an  entangling operation $\operatorname{CNOT}_{S_1 M_1}\otimes \operatorname{CNOT}_{S_2M_2}$ resulting in the global state}
\begin{align*}
    \begin{tabular}{c c }
         $S_1$ & \toys{11} 
        \vspace{2mm}  \\ 
         & $M_1$
    \end{tabular}  
    \otimes  \
   \begin{tabular}{c c c c}
         $S_2$ & \toys{bell}
         \vspace{2mm}  \\ 
         & $M_2$
    \end{tabular}  .
\end{align*}
{Now we simulate the decoherence process in memory $M_2$, through swapping with an environment in a fully mixed state, $\scalebox{0.7}{\toy{mix}}_E$. In the quantum analogy, this corresponds to full thermalization of $M_2$ with an environment  at infinite temperature or with a degenerate Hamiltonian. The grid diagram of the joint state of $S_2 \otimes M_2 \otimes E$ would span three dimensions, but a convenient visualization is} 
\begin{align*}
    \begin{tabular}{c c c c}
         $S_2$ & \toys{bell}
         \vspace{2mm}  \\ 
         & $E$
    \end{tabular} 
    \otimes {\toy{mix}}_{M_2},
\end{align*}
{from which it is easier to see that the reduced state of $S_2 \otimes M_2$  is }
\begin{align*}
    {\toy{mix}}_{S_2} \otimes {\toy{mix}}_{M_2}
    = \ 
    \begin{tabular}{c c c c}
         $S_2$ & \toys{mix}
         \vspace{2mm}  \\ 
         & $M_2$
    \end{tabular} .
\end{align*}
{This gives us the joint state of $S$ and $M$}  
\begin{align*}
    \begin{tabular}{c c }
         $S_1$ & \toys{11} 
        \vspace{2mm}  \\ 
         & $M_1$
    \end{tabular}  
    \otimes  \
   \begin{tabular}{c c c c}
         $S_2$ & \toys{mix}
         \vspace{2mm}  \\ 
         & $M_2$
    \end{tabular}  .
\end{align*}
{In this example, measuring their memory $M$ could give the agent some information about $S_1$ but not about $S_2$ --- they have forgotten about $S_2$.  This physical picture  helps us understand why forgetting in arbitrary ways isn't allowed in the toy theory: to model partial ways of forgetting, we can vary the interaction $V_{ME}$ between memory and the environment, and the initial state of the environment. Because both transformations and epistemic states are restricted in the toy theory, the final states of $S\otimes M$ are also restricted in form. The information an agent still remembers about $S$ after interaction with an environment can be again modelled through the reduced states of $S$  conditioned on a measurement on the memory, and these states are, as we have seen, restricted. }

\paragraph{Interpretation.}
{Another intuition for why forgetting always results in a fully mixed state lies in how we understand knowledge in the toy theory~\cite{Lostaglio2021}. The principle of knowledge balance imposes that we either possess information about the individual states of the systems, or information about correlations between them. After a physical measurement, the memory and the measured system are perfectly correlated, so from the outside we have no information about the reduced state of the system; erasing the memory effectively erases the information about correlations, leaving us maximally ignorant about the state of the measured system.}

\section{Reasoning: making inferences about other agents' experiments}
\label{sec:reasoning} 
In this section, we formalize the conditions under which agents can reason about measurement outcomes -- their own and each other's. First, we look at what it means to get a certain outcome or predict it with certainty. Then, we apply this result to a particular subset of statements agents can make, namely, inferential statements. Finally, we demonstrate how these rules are applied, using examples of Bell scenario, Wigner's friend, and Frauchiger-Renner thought experiment in the toy theory, and discuss the differences from their quantum counterparts.

\paragraph{Deterministic predictions.} In the following thought experiments, agents are able to reason about each other's outcomes --- for deterministic statements. Let us formalize what it means to measure something with certainty or to predict something with certainty in the toy theory. 
In Appendix~\ref{appendix:predproofs} we prove \cref{para:ingr:cond} which gives two conditions for an outcome of a mesurement to happen with certainty. The first condition certifies that an outcome happens with certainty, while the second condition ensures that this outcome is the desired one.

\paragraph{Conditions for making inferences.}
In thought experiments we often make inferences of the type ``A = 1 $\implies$ B = 1'', which predict other agent's measurement outcome based on our observation. How do we formalize the certainty of such an inference in the toy theory? The statement ``A = 1 $\implies$ B = 1'' corresponds to the conditional probability $P(B = 1| A = 1) = 1$. 
\Cref{pred_cert} gives two conditions for when we can make valid inferences. 
Intuitively, the first condition ensures that there is an outcome of the measurement of observable $B$ that can be inferred if observable $A = 1$ is known. The second condition ensures that the outcome that can be inferred is indeed the outcome $B = 1$.

\subsection{Example: a Bell scenario}\label{sec:bell_scenario}

\begin{figure}[t]
\centering
       \includegraphics[scale=0.25]{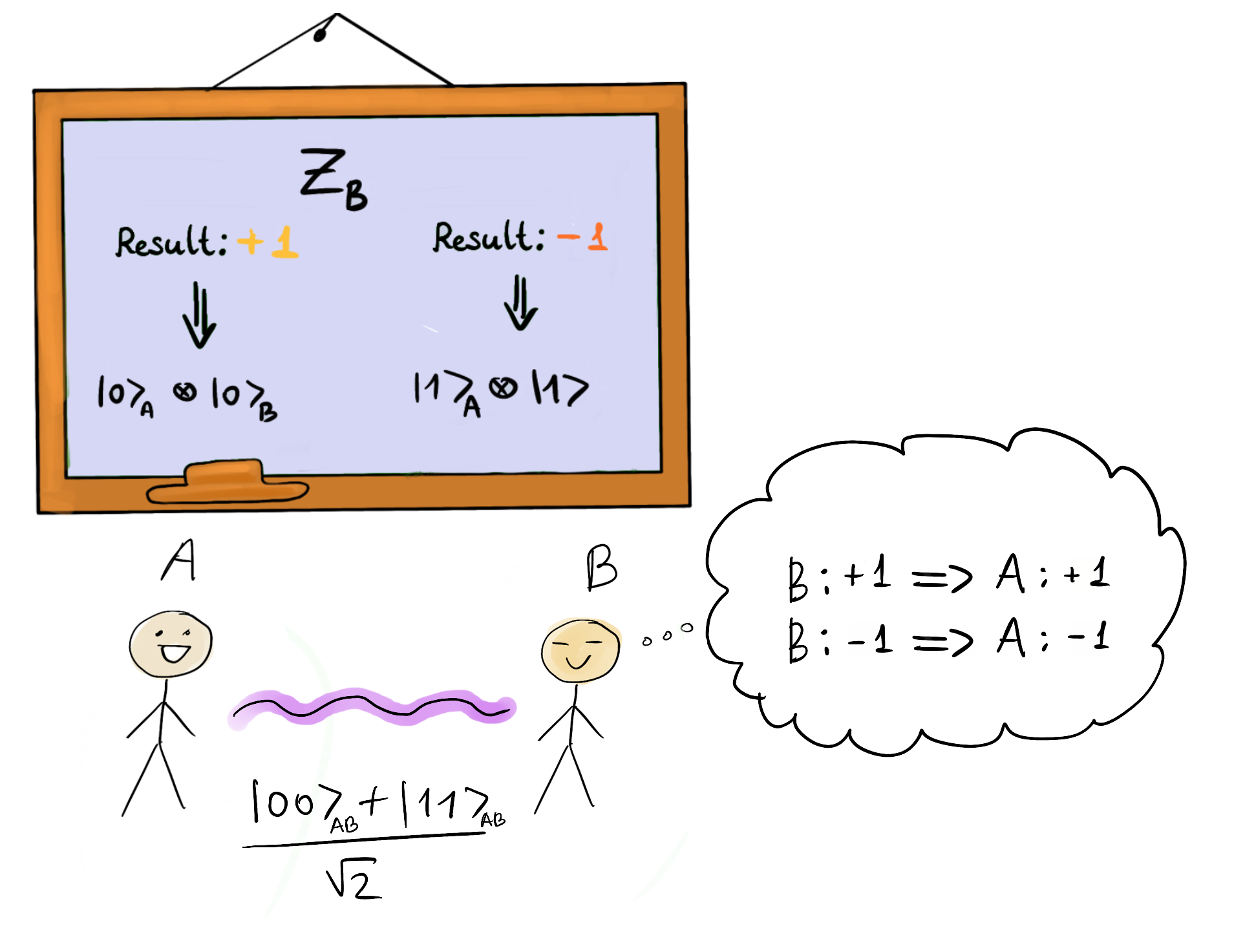}
\caption{{\bf Reasoning in a quantum Bell experiment.} Alice and Bob share a Bell state, and each perform a $Z$ basis measurement on their qubit.  Given his outcome, Bob reasons about Alice's outcome and can predict it with certainty.}
  \label{fig:bell-qm}
\end{figure}

\paragraph{Quantum Bell setting.} In quantum theory, if Alice and Bob share a Bell state and measure their individual qubits in the computational basis (corresponding to the observables $Z_A$ and $Z_B$), they can make inferences about each other's outcomes (Figure~\ref{fig:bell-toy}). For example, if the shared state is
\begin{gather*}
    \frac{1}{\sqrt{2}} \left( \ket{00}_{AB} + \ket{11}_{AB} \right),
\end{gather*}
and Bob obtains outcome $b=0$, he can  update his description of the global post-measurement state to $\ket{00}_{AB}$ and infer  with certainty that Alice must obtain outcome $a=0$; analogously for $b=1$.

\begin{figure}
\centering
        \includegraphics[scale=0.25]{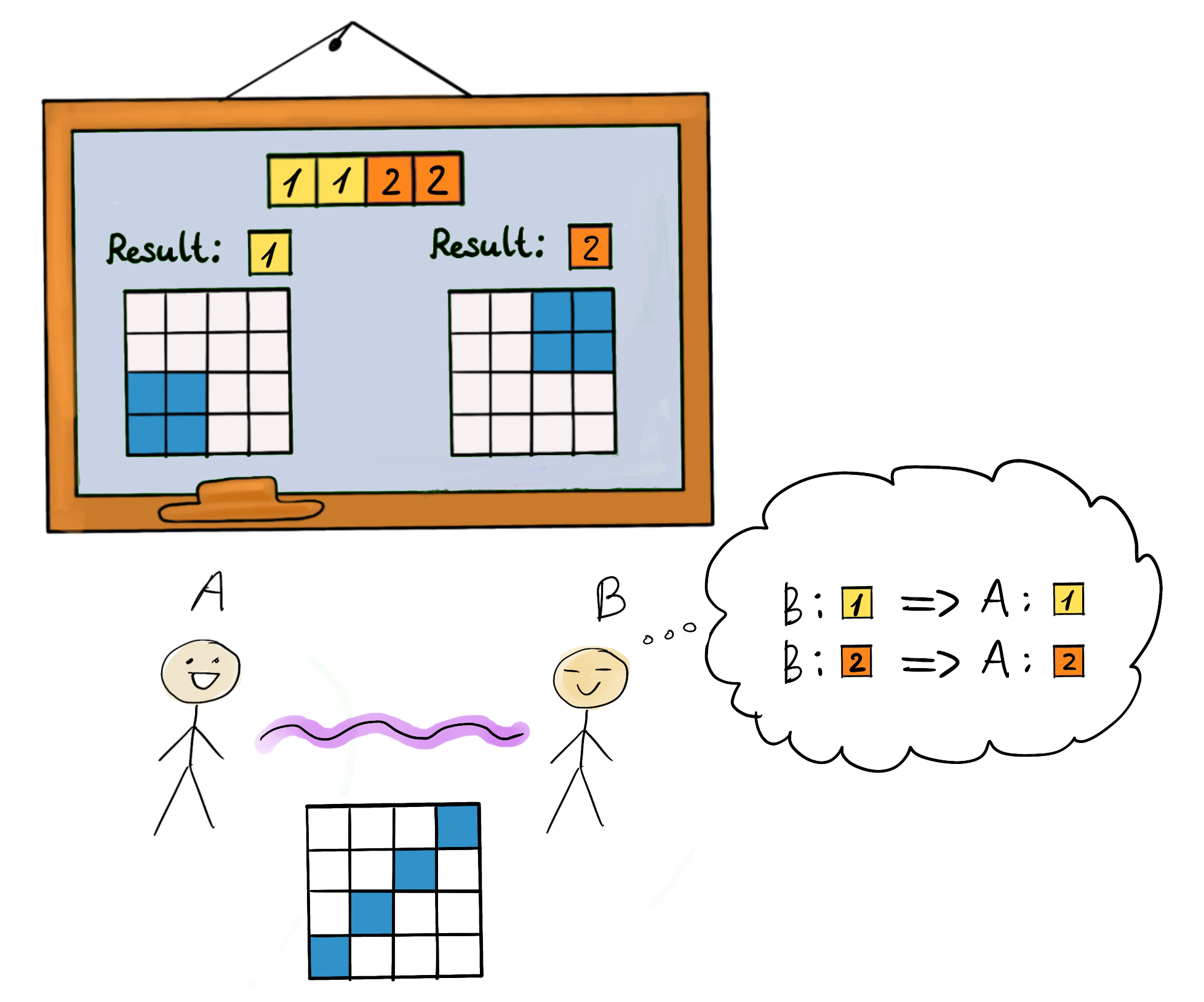}
        \caption{{\bf Reasoning in a toy Bell experiment.}
        Analogously to the quantum Bell scenario, Alice and Bob share an entangled bipartite system and perform the toy version of local $Z$ measurements. Based on his outcome, Bob can predict Alice's outcome with certainty.}
        \label{fig:bell-toy}
\end{figure}

\paragraph{Toy Bell scenario.} This scenario can be reenacted in the toy theory, with similar results: Bob can make a deterministic prediction about Alice's outcome (Figure~\ref{fig:bell-toy}). Alice and Bob share the entangled state analogous to a Bell state,
\begin{align*}
\toys{bell}
\end{align*}
Bob measures his system in the  toy-$Z$ basis,  $\mathcal M_Z = \toymeasurementz$. If he obtains outcome $B= 0 = \lonecell{orange}$, he can  update his description of the shared state  to 
\begin{align*}
    \toys{00}. 
\end{align*}
From this description, Bob can infer that if Alice now measures her system in the same basis, she will obtain outcome $A= 0 = \lonecell{orange}$  with certainty, that is, ``$B = 0 \implies A = 0$''  Analogously, he can conclude that ``$B = 1 \implies A = 1$''.  A formal proof can be found in Appendix~\ref{appendix:predproofs}.

\subsection{Example of meta measurements: Wigner's friend}

\paragraph{Wigner's quantum friend.} Wigner's friend experiment was first proposed by Wigner~\cite{Wigner1961}. The setting involves a quantum system $R$ and an observer $A$ (Alice) performing a measurement on this system in a closed laboratory, as well as an outside observer Wigner. For Alice in the lab, the outcome of the experiment is recorded in the device she is using to measure the system $R$, for example, as a position of a pointer (or an entry in her memory). However, Wigner does not have any information about Alice getting a particular outcome, and describes the evolution of the closed lab as a unitary (reversible) process, and assigns an entangled state to $R$ and $A$. In the language of quantum mechanics, if we assume that the pointer is initially in the state $\ket{0}_A$, and the state of the measured system is $\frac{1}{\sqrt{2}}\ket{0}_R+\frac{1}{\sqrt{2}}\ket{1}_R$ this process corresponds to
\begin{equation}
    \left(\frac{1}{\sqrt{2}}\ket{0}_R+\frac{1}{\sqrt{2}}\ket{1}_R\right)\ket{0}_A \rightarrow \frac{1}{\sqrt{2}}\ket{0}_R\ket{0}_A+\frac{1}{\sqrt{2}}\ket{1}_R\ket{1}_A
\end{equation}
Alice and Wigner turn out to have descriptions of the same setting which are vastly different from each other. We will not discuss numerous conceptual implications of the original thought experiment here -- a review can be found in~\cite{Nurgalieva2021}. However, we would still like to see how we can model this setting in the toy theory (here we will do so in the original epirestricted picture).

\paragraph{Wigner's toy friend.}
In the toy theory, we consider again two subsystems: the measured system $R$ and Alice's memory register $A$. The individual states and the joint state of the systems $R$ and $A$ can be pictured as  
\begin{equation*}
  \begin{tabular}{c c c c}
       $R$ \toy+, &\quad  $A$ \toy0, &\quad $R$ &\toys{+0}  \\ \vspace{2mm} \\
       & & & $A$
  \end{tabular}
\end{equation*}
Alice measures $R$ in the toy-$Z$ basis. She describes her measurement as  $M_Z = \scalebox{0.7}{\toymeasurementz}$ and sees a definite outcome $a$.  Wigner sees Alice's measurement as a reversible transformation (the toy CNOT), and updates his description of the joint state of $R$ and $A$ as 
\begin{equation*}
    \begin{tabular}{c c c c}
         $R$ & \toys{+0} & $\xrightarrow{CNOT}$  & \toys{bell}
          \\ \vspace{2mm}  \\ 
         & $A$
    \end{tabular}  
\end{equation*} 
Alice, on the other hand, has the subjective experience of seeing one outcome $a$ and write it to her memory. We can describe the knowledge update from the perspective of the different agents as 
\begin{equation*}
        \begin{tabular}{c c c c m{6cm} }
         $R$ & \toys{+0} & $\xrightarrow{a = 0}$  & \toys{00}  & Alice, seeing outcome $a=0 = \lonecell{orange}$\\ \vspace{2mm}  \\ 
         & $A$ & or \\ 
         & & $\xrightarrow{a = 1}$  & \toys{11}  & Alice, seeing outcome $a=1 = \lonecell{g}$\\
         \vspace{6mm}  \\  & & or \\
         & &  $\xrightarrow{CNOT}$  & \toys{bell} & Wigner, describing Alice's measurement as a reversible CNOT transformation.
    \end{tabular} .
\end{equation*}

\paragraph{Interpretation.} In the framework of the toy theory, the difference between Alice's and Wigner's descriptions has a straightforward interpretation. Thanks to the knowledge balance principle, in the state of the maximal knowledge an agent can have maximal information either about an individual system or  about how these individual systems are correlated. Hence, Alice's and Wigner's epistemic states do not contradict each other, and simply represent two different ways an agent can view a composite system (two states of knowledge about the same ontic state).
In~\cite{Lostaglio2021}, it is shown that the Wigner and Alice's views discrepancy can be reproduced (and interpreted!) in any realistic toy model. Alice's collapsed state can simply be understood as more coarse-grained compared to Wigner's; the correlations of her state are beyond her description level. She can still get away with it, though, as the correlations are not used in later dynamics --- which is not the case for the next scenario we present.

\subsection{Example of multi-agent paradoxes: the Frauchiger-Renner scenario}

\begin{figure}[t]
    \centering
    \includegraphics[scale=0.3]{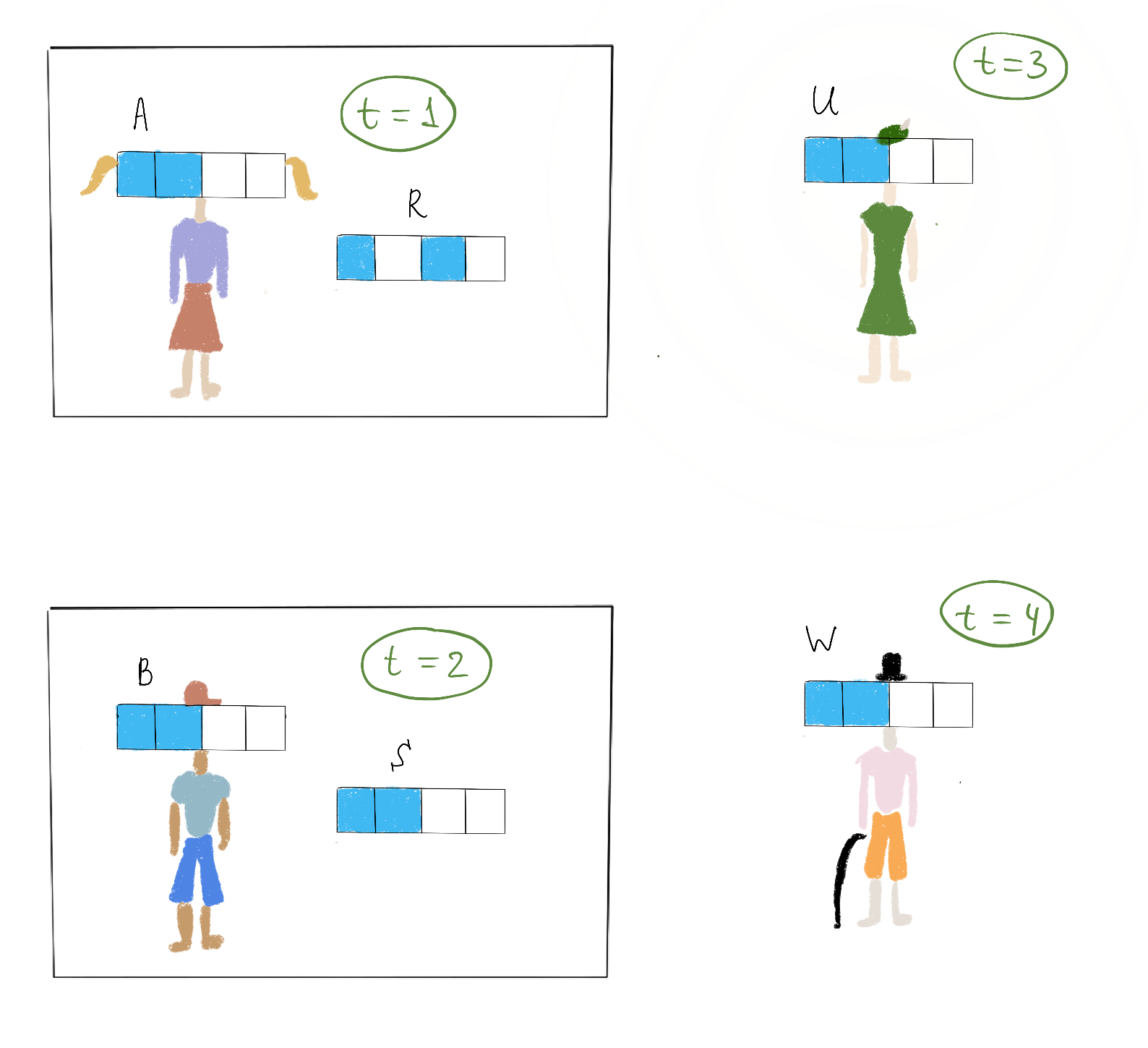}
    \caption{{\bf The Frauchiger-Renner-like setup of agents in the toy model.} The setting includes four agents $A$, $B$, $U$ and $W$, memories of which are all modeled as two-level systems in the toy model, and two additional two-level systems $R$ and $S$ in Alice's and Bob's labs respectively. In the toy version of the FR setup, we are not concerned with the course of the protocol four agents follow. The question we ask is whether there exists a joint state of systems $A$, $B$, $R$ and $S$, and measurements performed on the systems $R$, $S$, $RA$ and $SB$, such that the agents come to the same contradicting chain of statements as in the original quantum setting~\cite{Frauchiger_2018}. A summary of the quantum experiment can be found in Appendix~\ref{appendix:fr-quantum}.  } 
    \label{fig:fr}
\end{figure}

\paragraph{Quantum multi-agent paradox.}
In the Frauchiger-Renner setting~\cite{Frauchiger2018}, four quantum agents (Alice, Bob, Ursula and Wigner)  perform a series of measurements and reasoning steps, reaching a logical contradiction,
\begin{equation}
    (w = \text{ok} \wedge u = \text{ok}) \implies b = 1 \implies a = 1
\implies  w = \text{fail},
\end{equation}
that is, when both Ursula and Wigner obtain outcomes ``ok'' in their measurements, they can reason based on their observation that that Bob predicted that Alice predicted that Wigner would obtain a different outcome, ``fail'', with certainty.  The experimental protocol consists of individual steps that we covered so in this manuscript: two qubits
$R$ and $S$ are initially prepared in an entangled Hardy state \cite{Hardy1993}; Alice measures $R$  and Bob measures $S$; then Ursula measures Alice's lab (including $R$ and Alice's memory $A$), and finally Wigner measures Bob's lab (including $S$ and Bob's memory $B$). The contradiction is found for a specific choice of initial state and measurement bases, which are  described in Appendix~\ref{appendix:fr-quantum}. \footnote{In the original formulation of the experiment, Alice measures $R$ and based on the outcome she makes a conditional preparation of $S$ in one of two non-orthogonal states, which she sends to Bob. The two formulations (conditional preparation or initially entangled state) are equivalent  for the logical analysis of the experiment, as they express the same correlations.}
For a pedagogical discussion of the original paradox and the assumptions behind it, we refer to our previous work~\cite{NL2018}; for discussions of broader implications for abstract logic and  interpretations of quantum theory see for example~\cite{Fraser2020,Boge2019,Nurgalieva2021,Haddara2022}. The paradox has also been shown to arise in other physical theories, namely box world~\cite{Vilasini_2019}.

\paragraph{Toy multi-agent scenario.}
Our question is whether a similar multi-agent logical paradox can be found in Spekkens' toy theory; we will see that it cannot, partially because of the restrictions in the individual operations like conditional state preparation, and partly because this is an explicitly non-contextual epistemic theory.
Since the toy theory does not allow Alice to perform non-orthognal conditional state preparation of $S$, we follow the version of the experiment where all relevant correlations are encoded in the initial state of $RS$. 
The global system of Alice and Bob's labs is composed of four subsystems: two systems $R$ and $S$ measured by Alice and Bob, and Alice's and Bob's memory registries $A$ and $B$ (Figure~\ref{fig:fr}). The order of measurements follows the original experiment, and the systems can be of an arbitrary dimension $k$. 
We show that in the toy theory, there is no choice of initial state and measurements by the four agents that can lead to a logical contradiction in this experimental scenario. A formal description of the setting and proof can be found in Appendix~\ref{appendix:fr-proof}. 

\begin{restatable}{theorem}{parathm}\label{para:thm}
    There exists no valid epistemic state that can be used to
    reach a logical contradiction in the Frauchiger-Renner setting in the toy theory.
\end{restatable}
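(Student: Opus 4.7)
The plan is to reduce the statement to a combinatorial claim about subsets of the ontic state space of $RSAB$ and to show that this claim has no solution compatible with the validity conditions of the toy theory. Fixing the framework of Figure~\ref{fig:fr}, I will parameterise the problem by systems $R,S,A,B$ of dimension $k$ (with $A$ and $B$ acting as Alice's and Bob's memories), an initial valid epistemic state $E_{RSAB}$, local measurements $\mathcal{M}_R,\mathcal{M}_S$ by Alice and Bob, and meta-measurements $\mathcal{M}_{RA},\mathcal{M}_{SB}$ by Ursula and Wigner, each a partition of the relevant ontic space into valid epistemic cells. The target is to rule out every choice for which all three deterministic implications ``$(w=\mathrm{ok}\wedge u=\mathrm{ok})\Rightarrow b=1$'', ``$b=1\Rightarrow a=1$'' and ``$a=1\Rightarrow w=\mathrm{fail}$'' simultaneously hold.

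The first move is to use \cref{pred_cert} to translate each implication ``$X\Rightarrow Y$'' into the explicit containment that the ontic support of the pre-\,and post-selected state after observing $X$ lies entirely within the cell labelled $Y$ of the relevant measurement partition, and that this cell is itself non-empty. Chaining the three resulting containments yields that any ontic configuration compatible with the antecedent ``$w=\mathrm{ok}\wedge u=\mathrm{ok}$'' must simultaneously lie in Wigner's cells labelled ``ok'' and ``fail''. Since those cells are disjoint by construction, the theorem is equivalent to proving that, for every valid $E_{RSAB}$ and every valid choice of partitions, the set of ontic configurations compatible with ``$w=\mathrm{ok}\wedge u=\mathrm{ok}$'' is empty -- i.e.\ the antecedent of the Frauchiger--Renner chain can never be observed to begin with.

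The technical heart, and the step I expect to be the main obstacle, is establishing this emptiness without enumerating all finite choices. My plan is to combine three structural restrictions coming from the toy theory. First, the knowledge-balance principle forces $E_{RSAB}$ to span exactly $2^{2N-J}$ ontic states for some integer $0\le J\le N$, and each of Ursula's and Wigner's meta-measurement cells must likewise be valid epistemic subsets of their labs; these constraints severely limit the admissible partitions. Second, since Alice measures only $R$ and Bob only $S$, the copy step required for Ursula and Wigner to subsequently learn about $A$ and $B$ must be implemented through a valid toy transformation, so Ursula's cells must be consistent with Alice's partition on $R$ and analogously for Wigner and Bob. Third, combining these refinement conditions with the chain of containments derived in the previous paragraph reduces the problem to a system of constraints on the common ontic support that I expect to be unsatisfiable. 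The intuitive reason is that any joint state supporting all three inferences would play the role of the quantum Hardy state in the toy world, and encoding the requisite correlations epistemically would amount to preparing non-orthogonal conditional states -- which is precisely what Theorem~\ref{thm:conditional_action_informal} forbids. The appendix would close the argument by a short case analysis over the dimension $k$ and the valid partitions of $RA$ and $SB$, following this blueprint.
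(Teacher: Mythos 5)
Your high-level reduction (translate each implication via \cref{pred_cert}, then show the resulting constraints cannot coexist with a genuine paradox) points in the right direction, but there are two genuine gaps. First, the chaining step is not a simple composition of set containments on a single copy of the ontic space. Each implication in the chain is an inference made by a \emph{different} agent, after a \emph{different} measurement update, and in the toy theory those updates disturb the ontic state: the post-measurement epistemic state is $(V_{\pi}\oplus V_{\mathrm{commute}},\vec v')$, not simply the intersection of the prior support with the outcome cell. The paper's proof (Appendix~\ref{appendix:fr-proof}) handles this by extracting, for each link, a subspace condition $V_Y\subset V_{\mathrm{commute},X}\oplus V_X$ and a valuation condition of the form $\vec v_{Y}+\vec v_{X}-\vec v\in\bigl((V_X\oplus V_Y)\cap V_{\mathrm{commute},X}\bigr)^{\perp}$, and only then combines these by adding and subtracting inner-product identities. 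Your set-containment picture skips exactly the step where the quantum version of the argument breaks down, so it cannot be taken for granted here.

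Second, your ``technical heart'' is not the way the constraints actually resolve, and it is not carried out. The paper does \emph{not} show that the support of $(u=\mathrm{ok},w=\mathrm{ok})$ is empty; it allows $P(\mathrm{ok},\mathrm{ok})\neq 0$ and instead derives $\vec v_W^{T}(\vec v_{W,\mathrm{ok}}-\vec v_{W,\mathrm{fail}})=0$ for all $\vec v_W\in V_W$, i.e.\ that Wigner's two outcomes are forced to be the \emph{same} outcome, so no contradiction arises. While your target (emptiness given distinctness) is the contrapositive of this and would in principle suffice, your proposed route to it --- knowledge-balance counting plus an appeal to Theorem~\ref{thm:conditional_action_informal} --- does not close the argument: the FR variant analysed here deliberately encodes all correlations in the initial state precisely because non-orthogonal conditional preparation is forbidden, so that theorem is not invoked anywhere in the actual proof, and a ``short case analysis over the dimension $k$'' is not available since the result covers arbitrary, including continuous, dimensions. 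You would need to replace this expectation with the explicit linear-algebraic bookkeeping of the valuation conditions, which is where all the work lies.
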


\section{Discussion}
\label{sec:conclusions}
\paragraph{Learning, reasoning and forgetting as physical processes in the toy theory.} In this work, we found a way to model the physical evolution of systems and agents'  memories that implement in Spekkens' toy theory the abstract process of measurement and forgetting information, analogously to how quantum measurements and information loss are modelled as explicit quantum evolutions. We found conditions on experimental settings that guarantee that agents can reason with certainty about each other's experiments, including settings where agents can measure each other.

\paragraph{Restrictions on free choice of agents.} One can interpret the impossibility of an arbitrary conditional preparation in the toy theory as a limitation on the free choice of agents. An experimenter cannot decide to prepare an arbitrary valid epistemic state depending on her observations --- she is constrained to an orthogonal set of states. In addition,  agents cannot set arbitrary probability distributions as inputs for future experiments (because such distributions would be encoded in physical systems like  biased coins).  Note that agents can still perform deterministic operations that entangle other systems; they just cannot make a decision about which entangling operation to apply that does not result in uniformly-distributed orthogonal states on those systems.

\paragraph{In the toy theory,  limited knowledge is\dots limited.}
While in classical and quantum theories we can always lose information, in Spekkens' toy theory it is impossible to model many natural expressions of limited knowledge, like not knowing which of two non-orthogonal states a system is in; conditional state preparation of arbitrary states is also forbidden. This shows us that even aspects of logic and information theory that we take for granted and consider independent of the physical theory (like having probabilistic knowledge) are indeed dramatically physical. 

\paragraph{Foils of the toy theory}
To understand to which extent these peculiarities are an artifact of the knowledge balance principle, one could consider possible relaxations of the principle\footnote{{``I think I know what to do. We're gonna have to break a few rules, but if it works, it'll help everybody.'' --- Toy story}} --- for example, imposing it for pure states but allowing all probabilistic mixtures of pure states. One must proceed with caution, as any such relaxation may have unintended consequences for the stability of the theory: 
for example, we considered the relaxation in the original formulation where we only require valid marginals, but not that validity is preserved under subsystem measurement, and tried to come up with a different measurement update that does not require this. However, we found that it is not possible to define such a measurement update and, additionally, found mixed epistemic states that that cannot be written as the mixture of pure states. We leave the investigation of other relaxations of the theory to future work.

\paragraph{Forgetting in other epistemic theories.}
Epistemic models provide insight into how different epistemic restrictions influence the set of transformations and measurements an agent can perform, and how their memory can be modeled. We have already mentioned epistemically restricted Liouville mechanics~\cite{Bartlett_2012}, and here we have taken a look at a particular epistemic limitation of ``knowledge balance principle''. However, one can imagine that the restrictions above can be weakened or modified. One possible direction of the future research then would be modeling memories of agents and their reasoning for an arbitrary relation between information contained in the epistemic and ontic states of the system, and see in which cases conclusions made in quantum mechanics are reproduced. This could lead to a better understanding of which types of epistemology can be admitted by quantum mechanics, and what are essential properties of such epistemic theories.
We leave the investigation of how the striking limitations in the process of forgetting information plays out for other (epistemically restricted) theories as future work.

\paragraph{No multi-agent logical paradox in the Frauchiger-Renner setting.}
We proved that in settings analogous to the Frauchiger-Renner experiment there is no assignment of states and measurements that can lead to a logical paradox in the toy theory. We don't claim that our model of agents is exhaustive; in our analysis, we only capture one degree of freedom of the agent which corresponds to the memory register for the outcome of their measurement, and for this particular model (albeit a minimally reasonable one) the paradox does not come to be.  We conjecture that there is no model that can lead to a paradox, in arbitrary multi-agent settings, because the theory is non-contextual; we will investigate this in future work.

\paragraph{Relation to contextuality.}
Multi-agent logical paradoxes  involve chains (or possibly more general structures) of statements that cannot be simultaneously true in a consistent manner. Failures of noncontextuality  can often be expressed in terms of the inability to consistently assign definite outcome values to a set of measurements \cite{Kochen1967,Spekkens_2005}. 
Examples of paradoxical chains of reasoning in quantum theory~\cite{Frauchiger2018} and box world~\cite{Vilasini_2019} --- two contextual theories --- and the intuition of the impossibility of finding such a chain in Spekkens' toy model, as shown here, suggests the following conjecture: logical multi-agent paradoxes are proofs of contextuality, and all contextual physical theories can model multi-agent logical paradoxes. 
The connection of contextuality to logical contradictions has already been to some extent explored in existing research. For example, it can be shown that the patterns of reasoning which are used in finding a contradiction in the Liar cycles~\cite{Cook2004} are similar to the reasoning we make use of in FR-type arguments, and in~\cite{Abramsky15} the connection is established between such logical cycles and contextuality. 
Additionally, in \cite{Pusey2015}, it has been shown that every proof of a logical pre-post selection paradox is a proof of contextuality. 
The question of how proofs of contextuality  relate to proofs of multi-agent logical paradoxes will be formally addressed in future work. 

\paragraph{Weak and noisy measurements.} 
In~\cite{Karanjai2015}, the analysis of weak measurements and weak values is applied to the epistemically-restricted theory of Liuoville classical mechanics. Weak values in that case coincide with the ones obtained for gaussian quantum mechanics; no anomalous weak values are observed, as the theory is non-contextual. It would also be interesting to apply our analysis of physical measurements to try to implement noisy and weak measurements in generalized Spekkens' theory; we leave this as an open project.

{\paragraph{Wigner's other friends.}
The thought experiment analyzed in this paper has many similarities to the thought experiments proposed by Brukner~\cite{Brukner2018} and Cavalcanti~\cite{Cavalcanti2020}, which also build on the original Wigner's friend scenario. However, the conclusions drawn from the latter two differ from the original FR experiment and its toy analogue discussed in this paper. While Brukner's and Cavalcanti's results provide a strengthening of the Bell's theorem, considering the FR thought experiment in various theories is an exploration of what it means to be a user of the theory and also be described within the said theory, and what operational restrictions such a user might have. As the toy theory does not exhibit non-local features, and its correlations do not violate Bell's inequalities, it is not suitable for formulating Brukner's and Cavalcanti's results.
We leave the interesting question of identifying fundamental connections between the assumptions used in all of these scenarios as future work.}

\paragraph{Open questions and generalizations.} 
We did not use the explicit form of the epistemic restriction in our proof of non-existence of the paradox in the toy theory; the assumptions we made are not unique to classical complementary.
In principle, we could have defined a different criterion for joint knowability of observables.
To preserve the general structure of the theory this new criterion needs to be subject to some conditions: for example, we require jointly knowable variables to have a linear structure; this follows from the fact that if two variables are known, any linear combination can be calculated. 
The formalization and exploration of different epistemic restrictions, as well as the investigation of other more general settings in which agents could reason about each other, is left for future work.

\subsection*{Acknowledgements} 
We thank Matthew F.~Pusey, {David Schmid, Y\`il\`e Y{\=\i}ng and Rob Spekkens} for pointed discussions and the anonymous NJP referees for valued feedback that led us to expand the pedagogical review in the introduction.
NN and LdR acknowledge support from the Swiss National Science Foundation through SNSF project No.\ $200020\_165843$ and through the National Centre of Competence in Research \emph{Quantum Science and Technology}(QSIT). 
LdR further acknowledges support from  the FQXi large grant \emph{Consciousness in the Physical World}. {LdR is grateful for the hospitality of Perimeter Institute where part of this work was carried out. Research at Perimeter Institute is supported in part by the Government of Canada through the Department of Innovation, Science and Economic Development and by the Province of Ontario through the Ministry of Colleges and Universities.}

\subsection*{Author contributions}
This manuscript is the second part of LH's semester project as a masters student (the first part was the review of the toy theory~\cite{Hausmann2021}). All authors contributed equally to the ideas and techniques developed here. LH wrote the first draft of the manuscript, and it was revised by all authors.


\newpage
\appendix


\section{Toy theory formalism in arbitrary dimensions}
\label{appendix:toyformalism}
The complete review of the toy theory formalism, including its original, stabilizer and arbitrary formulations, can be found in~\cite{Hausmann2021}.

\subsection{Formalism for arbitrary dimensions.}

\paragraph{Is this formalism necessary to understand the results of this paper?} To tackle the more general case of toy systems of arbitrary dimensions, we must review  heavier formalism \cite{Catani_2017, SpekkensFoundations2016}. Our main results are expressed in this language, but we add intuitive descriptions that convey the main message and don't require learning the formalism. 

\paragraph{Epistemic states.}
In the continuous case, we represent toy systems through observables $q_i$ and $p_i$ for each of the $n$ subsystems, analogous to position and momentum: for example, a toy particle moving in 3D would have $n=3$. For arbitrary dimensions, we represent a valid epistemic state $(\vec V, \vec v)$ by the known observables $\vec V = \langle \vec f_1, \dots, \vec f_k\rangle $ and the valuation $\vec v$ of those observables. For example,  if all we know about a 1D continuous system is that $\vec f_1=  2q_1 -  p_1 = 5$, then $\vec V = \langle \begin{pmatrix} 2 \vspace{1mm} \\ -1 \end{pmatrix}\rangle
$ and we can have for example $\vec v= \begin{pmatrix} 3 \vspace{1mm} \\ 1 \end{pmatrix} $, as $(2, -1) \begin{pmatrix} 3 \vspace{1mm} \\ 1 \end{pmatrix} = 2\times3 - 1\times1 = 5$. 
The set of ontic states compatible with this epistemic state are all  those that share the valuation $\vec v$ for the observables in  $\vec V$; this includes $o_1 = \begin{pmatrix} 3 \vspace{1mm} \\ 1 \end{pmatrix} $, but also for instance $o_2 = \begin{pmatrix} 2  \vspace{1mm} \\ -1  \end{pmatrix} $, as $(2, -1) o_2 = 5 $ , $o_3 = \begin{pmatrix} 4 \vspace{1mm} \\ 3  \end{pmatrix} $,   and so on.

\paragraph{Epistemic restriction for continous systems.}
The complete epistemic restriction in then given by the principle of classical complementarity:

        \begin{displayquote}
            ``The valid epistemic states are those wherein an agent
            knows the values of a set of quadrature variables that
            commute relative to the Poisson bracket, and is maximally
            ignorant otherwise.'' \cite{SpekkensFoundations2016}
        \end{displayquote}
        Here, ``maximal ignorance'' means that there is a uniform
        probability over all other values of variables.
        It can be shown that this complementarity principle requires observables
        to be linear, i.e.\ quadrature observables. 
        
        We represent quadrature observables as a vector $\vec{f} \in \mathbb{Z}^{2n}_d / \mathbb{R}^{2n}$ if we consider $n$ systems of dimension $d$ or $n$ continuous systems. Then the Poisson bracket is defined for both the
        continuous and discrete case, where the sum has to be understood
        in mod $d$ in the discrete case:
        \begin{align}
            [f,g] = \sum_{i = 1}^n f_{2i - 1} g_{2i} - f_{2i} g_{2i-1}
        \end{align}
        where $f_j$ denotes the $j$th entry of $\vec{f}$ \footnote{The poisson bracket can also be understood as the symplectic inner product of $\vec f$ and $\vec g$.} \footnote{Because of this definition we call the $2i-1$th entry the position of the system $i$ and the $2i$th entry the momentum of system $i$} \cite{SpekkensFoundations2016}.

\paragraph{Composing continuous systems.} To continue our example suppose that we bring in a second 1D system, and we know the local observable corresponding to the position of this system, for instance $\vec f_2 = q_2 = 10$. The global epistemic state is then specified by
$$
(\vec V, \vec v') = (\langle \vec f_1, \vec f_2\rangle, \vec v' )
=
\left(\left\langle 
\begin{pmatrix} 2 \vspace{2mm} \\ -1 \vspace{2mm} \\ 0 \vspace{2mm} \\ 0  \end{pmatrix} ,
\begin{pmatrix} 0 \vspace{2mm} \\ 0 \vspace{2mm} \\ 1 \vspace{2mm} \\ 0 \end{pmatrix}
\right\rangle, 
\begin{pmatrix} 3  \vspace{2mm} \\ 1 \vspace{2mm}  \\ 10 \vspace{2mm}  \\ 0 \end{pmatrix} \right),
$$
which is a product state.
On the other hand, if instead of $q_2$ we knew a global property, like that the positions of the two systems were perfectly correlated, $q_1=q_2$, we could represent this through a new observable $\vec f_3 = q_1 - q_2=0$, and so our global epistemic state would be 
$$
(\vec V', \vec v'') = (\langle \vec f_3\rangle, \vec v'' )
=
\left(\left\langle 
\begin{pmatrix} 1 \vspace{2mm} \\ 0 \vspace{2mm} \\ -1 \vspace{2mm} \\ 0 \end{pmatrix}
\right\rangle, 
\begin{pmatrix} 3  \vspace{2mm} \\ 1 \vspace{2mm}  \\ 3 \vspace{2mm}  \\ 0 \end{pmatrix} \right).
$$

\paragraph{Reversible transformations.}
Valid reversible transformations are \emph{sympletic transformations}, represented by a pair $(U, \vec a)$, where $\vec a$ is an ontic state\footnote{Throughout this paper $\vec a =\vec 0$ unless otherwise stated.}  and $U$ is a sympletic matrix. Sympletic matrices are those that  satisfy  $U^T J U = J$, where 
        \begin{equation}
            \vec{J} = \begin{pmatrix}
                0 & 1 & 0 & 0 & \dots \\[6pt]
                -1 & 0 & 0 & 0 & \\[6pt]
                0 & 0 & 0 & -1 & \\[6pt]
                0 & 0 & 1 & 0 & \\
                \vdots &  &  & &\ddots  \\
            \end{pmatrix}, 
        \end{equation}
and is used to write the sympletic inner product~\footnote{For two observables $\vec f$ and $\vec g$, their symplectic inner product corresponds to the Poisson bracket $[f,g] = \sum_{i = 1}^n f_{2i - 1} g_{2i} - f_{2i} g_{2i-1}$, with $f_j$ denoting the $j$th entry of $\vec{f}$. It can then be rewritten making use of the matrix $\vec J$: $[f,g] = \vec{f}^T\ \vec{J}\ \vec{g}$.}. The transformation $(U, \vec a)$ transforms an each ontic state from $\vec o$ to $U (\vec o + \vec a)$.
An epistemic state $(V,\vec{v})$ transforms under such a symplectic
transformation as
            \begin{equation}
                (V,\vec{v}) \to (\ (U^T)^{-1} V,\  U(\vec{v}+\vec{a})\ ).
            \end{equation}
The reason the transformation implements $(U^T)^{-1}$ on the vector space of known variables is that the transformation transforms an ontic state $\vec o$ to $U \vec o$, so if we know that an ontic state was compatible with the known variables before it must also be compatible afterwards.

For example consider the state 
$$\left(\vec V = \langle \begin{pmatrix} 2 \vspace{1mm} \\ -1 \end{pmatrix}\rangle, \vec v= \begin{pmatrix} 3 \vspace{1mm} \\ 1 \end{pmatrix} \right)
$$
and the transformation that swaps position and momentum
\begin{equation*}
    U = \begin{pmatrix}
                0 & 1 \\[6pt]
                1 & 0 
                
            \end{pmatrix}. 
\end{equation*}
This transforms the above state to
 $$\left(\vec V' = \langle \begin{pmatrix} -1 \vspace{1mm} \\ 2 \end{pmatrix}\rangle, \vec v'= \begin{pmatrix} 1 \vspace{1mm} \\ 3 \end{pmatrix} \right)
$$
The ontic state $o_2 = \begin{pmatrix} 2  \vspace{1mm} \\ -1  \end{pmatrix} $ which was compatible with the knowledge before the transformation is still valid after the transformation as 
$\begin{pmatrix} -1 \vspace{1mm} \\ 2 \end{pmatrix}^T (U o_2) = 5$.

\paragraph{Measurements.}
In the continuous case measurement consists of a vector space $V_{\pi}$ of observables that which can have outcomes 
$\vec{v}_{\pi} \in \Omega$, all possible (inequivalent) outcomes also result in a partition of the ontic state space, like in the discrete case.
The probability to get outcome $\vec{v}_{\pi}$ if the system
is in an ontic state $\vec{m}$ is given by \cite{SpekkensFoundations2016}
\begin{equation}
    \xi(\vec{v}_{\pi}|\vec{m}) = \delta_{V_{\pi}^{\perp} + \vec{v}_{\pi}}(\vec{m}).
\end{equation}
Intuitively, this means that we can only obtain measurement outcomes that are compatible with the ontic state of the system. 
We can denote a measurement $V_{\pi}$ and its outcome
$\vec{v}_{\pi}$ by the pair $(V_{\pi}, \vec{v}_{\pi})$. With
the conditional probability distribution $\xi(\vec{v}_{\pi}|\vec{m})$ we can calculate the
probability for a measurement outcome given the epistemic
state $(V,\vec{v})$  \cite{SpekkensFoundations2016},
\begin{equation}
    P(\vec{v}_{\pi}|(V,\vec{v})) = \sum_{\vec{m} \in \Omega} \xi(\vec{v}_{\pi}|\vec{m}) \  \mu_{(V,\vec{v})}(\vec{m}).
\end{equation}
where $ \mu_{(V,\vec{v})}(\vec{m})$ is the equal probability distribution over all ontic states compatible with the knowledge of the observables in $V$ having valuations $\vec v$.

\paragraph{Example of a measurement.}
For example consider the state 
$$\left(\vec V = \langle \begin{pmatrix} 2 \vspace{1mm} \\ 0 \end{pmatrix}\rangle, \vec v= \begin{pmatrix} 3 \vspace{1mm} \\ 1 \end{pmatrix} \right)
$$
and we want to measure the position
$$\langle \begin{pmatrix} 1 \vspace{1mm} \\ 0 \end{pmatrix}\rangle. $$
The epistemic state is such that we know the position to be $2 \cdot 3 = 6$, so the only compatible measurement outcome is 
$$\vec{v}_{\pi} = \begin{pmatrix} 6 \vspace{1mm} \\ 0 \end{pmatrix}.$$

\paragraph{Post-measurement state.} The post-measurement state of the system (given the information about the outcome) is as follows.

\begin{theorem}[Measurement update rule  ~\cite{Hausmann2021}]
\label{toy:gen:measurementupdate}
When an epistemic state $(V,\vec{v})$ is subjected to a measurement $V_{\pi}$, and outcome $\vec{v}_{\pi}$ is obtained, the epistemic state is updated to $(V', \vec v')$, where
         \begin{align}
                V' &= V_{\pi} \oplus V_{\text{commute}},\\
                \vec{v}' &\in (V_{\pi}^{\perp} +  \vec{v}_{\pi}) \cap (V^{\perp}_{\text{commute}} +
                \vec{v}),\\
            V_{\text{commute}}  
            &= \{\vec f \in V: \ [\vec f, \vec f_\pi] = 0, \ \forall \ \vec f_\pi \in V_\pi  \} 
            \subseteq V.
        \end{align} 
\end{theorem}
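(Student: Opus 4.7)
The plan is to derive $(V', \vec{v}')$ by imposing three physical constraints on the post-measurement epistemic state and showing that the proposed formula is the unique update satisfying all of them: (i) repeatability, so that a re-measurement of $V_\pi$ yields the same outcome $\vec{v}_\pi$ with certainty; (ii) preservation of prior knowledge that is undisturbed by the measurement; and (iii) classical complementarity, so that $(V', \vec{v}')$ is itself a valid epistemic state. These three ingredients naturally separate the proof into a subspace part (determining $V'$) and a valuation part (determining $\vec{v}')$.

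\textbf{Determining $V'$.} First I would show $V_\pi \subseteq V'$. Using the measurement probability formula $\xi(\vec{v}_\pi | \vec{m}) = \delta_{V_\pi^\perp + \vec{v}_\pi}(\vec{m})$, repeatability forces the updated distribution to be supported on ontic states for which $V_\pi$ evaluates to $\vec{v}_\pi$; equivalently, the observables in $V_\pi$ must belong to $V'$ with valuation $\vec{v}_\pi$. Second, I would argue $V_{\text{commute}} \subseteq V'$: because the measurement outcome is a function only of observables that Poisson-commute with $V_\pi$, the ontic disturbance induced by the measurement (a shift along the symplectic complement of $V_\pi$) leaves the valuations of observables in $V_{\text{commute}}$ invariant, so this knowledge must persist. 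Third, I would argue $V' \subseteq V_\pi + V_{\text{commute}}$ by invoking classical complementarity: any $\vec{f} \in V \setminus V_{\text{commute}}$ fails to Poisson-commute with some $\vec{f}_\pi \in V_\pi$, and including both in $V'$ would violate the epistemic restriction, so nothing outside $V_{\text{commute}}$ can be retained from $V$; and nothing outside $V \cup V_\pi$ can be added, since it is neither implied by the prior knowledge nor by the measurement outcome. Finally, I would verify that the sum $V_\pi + V_{\text{commute}}$ is indeed direct, or more precisely, that any observable in $V_\pi \cap V_{\text{commute}}$ has a unique consistent valuation (the one inherited from both $\vec{v}$ and $\vec{v}_\pi$, which must agree because $\vec{v}_\pi$ is a possible outcome under the prior state).

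\textbf{Determining $\vec{v}'$.} Given $V' = V_\pi \oplus V_{\text{commute}}$, any compatible ontic state must (a) produce outcome $\vec{v}_\pi$ for $V_\pi$, i.e.\ lie in $V_\pi^\perp + \vec{v}_\pi$, and (b) agree with the prior valuations of $V_{\text{commute}}$, i.e.\ lie in $V_{\text{commute}}^\perp + \vec{v}$. So $\vec{v}'$ must lie in the intersection given in the statement. I would then verify existence of such a $\vec{v}'$ by a linear-algebra argument: since $V_\pi$ and $V_{\text{commute}}$ span $V'$ in direct sum, the two affine subspaces $V_\pi^\perp + \vec{v}_\pi$ and $V_{\text{commute}}^\perp + \vec{v}$ have intersecting translation spaces $V_\pi^\perp \cap V_{\text{commute}}^\perp = (V')^\perp$ of the right codimension to guarantee a nonempty intersection (the compatibility of the two offsets is precisely the condition that $\vec{v}_\pi$ be an admissible outcome for the state $(V, \vec{v})$, which holds by hypothesis). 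Uniqueness up to $V'^\perp$ is automatic, and corresponds exactly to the remaining ignorance about the ontic state.

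\textbf{Main obstacle.} The delicate step is step (iii), where one has to argue that $V_{\text{commute}}$ is the \emph{largest} subspace of prior knowledge that can be retained — one must rule out both the retention of non-commuting information and the spurious addition of new knowledge beyond what the measurement outcome and prior commuting knowledge jointly provide. I expect this to require a careful appeal to the epistemic restriction together with the principle that the updated state should be as informative as possible (the smallest valid epistemic state compatible with repeatability and prior knowledge), mirroring the ``pre- and post-selected state'' reasoning illustrated for toy bits in the main text. The rest of the argument is essentially symplectic linear algebra over $\mathbb{Z}_d$ or $\mathbb{R}$, which I would carry out uniformly by working with the Poisson bracket as a symplectic form.
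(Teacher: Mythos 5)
The paper does not actually prove this statement: Theorem~\ref{toy:gen:measurementupdate} is quoted from the companion review~\cite{Hausmann2021} and used as an imported rule of the formalism, so there is no in-paper proof to compare against. Your outline reconstructs the standard justification (repeatability, retention of Poisson-commuting knowledge, validity of the updated state), and the linear-algebra side of your plan is sound: isotropy of $V_\pi + V_{\text{commute}}$ follows from the isotropy of $V_\pi$ and of $V \supseteq V_{\text{commute}}$ together with the definition of $V_{\text{commute}}$, and nonemptiness of $(V_\pi^{\perp}+\vec v_\pi)\cap(V_{\text{commute}}^{\perp}+\vec v)$ follows from $V^{\perp}\subseteq V_{\text{commute}}^{\perp}$ plus the hypothesis that the outcome $\vec v_\pi$ occurs with nonzero probability (this is exactly \cref{toy:gen:lin1,toy:gen:lin2} applied to the two cosets). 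Your observation that the ``direct sum'' must be read loosely when $V_\pi\cap V\neq\{0\}$ is also correct.

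Two caveats. First, your description of the ontic disturbance as ``a shift along the symplectic complement of $V_\pi$'' is wrong: a shift of the ontic state by $\vec a$ changes the value of an observable $\vec g$ by $\vec g^{\,T}\vec a$, so a shift lying in the symplectic complement of $V_\pi$ would not in general preserve the measured outcome. The disturbance that does the job is a uniformly random shift along $J V_\pi$, the symplectic gradients of the measured variables: for $\vec a = J\vec f_\pi$ the change in $\vec g$ is the Poisson bracket $[\vec g,\vec f_\pi]$, which vanishes on $V_\pi$ itself (by isotropy, giving repeatability) and on $V_{\text{commute}}$, and randomizes everything else. This is the mechanism on which your steps (i)--(iii) all rest, and with the complement in place of $J V_\pi$ the maximality argument you defer would not go through. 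Second, that deferred maximality step is where the real content sits, and in the toy-theory literature it is resolved essentially by fiat: the update is \emph{defined} as the smallest valid epistemic state compatible with repeatability and the surviving prior knowledge (the same ``pre- and post-selection'' logic the main text illustrates for toy bits), so what genuinely requires proof is only consistency --- validity of $V'$ and existence of $\vec v'$ --- which your sketch does cover. As written, however, the proposal is a plan rather than a proof: the step you yourself flag as the main obstacle is acknowledged but not closed.
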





\section{Formal results and proofs}
\subsection{Linear algebra lemmas}
\label{appendix:la}
Here we list the results from linear algebra we used in the refinement of the generalization of Spekkens' toy theory.

  \begin{lemma}[\cite{Gallier_2011}]\label{toy:gen:lin1}
            Let $W \subset \Omega$ be a subvector space or
            submodule and $\vec{w} \in \Omega$. Then for any $\vec{a} \in W+\vec{w}$

            \begin{equation}
                W + \vec{w} = W + \vec{a}
            \end{equation}
        \end{lemma}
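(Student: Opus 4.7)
The plan is to prove the two inclusions $W+\vec w\subseteq W+\vec a$ and $W+\vec a\subseteq W+\vec w$ separately, using the defining closure properties of a subvector space (or submodule): closure under addition and under taking additive inverses.

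First I would unpack the hypothesis. Since $\vec a\in W+\vec w$, there exists $\vec u\in W$ such that $\vec a=\vec u+\vec w$. Because $W$ is a submodule, $-\vec u\in W$, so rearranging gives $\vec w=\vec a-\vec u\in W+\vec a$. This is the one nontrivial observation; once we have it, both inclusions fall out symmetrically.

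For the forward inclusion, pick an arbitrary $\vec x\in W+\vec w$, written as $\vec x=\vec u'+\vec w$ with $\vec u'\in W$. Substituting $\vec w=\vec a-\vec u$ gives $\vec x=(\vec u'-\vec u)+\vec a$, and $\vec u'-\vec u\in W$ by closure, so $\vec x\in W+\vec a$. For the reverse inclusion, an arbitrary element $\vec y\in W+\vec a$ has the form $\vec y=\vec u''+\vec a=\vec u''+\vec u+\vec w$, and $\vec u''+\vec u\in W$, so $\vec y\in W+\vec w$. Combining the two inclusions yields the desired equality.

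There is no real obstacle here; the statement is a textbook coset identity (cosets of $W$ in $\Omega$ either coincide or are disjoint, and an element of a coset determines it). The only subtle point worth flagging is that the argument uses both closure under $+$ and existence of inverses within $W$, which is why the hypothesis requires $W$ to be a subvector space or submodule rather than merely a subset closed under addition.
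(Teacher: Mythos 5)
Your proof is correct and follows essentially the same route as the paper's: both establish the two inclusions by writing $\vec a=\vec u+\vec w$ with $\vec u\in W$, substituting $\vec w=\vec a-\vec u$, and invoking closure of $W$ under addition and inverses. No meaningful differences to report.
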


        \begin{proof}
            Let $\vec{b} \in W +\vec{w}$ then
            \begin{equation}
                \vec{b} = \vec{w}_1 + \vec{w} 
            \end{equation}
            for some $\vec{w}_1 \in W$. As $\vec{a} \in W + \vec{w}$ we know that 
            \begin{align}
                \begin{split}
                    \vec{a} &= \vec{w_2} + \vec{w} \\
                    \iff \vec{w} &= \vec{a} - \vec{w}_2 
                \end{split}
            \end{align}
            for some $\vec{w}_2 \in W$. Plugging the expression for
            $\vec{w}$ into the expression for $\vec{b}$ we find:
            \begin{equation}
                \vec{b} = \vec{w}_1 - \vec{w}_2 + \vec{a} \in W + \vec{a}
            \end{equation}
            Therefore, $W +\vec{w} \subset W + \vec{a}$.

            Let $\vec{c} \in W + \vec{a}$ then we can write 
            \begin{equation}
                \vec{c} = \vec{w}_3 + \vec{a} = \vec{w}_3 + \vec{w}_2 +\vec{w} \in W + \vec{w}
            \end{equation}
            where we used the expression for $\vec{a}$ from above. From this
            equation we can conclude that $W +\vec{a} \subset W + \vec{w}$.
        \end{proof}

        \begin{lemma}[\cite{Gallier_2011}]\label{toy:gen:lin2}
            Let $W, V \subset \Omega$ be two subvector spaces or
            submodules and $\vec{v},\vec{w} \in \Omega$. Then if $(W +
            \vec{w}) \cap (V + \vec{v}) \neq \emptyset$, it holds that
            
            \begin{equation}
                (W + \vec{w}) \cap (V + \vec{v}) = (W \cap V) + \vec{u}
            \end{equation}
            with $\vec{u} \in (W + \vec{w}) \cap (V + \vec{v})$.
        \end{lemma}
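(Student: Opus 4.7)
The plan is to first use Lemma~\ref{toy:gen:lin1} to re-anchor both cosets at the same base point $\vec{u}$, and then reduce the problem to the standard fact that for subspaces/submodules anchored at the same point, intersection and translation commute.

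Concretely, since we assume $(W+\vec{w})\cap(V+\vec{v})\neq\emptyset$, we can pick any $\vec{u}$ in this intersection. Because $\vec{u}\in W+\vec{w}$, Lemma~\ref{toy:gen:lin1} immediately gives $W+\vec{w}=W+\vec{u}$, and similarly $V+\vec{v}=V+\vec{u}$. This lets me rewrite the intersection as $(W+\vec{u})\cap(V+\vec{u})$, and the claim becomes
\begin{equation*}
(W+\vec{u})\cap(V+\vec{u}) \;=\; (W\cap V)+\vec{u}.
\end{equation*}

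For the forward inclusion, I would take any $\vec{x}$ in the left-hand side and write $\vec{x}=\vec{w}_1+\vec{u}=\vec{v}_1+\vec{u}$ with $\vec{w}_1\in W$ and $\vec{v}_1\in V$. Cancelling $\vec{u}$ gives $\vec{w}_1=\vec{v}_1\in W\cap V$, so $\vec{x}\in (W\cap V)+\vec{u}$. For the reverse inclusion, I would take $\vec{x}=\vec{y}+\vec{u}$ with $\vec{y}\in W\cap V$; since $\vec{y}\in W$ this puts $\vec{x}\in W+\vec{u}$, and since $\vec{y}\in V$ this puts $\vec{x}\in V+\vec{u}$, so $\vec{x}$ lies in the intersection.

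There is no serious obstacle here; the only subtle point is the non-emptiness hypothesis, which is precisely what licenses the choice of $\vec{u}$ and the application of Lemma~\ref{toy:gen:lin1}. Everything else is a short chase of elements, and the argument works verbatim for both real vector spaces and $\mathbb{Z}_d$-submodules, so it covers both the continuous and discrete toy-theory settings used in Appendix~\ref{appendix:toyformalism}.
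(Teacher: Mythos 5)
Your proposal is correct and follows essentially the same route as the paper's proof: both pick $\vec{u}$ in the nonempty intersection, apply Lemma~\ref{toy:gen:lin1} to rewrite $W+\vec{w}=W+\vec{u}$ and $V+\vec{v}=V+\vec{u}$, and then conclude $(W+\vec{u})\cap(V+\vec{u})=(W\cap V)+\vec{u}$ by an element chase. Your version is marginally more explicit about the two inclusions, but there is no substantive difference.
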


        \begin{proof}
            If $(W + \vec{w}) \cap (V + \vec{v}) \neq \emptyset$ then there exists a
            $\vec{u} \in (W + \vec{w}) \cap (V + \vec{v})$.
            \Cref{toy:gen:lin1} allows us to write
            \begin{align}
                \begin{split}
                    W +\vec{w} &= W + \vec{u}\\
                    V + \vec{v} &= V + \vec{u}.
                \end{split}
            \end{align}
            This means each element in $V + \vec{v}$ is of the form
            $\vec{u}_1 = \vec{v}_1 + \vec{u}$ for $\vec{v}_1 \in V$,
            and each element in $W + \vec{w}$ is of the form
            $\vec{u}_2 = \vec{w}_1 + \vec{u}$ for $\vec{w}_1 \in W$.
            Therefore $\vec{u}_1$ is in $W + \vec{w}$ if and only if $\vec{v}_1
            \in W$ and $\vec{u}_2$ is in $V + \vec{v}$ if and only if $\vec{w}_1 \in V$.
            Therefore we can conclude that $(W + \vec{w}) \cap (V + \vec{v}) = (V \cap
            W)+ \vec{u}$. $ $
        \end{proof}

        \begin{lemma}[{\cite{Catani_2017}}]\label{toy:gen:lin3}
            Let $V,W \subset \Omega$ be two subvector spaces or
            submodules then it holds that 
            \begin{equation}
                (V \oplus W)^{\perp} = V^{\perp} \cap W^{\perp}
            \end{equation}
        \end{lemma}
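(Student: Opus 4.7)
The plan is to prove the identity $(V \oplus W)^{\perp} = V^{\perp} \cap W^{\perp}$ by double inclusion, relying only on the bilinearity of the symplectic inner product $[\cdot,\cdot]$ used to define the orthogonal complement. Since both sides are defined in terms of this bilinear form, the argument will be essentially the same as for a standard (Euclidean) inner product; the only care required is to make sure we use bilinearity and not, e.g., positive-definiteness, which the symplectic form lacks.

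For the inclusion $(V \oplus W)^{\perp} \subseteq V^{\perp} \cap W^{\perp}$, I would take an arbitrary $\vec{u} \in (V \oplus W)^{\perp}$ and exploit that $V \subseteq V \oplus W$ and $W \subseteq V \oplus W$. By definition of the symplectic complement, $[\vec{u}, \vec{x}] = 0$ for every $\vec{x} \in V \oplus W$; restricting to $\vec{x} \in V$ gives $\vec{u} \in V^{\perp}$, and restricting to $\vec{x} \in W$ gives $\vec{u} \in W^{\perp}$, so $\vec{u}$ lies in the intersection.

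For the reverse inclusion $V^{\perp} \cap W^{\perp} \subseteq (V \oplus W)^{\perp}$, I would take $\vec{u} \in V^{\perp} \cap W^{\perp}$ and an arbitrary $\vec{x} \in V \oplus W$. By definition of the direct sum, $\vec{x}$ decomposes as $\vec{x} = \vec{v} + \vec{w}$ with $\vec{v} \in V$ and $\vec{w} \in W$. Using bilinearity of the Poisson bracket, $[\vec{u}, \vec{x}] = [\vec{u}, \vec{v}] + [\vec{u}, \vec{w}] = 0 + 0 = 0$, so $\vec{u}$ annihilates every element of $V \oplus W$ and therefore sits in $(V \oplus W)^{\perp}$.

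Since both inclusions hold, equality follows. The only mild obstacle is that in the discrete toy-theory setting $\Omega$ is a $\mathbb{Z}_d$-module rather than a vector space, so I want to make sure I never appeal to nondegeneracy of the symplectic form or to the existence of complements in bases; the argument above uses only bilinearity and the definition of the direct sum, both of which hold equally well over $\mathbb{Z}_d$ as over $\mathbb{R}$, so the proof is uniform across the continuous and discrete cases treated in the paper.
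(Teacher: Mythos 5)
Your proof is correct and follows essentially the same double-inclusion argument as the paper: one direction from $V,W\subseteq V\oplus W$, the other from decomposing an element of $V\oplus W$ as $\vec v+\vec w$ and using bilinearity. The only cosmetic difference is that the paper writes the pairing as the ordinary product $\vec a^{T}\vec u$ rather than the symplectic bracket, but since both arguments use only bilinearity this changes nothing.
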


        \begin{proof}
            Let $\vec{a} \in (V \oplus W)^{\perp}$ and, therefore, for all vectors
            $\vec{u} \in (V \oplus W)$ it holds that $\vec{a}^{T} \vec{u} = 0$. In
            particular, this holds for $\vec{a} \in V^{\perp}$ and $\vec{a} \in
            W^{\perp}$ as $V$ and $W$ are subsets of $(V \oplus W)$. Therefore, we can conclude that
            $(V \oplus W)^{\perp} \subset V^{\perp} \cap W^{\perp}$.

            Let $\vec{b} \in V^{\perp} \cap W^{\perp}$ and let $\vec{u} \in (V
            \oplus W)$ be arbitrary. Then we find that
            \begin{equation}
                \vec{u}^T \vec{b} = \vec{u}_w^T\vec{b} + \vec{u}_v^T \vec{b} = 0
            \end{equation}
            for $\vec{u}_w \in W$ and $\vec{u}_v \in V$ such that $\vec{u}_w
            + \vec{u}_v = \vec{u}$. Therefore, $V^{\perp} \cap W^{\perp} \subset
            (V \oplus W)^{\perp}$
        \end{proof}

        \begin{lemma}[\cite{Wilding_2013,Lam_2004}]\label{toy:gen:lin4}
            Let $V \subset \Omega$ be a subset or submodule. Then it
            holds that $(V^{\perp})^{\perp} = V$.
        \end{lemma}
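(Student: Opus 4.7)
The plan is to prove the two inclusions $V \subseteq (V^\perp)^\perp$ and $(V^\perp)^\perp \subseteq V$ separately, with the perpendicular taken with respect to the symplectic pairing $[\vec{a},\vec{b}] = \vec{a}^T J \vec{b}$.

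First I would establish the easy inclusion $V \subseteq (V^\perp)^\perp$ directly from the definitions. For any $\vec{v}\in V$ and any $\vec{u}\in V^\perp$, we have $[\vec{u},\vec{v}] = \vec{u}^T J \vec{v} = 0$. Since $J^T = -J$, transposing yields $[\vec{v},\vec{u}] = -[\vec{u},\vec{v}] = 0$, so $\vec{v}$ is orthogonal to every element of $V^\perp$, i.e.\ $\vec{v}\in(V^\perp)^\perp$. This step is clean and works uniformly for both the continuous case $\Omega=\mathbb{R}^{2n}$ and the discrete case $\Omega=\mathbb{Z}_d^{2n}$, and does not use any structural property of $V$ beyond the fact that it is closed under scalar multiplication.

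For the reverse inclusion, the natural strategy is a cardinality/dimension count built on the non-degeneracy of the symplectic form $J$ (the matrix $J$ is invertible, as $J^2 = -\mathbbm{1}$). In the continuous case the key identity is $\dim V + \dim V^{\perp} = \dim\Omega = 2n$, which is an immediate consequence of viewing $V^\perp$ as the kernel of the full-rank map $\vec{u}\mapsto \vec{u}^T J$ restricted to $V$; applying this identity twice gives $\dim(V^\perp)^\perp = \dim V$, and combined with $V\subseteq(V^\perp)^\perp$ forces equality. I would formalise the first inclusion and then cite this rank-nullity argument as the finishing step.

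The main obstacle is the discrete case when $d$ is not prime: over $\mathbb{Z}_d$ submodules are not automatically free, and simple dimension counting no longer applies. My plan is to reduce to cyclic factors via the structure theorem for finitely generated modules over the principal ideal ring $\mathbb{Z}_d$ (Smith normal form applied to a generating matrix for $V$), and then verify the analogue $|V|\cdot|V^\perp| = |\Omega|$ on each factor using that $J$ remains non-degenerate after change of basis. Combining this cardinality equality with the inclusion $V\subseteq (V^\perp)^\perp$ established in the first step yields $|V| = |(V^\perp)^\perp|$, and the inclusion is then upgraded to equality. This module-theoretic step is the only place where care beyond textbook symplectic linear algebra is needed.
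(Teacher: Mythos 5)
The paper does not actually prove this lemma: its ``proof'' consists entirely of citing \cite{Wilding_2013} for general $d$ and \cite{Lam_2004} for the field cases ($d$ prime or continuous), so your sketch is doing genuinely more work than the source, and your overall strategy --- the trivial inclusion $V \subseteq (V^{\perp})^{\perp}$ plus a dimension/cardinality count forced by non-degeneracy --- is the standard, correct one that those references carry out. Three points to tighten. First, the orthogonal complement in this paper is taken with respect to the plain dot product $\vec{u}^{T}\vec{v}$ (see the proofs of Lemma~\ref{toy:gen:lin3} and Lemma~\ref{para:ingr:cond}), not the symplectic pairing $\vec{u}^{T}J\vec{v}$; your argument survives unchanged because all it uses is non-degeneracy, but the appeal to $J^{T}=-J$ in the first inclusion is then unnecessary --- with a symmetric form that inclusion is immediate, and in fact it holds for an arbitrary subset $V$, which also shows that the lemma's hypothesis ``subset'' must really mean ``submodule'', since $(V^{\perp})^{\perp}$ is always a submodule and so cannot equal a non-closed $V$. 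Second, in the non-prime case the Smith-normal-form step is the one place your sketch glosses over a real difficulty: the change of basis that diagonalises a generating matrix of $V$ does not simultaneously diagonalise the bilinear form, so you cannot literally ``verify $|V|\cdot|V^{\perp}|=|\Omega|$ on each factor''. The clean route is to use non-degeneracy to identify $V^{\perp}\cong\operatorname{Hom}(\Omega/V,\mathbb{Z}_d)$ and invoke the structure theorem only to show that every finite $\mathbb{Z}_d$-module $M$ satisfies $|\operatorname{Hom}(M,\mathbb{Z}_d)|=|M|$ (true for cyclic $M\cong\mathbb{Z}_e$ with $e\mid d$ because $\operatorname{Hom}(\mathbb{Z}_e,\mathbb{Z}_d)\cong\mathbb{Z}_{\gcd(e,d)}=\mathbb{Z}_e$, hence for all $M$ by taking direct sums); this yields $|V|\cdot|V^{\perp}|=|\Omega|$, and your finite-set counting then closes the proof exactly as intended. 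With those repairs the argument is complete and self-contained, which is arguably an improvement on the paper's bare citation.
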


        \begin{proof}
            The proof for this in the case for general $d$ can be found in
            \cite{Wilding_2013}. In the case for general vector spaces 
            ($d$ prime or the continuous case) the proof can be found
            in \cite{Lam_2004}.
        \end{proof}

\subsection{Measurement as a physical process}
\label{appendix:measproofs}
Here you can find the proofs of statements used to formulate rules for measurement process.

\paragraph{Example: measuring position with a continuous 1D pointer.} 
We consider the  case where both the measured system and the pointer are continuous 1D systems, characterized by the observables $q_S, p_S, q_M$ and $p_M$.  
Note that we cannot start the pointer in the analogous of a Gaussian state, as each toy observable can only be either   fully known or completely unknown. We start instead with a pointer well-localized in position space, with $q_M=x_0$. 
Suppose that we want to measure the position of the first system, $S$. If $S$ starts in a state of well-defined position $q$, then we expect to end up in a ``classically corelated'' toy state analogous to $\ket {q}_S \ket {x_0 +q}_M$. If on the other hand $S$ starts with well-defined momentum $p$ and undefined position, we would expect the final global state to be somehow analogous to a superposition $\sum_q \alpha(q,x_0, p) \ket q_S\ket {x_0 + q}_P$. 
\Cref{thm:coherentcopyobservable} shows that in the toy theory there exists a transformation that produces a final state that is analogous to the quantum case.

\begin{restatable}{theorem}{coherentcopyposition}[Coherent copy of position]\label{thm:coherentcopyposition}
Let the epistemic state of the memory system be initialized as $q_{\text{memory}} = 0$, and let the initial epistemic state of the measured system be $(V_{\text{information}} = \langle \vec{v}_1 \rangle,\vec{v})$ with $\vec{v}_1,\vec{v} \in \mathbb{Z}_d^{2}$ or $\mathbb{R}^2$ so that the initial epistemic state of the composite system is 
\begin{equation}
    \left(\langle
\begin{pmatrix}  v^1_1 \\[6pt]  v^2_1 \\[6pt]
0 \\[6pt] 0 \end{pmatrix}, \begin{pmatrix}  0 \\[6pt] 0 \\[6pt] 1
\\[6pt] 0 \end{pmatrix} \rangle, \begin{pmatrix}  v^1 \\[6pt]  v^2 \\[6pt] 0 \\[6pt]
0 \end{pmatrix}\right).
\end{equation}
Then the transformation
\begin{equation}
    S = \begin{pmatrix}  
        1 & 0 & 0 & 0 \\[6pt] 
        0 & 1 & 0 & -1 \\[6pt]
        1 & 0 & 1 & 0 \\[6pt]
        0 & 0 & 0 & 1 \\[6pt]
    \end{pmatrix}
\end{equation}
correlates the position of the memory system and the information system. In prime and continuous dimensions tracing out the memory system results in a mixture of all possible measurement outcomes of the position of the information system. 
\end{restatable}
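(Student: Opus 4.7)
The plan is to prove the theorem in three stages: verify that $S$ is a legitimate reversible transformation in the theory, apply the state-update rule to read off the position correlation, and then compute the reduced state on the information system and compare it with the mixture obtained by averaging over all possible outcomes of a position measurement on the initial state.

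First I would check that $S^T J S = J$, which is a routine $4\times 4$ matrix computation using the block-diagonal symplectic form $J = \operatorname{diag}(J_2, J_2)$ with $J_2 = \bigl(\begin{smallmatrix} 0 & 1 \\ -1 & 0 \end{smallmatrix}\bigr)$. Granted this, $S$ is a valid symplectic transformation and acts on epistemic states as $(V,\vec v) \mapsto ((S^T)^{-1} V, S\vec v)$. Inverting $S^T$ by elementary row operations gives
\begin{equation*}
(S^T)^{-1} = \begin{pmatrix} 1 & 0 & -1 & 0 \\ 0 & 1 & 0 & 0 \\ 0 & 0 & 1 & 0 \\ 0 & 1 & 0 & 1 \end{pmatrix},
\end{equation*}
so the two generators of the initial observable space transform to $\vec u_1 = (v^1_1, v^2_1, 0, v^2_1)^T$ and $\vec u_2 = (-1, 0, 1, 0)^T$, while the valuation becomes $S\vec v = (v^1, v^2, v^1, 0)^T$. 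The generator $\vec u_2$ encodes the observable $q_M - q_S$ with valuation $0$, i.e.\ $q_M = q_S$, which is exactly the statement that the position of the memory has been coherently correlated with that of the information system.

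For the reduced state on the information system, I would identify the subspace of the transformed observable space spanned by vectors whose third and fourth entries vanish. A linear combination $\alpha \vec u_1 + \beta \vec u_2$ has memory components $(\beta,\, \alpha v^2_1)$, which vanish iff $\beta = 0$ and $\alpha v^2_1 = 0$. Here is where the restriction to prime $d$ or continuous systems enters: $\mathbb{Z}_d$ is then a field and the equation $\alpha v^2_1 = 0$ forces $\alpha = 0$ whenever $v^2_1 \neq 0$. So we get a clean dichotomy: (i) if $v^2_1 \neq 0$ the only surviving observable is trivial and the reduced state on $S$ is maximally mixed; (ii) if $v^2_1 = 0$ the original generator $\vec v_1$ survives unchanged, so the reduced state equals the initial epistemic state of $S$.

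Finally I would match this to the measurement-outcome mixture by applying the measurement update rule from \cref{toy:gen:measurementupdate} to a position measurement $V_\pi = \langle (1,0)^T\rangle$ on $(\langle \vec v_1\rangle, \vec v)$. The Poisson bracket $[\vec v_1, (1,0)^T] = -v^2_1$ selects exactly the same dichotomy: if $v^2_1 \neq 0$ the outcome is uniformly distributed over all positions and averaging the post-measurement states yields the maximally mixed state on $S$; if $v^2_1 = 0$ a unique outcome is obtained with certainty and the mixture collapses to the original state. In both cases the mixture coincides with the reduced state computed above. The main obstacle I anticipate is making this last matching quantitatively precise — in particular showing that the ``equal weight'' nature of toy epistemic states interacts correctly with the measurement probability distribution so that the mixture really is a valid epistemic state — and this is where the prime/continuous assumption is used in an essential way, since in composite dimensions partial-information observables (e.g.\ $2q$ in $\mathbb{Z}_4$) survive the intersection and the identification breaks.
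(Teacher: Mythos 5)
Your proposal is correct and its core is the same computation as the paper's: apply $(V,\vec v)\mapsto ((S^T)^{-1}V, S\vec v)$, obtain the generators $(v^1_1,v^2_1,0,v^2_1)^T$ and $(-1,0,1,0)^T$ with valuation $(v^1,v^2,v^1,0)^T$, and read off $q_M=q_S$ from the second generator. Where you diverge is the marginal: the paper marginalizes the uniform ontic distribution, i.e.\ it projects $V^\perp+\vec v = \operatorname{span}\{(0,-1,0,1)^T,(v_1^\perp,v_2^\perp,v_1^\perp,0)^T\}+(v^1,v^2,v^1,0)^T$ onto the first two coordinates and observes that the result is all of $\Omega$ exactly when $v^2_1\neq 0$; you instead work in the dual picture, intersecting the transformed $V$ with the subspace of observables supported only on the information system. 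The two computations are equivalent (the annihilator of the projected ontic support is precisely the locally supported part of $V$, which ultimately rests on $(V^\perp)^\perp=V$, \cref{toy:gen:lin4}), and both isolate the same role for the prime/continuous hypothesis — in the paper it appears as the projected span failing to be everything in composite $d$, in yours as $\alpha v^2_1=0$ not forcing $\alpha=0$ over $\mathbb{Z}_d$ with $d$ composite. Your version buys two things the paper skips: an explicit check that $S$ is symplectic (the paper takes this for granted), and a quantitative matching of the reduced state to the average over position-measurement outcomes via \cref{toy:gen:measurementupdate} — the paper simply asserts that the maximally mixed state ``is an equal mixture of all positions.'' That last step is the only place where some care is still needed (confirming that the outcome distribution is uniform when $v^2_1\neq 0$ so that the mixture of post-measurement states is the maximally mixed state), but it goes through exactly as you sketch.
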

 
\begin{proof}
We apply the transformation to the initial state
\begin{equation}
  S:  \left(\langle
\begin{pmatrix}  v^1_1 \\[6pt]  v^2_1 \\[6pt]
0 \\[6pt] 0 \end{pmatrix}, \begin{pmatrix}  0 \\[6pt] 0 \\[6pt] 1
\\[6pt] 0 \end{pmatrix} \rangle, \begin{pmatrix}  v^1 \\[6pt]  v^2 \\[6pt] 0 \\[6pt]
0 \end{pmatrix}\right) \to  \left(\langle
\begin{pmatrix}  v^1_1 \\[6pt]  v^2_1 \\[6pt]
0 \\[6pt]  v^2_1 \\[6pt] \end{pmatrix}, \begin{pmatrix}  -1 \\[6pt] 0 \\[6pt] 1
\\[6pt] 0 \end{pmatrix} \rangle, \begin{pmatrix}  v^1 \\[6pt]  v^2 \\[6pt] v^1 \\[6pt]
0 \end{pmatrix}\right)
\end{equation}
This transformation ensures that the position of the memory system is always equal to the position of the information system. 

Furthermore, if we trace out the memory system, that is taking the marginal of the probability distribution of the ontic state over the memory system. The probability distribution before marginalisation is the uniform distribution over $V^{\perp}+\vec{v}= \text{span}((0,-1,0,1)^T,(v_1^{\perp},v_2^{\perp},v_{1}^{\perp},0)^T) +(v^1,v^2,v^1,0)^T$ where $(v_1^{\perp},v_2^{\perp})$ is the vector spanning $V_{information}^{\perp}$.
After marginalisation this results in a uniform probability distribution over $V^{\perp}+\vec{v}$ with the last two entries removed $V'^{\perp}+\vec{v'}= \text{span}((0,1)^T,(v_1^{\perp},v_2^{\perp})^T) +(v^1,v^2)$.
Therefore, if $v_1^{2} \neq 0$ the traced out state is the maximally mixed state and therefore an equal mixture of all positions of the information system. In non prime dimensions, even if  $v_1^{2} \neq 0$, the state does not need to be the maximally mixed state. On the other hand if  $v_1^{2} = 0$, the state is the state where the information system has definite position $v^1$.
\end{proof}

\paragraph{Examples.}  To obtain some intuition, let us look at two examples. Recall that this sympletic transformation acts on an initial state as $(V, \vec v) \to ((S^T)^{-1}V, S\vec v)$.
Consider the initial state analogous to $\ket\phi_S\ket{x_0}_M$, that is the product state between an arbitrary state of $S$ and a well-defined memory position $q_M = x_0$. This state is transformed as 
\begin{equation}
    \left(\left\langle
\begin{pmatrix}  v_1 \\[6pt] v_2\\[6pt]
0 \\[6pt] 0
\end{pmatrix}, \begin{pmatrix}  0 \\[6pt] 0 \\[6pt] 1 \\[6pt] 0
\end{pmatrix} \right\rangle, \begin{pmatrix}  w_1 \\[6pt] w_2 \\[6pt] x_0 \\[6pt] 0
\end{pmatrix}\right)
\quad \longrightarrow \quad 
    \left(\left\langle
\begin{pmatrix}  v_1 \\[6pt] v_2\\[6pt]
0 \\[6pt] -v_2
\end{pmatrix}, \begin{pmatrix}  1 \\[6pt] 0 \\[6pt] 1 \\[6pt] 0
\end{pmatrix} \right\rangle, \begin{pmatrix}  w_1 \\[6pt] w_2 \\[6pt]  x_0 + w_1\\[6pt] 0
\end{pmatrix}\right). 
\end{equation}
In particular, after a quick simplification we can see that it transforms the initial state  analogous to $\ket q_S \ket{x_0}_M$  as 
\begin{equation}
    \left( \left\langle \begin{pmatrix}  
        1 \\[6pt] 
        0 \\[6pt]
        0  \\[6pt]
        0 \\[6pt]
    \end{pmatrix}, \begin{pmatrix}  
        0 \\[6pt] 
        0 \\[6pt]
        1  \\[6pt]
        0 \\[6pt]
    \end{pmatrix} \right\rangle,  \begin{pmatrix}  
        q \\[6pt] 
        0 \\[6pt]
        x_0  \\[6pt]
        0 \\[6pt]
    \end{pmatrix}\right)
\qquad \longrightarrow \qquad
 \left( \left\langle \begin{pmatrix}  
        1 \\[6pt] 
        0 \\[6pt]
        0  \\[6pt]
        0 \\[6pt]
    \end{pmatrix}, \begin{pmatrix}  
        0 \\[6pt] 
        0 \\[6pt]
        1  \\[6pt]
        0 \\[6pt]
    \end{pmatrix} \right\rangle,  \begin{pmatrix}  
        q \\[6pt] 
        0 \\[6pt]
        q+x_0  \\[6pt]
        0 \\[6pt]
    \end{pmatrix}\right),
\end{equation}
that is, a state  analogous to $\ket q_S\ket{x_0+q}_M$, 
as desired (it satisfies $q_S = q$ and $q_M = q+ x_0$).
On the other hand, $U$ transforms the state analogous to $\ket p_S \ket{x_0}_M$ as 
\begin{equation}
    \left( \left\langle \begin{pmatrix}  
        0 \\[6pt] 
        1 \\[6pt]
        0  \\[6pt]
        0 \\[6pt]
    \end{pmatrix}, \begin{pmatrix}  
        0 \\[6pt] 
        0 \\[6pt]
        1  \\[6pt]
        0 \\[6pt]
    \end{pmatrix} \right\rangle,  \begin{pmatrix}  
        0 \\[6pt] 
        p \\[6pt]
        x_0  \\[6pt]
        0 \\[6pt]
    \end{pmatrix}\right)
\qquad \longrightarrow \qquad
 \left( \left\langle \begin{pmatrix}  
        0 \\[6pt] 
        1 \\[6pt]
        0  \\[6pt]
         -1 \\[6pt]
    \end{pmatrix}, \begin{pmatrix}  
        1 \\[6pt] 
        0 \\[6pt]
        1  \\[6pt]
        0 \\[6pt]
    \end{pmatrix} \right\rangle,  \begin{pmatrix}  
         0 \\[6pt] 
         p \\[6pt]
         x_0  \\[6pt]
         0 \\[6pt]
    \end{pmatrix}\right),
\end{equation} 
that is, a state such that $p_S-p_M = p$ and $q_S + q_M = x_0$. Similarly to the quantum case, the reduced states of just $S$ or just $M$ are fully mixed.

\begin{restatable}{theorem}{coherentcopyobservable}[Coherent copy of an arbitrary observable]\label{thm:coherentcopyobservable}
Let the state of the information system be $\left(W, \vec{w} \right)$, where $W$ such that it defines a state on an arbitrary amount of systems. 
Let $\vec f$ be an observable, such that  $\vec f= ((S^M)^T)^{-1} \vec q_1 $ where $\vec q_1$ the position of the first system of the information system and $S^M$ a symplectic transformation. 
Furthermore, let $\vec v = (T_{mem}^T)^{-1} \vec q$ be an observable with $T$ a symplectic transformation and $\vec q$ the position of the memory system. Let the memory system be initially in the state where the value of $\vec v$ is $0$, such that the total initial state is $(V_{tot},\vec{v}_{tot}) = \left(\langle (\vec v, 0,0, \dots)^T, (0,0,\vec w_1)^T, \dots (0,0,\vec w_k)^T\rangle, (0,0,\vec{w})^T \right)$, where $\vec w_1,...,\vec w_k$ span $W$.

Then the transformation 
\begin{equation}
 (T_{mem} \otimes \id_{mes} ) (\id_{mem} \otimes S^M_{mes}) (S \otimes \id_{mes_2,..,mes_N} ) (\id_{mem}  \otimes (S^M_{mes})^{-1}) (T_{mem}^{-1} \otimes \id_S ) )   
\end{equation}
correlates the value of $\vec v$ of the memory system and with the value of $\vec f$ of the information system. In prime and continuous dimensions tracing out the memory system results in a mixture of all possible measurement outcomes of $\vec f$ of the information system. 
\end{restatable}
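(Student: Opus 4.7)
The plan is to reduce the general case to the already-proven Theorem~\ref{thm:coherentcopyposition} by conjugating with local symplectic transformations that align $\vec{v}$ and $\vec{f}$ with position coordinates. The motivation for the specific five-factor form of the transformation is visible from the hypotheses: since $\vec{v} = (T_{mem}^T)^{-1}\vec{q}$ and $\vec{f} = ((S^M)^T)^{-1}\vec{q}_1$, applying $T_{mem}^{-1}$ to the memory and $(S^M)^{-1}$ to the information system converts precisely the observables we wish to correlate into position coordinates. After the standard position copy via $S$, we reverse both local symplectic transformations to recover the original coordinates, while the correlation established by $S$ is carried through unchanged.

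Concretely, I would verify the claim by tracking the epistemic state under each of the five factors using the rule $(V,\vec{v}_{tot}) \to ((U'^T)^{-1}V,\ U'\vec{v}_{tot})$ at every step. After the first factor the initial generator $(\vec{v},0,\dots,0)^T$ becomes $(\vec{q},0,\dots,0)^T$ (memory's position is zero), since $(T_{mem}^{-1\,T})^{-1}\vec{v} = T_{mem}^T\vec{v} = \vec{q}$. After the second factor the generators $(0,0,\vec{w}_i)^T$ for $W$ transform under $(S^M)^T$ on the information block, and in this new frame the role previously played by $\vec{f}$ is played by $\vec{q}_1$. At that point the composite state has exactly the form required by Theorem~\ref{thm:coherentcopyposition}, so applying $S$ produces a state whose generators include the pre-image of the correlation $\vec{q}_{mem} - \vec{q}_1 = 0$. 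Steps four and five are local symplectic transformations acting separately on memory and on the information subsystem, so they preserve this correlation while transforming the relevant coordinates: $\vec{q}_{mem}$ is sent back to $\vec{v}$ and $\vec{q}_1$ back to $\vec{f}$, yielding the desired correlation $\vec{v}_{mem} - \vec{f}_{info} = 0$ in the final epistemic state.

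For the trace-out statement I would invoke the analogous conclusion of Theorem~\ref{thm:coherentcopyposition}: in prime and continuous dimensions, tracing out the memory after the position-copy step yields a uniform mixture over the possible values of $\vec{q}_1$. Since the subsequent local symplectic transformations on the information system are bijections on the ontic space sending $\vec{q}_1$ to $\vec{f}$, the pushforward of this uniform mixture is a uniform mixture over the values of $\vec{f}$, which is exactly a mixture over the possible measurement outcomes of $\vec{f}$. The main obstacle I anticipate is bookkeeping with the spans: one must check that the generators of $W$ transform consistently through all five steps and, in particular, that after $S$ the correlation generator survives intact under the final two local transformations and no extra constraints on $\vec{f}$ leak into the marginal of the information system after tracing out. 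Handling non-prime discrete dimensions, where the reduced state need not be maximally mixed (as flagged in Theorem~\ref{thm:coherentcopyposition}), will require the same qualification here.
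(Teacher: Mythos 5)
Your proposal follows essentially the same route as the paper's proof: both treat the outer factors as a local symplectic change of basis that turns $\vec f$ into $\vec q_1$ and $\vec v$ into the memory position, reduce the middle step to the position-copy argument of Theorem~\ref{thm:coherentcopyposition}, and use the locality of the conjugating maps to commute them past the partial trace for the marginal statement. The only caveat is that Theorem~\ref{thm:coherentcopyposition} is stated for a single information subsystem, so for several subsystems one needs the analogous calculation with $S\otimes\id_{mes_2,\dots,mes_N}$ rather than a literal application of that theorem --- exactly the bookkeeping you flag, and exactly how the paper handles it.
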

\begin{proof}
The first part of the transformation $(\id_{mem}  \otimes (S^M_{mes})^{-1}) (T_{mem}^{-1} \otimes \id_S ) )$  performs a basis change, such that the observable $\vec f$ of the information system in the old basis is transformed to $\vec q_1$ in the new basis and the observable $\vec v$ of the memory system is transformed to the position of the memory system. Therefore, the case where we copy $\vec q_1$ into the position of the memory system is related by a basis change to the case where we copy $\vec{f}$ into the value of $\vec{v}$ of the memory system. By an analogous calculation as in theorem~\ref{thm:coherentcopyposition} the transformation $(S \otimes \id_{mes_2,..,mes_N} )$ correlates the position of the memory system with $\vec{q_1}$. This means that after the transformation with $(S \otimes \id_{mes_2,..,mes_N} )$ the state $(V_{tot}',\vec{v}')$ is such that $V_{tot}'$ contains a vector of the form $(1,0,-1,0,0,0,0, \dots, 0)^{T} \in V'_{tot}$ and the valuation of $(1,0,-1,0,0,0,0, \dots, 0)^{T}$ is zero. The last part of the transformation $(\id_{mem}  \otimes S^M_{mes}) (T_{mem} \otimes \id_S ) )$ transforms $(1,0,-1,0,0,0,0, \dots, 0)^{T} \to (\vec v, -\vec{f} )$ and the valuation of the new vector is still zero as the transformation is symplectic. Therefore, the above transformation correlates the value of $\vec{f}$ on the information system with the value of $\vec{v}$ on the memory system. 

As the transformation $(\id_{mem}  \otimes (S^M_{mes})^{-1}) (T_{mem}^{-1} \otimes \id_S ) )$ acts only locally on the information system and the memory system, we can first transform back the joint memory and system state, then trace out the memory system and finally only transform the information system back and get the same result as if we would have just traced out the memory system. 
To determine the marginal we must consider the probability distribution over the ontic states induced by the epistemic state. This probability distribution is just the uniform distribution over the ontic states in $U^{\perp}+ \vec u$, where $(U,\vec u)$ is the state after the measurement update. 
The vector space $(S_{M}^T \otimes T_{mem}^T U)^{\perp}$ is given by $V^{\perp}_M \oplus \langle({0,1,0,-1},\dots,0)^{T}\rangle$ where $V_{M}^{\perp}$ is spanned by the vectors $(v_i^{(1)},0, \vec v_i)$ where $\vec v_i$ are the vectors spanning $(S_M^T W)^{\perp}$ and $v_i^{(k)}$ is the $k$th entry of $\vec v_i$.
Taking the marginal of the probability distribution over the memory system gives the uniform probability distribution over $(S_M^T W)^{\perp} \oplus \langle(0,1,0,\dots,0)^{T}\rangle + S_M^{-1}\vec{w}$, as marginalisation results in removing the the entries in the vectors of the systems we marginalized over. 
Therefore, we can conclude the state of the marginal system is $(M = (S_M^T W) \cap \langle(0,1,...,0)^{T}\rangle^{\perp}, S_M^{-1}\vec{w})$. Thus all vectors in $M$ are of the form $(1,0,...)^{T}$ or $(0,0,...)^{T}$. 
This means that $M \oplus \langle (1,0,0,...,0)^{T} \rangle$ is a set of commuting observables and $(1,0,0,...,0)^{T}$ is linearly independent of $M$ if and only if $(0,1,0,...,0)^{T}$ was not already contained in $(S_M^T W)$. In this case, for each value $q \in Z_{d} $ or $\mathbb{R}$ there exists a valuation vector $\vec v_q$ such that $(1,0,0,...,0)^{T}$ has the valuation $q$ and the valuation of $M$ is constant. This means that $M$ is a mixture of states with all possible values of $q_1$ except if $\vec q_1$ was already known. 
After transforming the marginalized system back with $S_M$, the results we found for $\vec q_1$ before the transformation with $S_M$ hold now for $\vec f$.
\end{proof}

\begin{restatable}{lemma}{observabletrafoe}[Existence of symplectic transformation]\label{thm:observabletrafoe}
Consider the same setting as in theorem~\ref{thm:coherentcopyobservable}.
In continuous and prime dimensions for any observable $\vec{f}$ on the information system and any observable $\vec{v}$ on the memory system there exist symplectic transformations such that 
\begin{align}
    \vec{f} &= ((S_M)^T)^{-1} \vec q_1 \\
    \vec v &= (T^T)^{-1} \vec q.
\end{align}

\end{restatable}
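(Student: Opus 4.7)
The plan is to reduce the lemma to the classical fact that the symplectic group $\mathrm{Sp}(2n, F)$ over a field $F$ acts transitively on the nonzero vectors of $F^{2n}$. In the two settings in scope here, $F = \mathbb{R}$ (continuous case) or $F = \mathbb{Z}_p$ for prime $p$ (prime dimension case); both are fields, which is what the whole argument will ultimately rest on.

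First I would observe that the symplectic group is closed under the map $S \mapsto (S^T)^{-1}$: applying the inverse and the transpose to $S^T J S = J$, and using $J^T = -J$, yields $((S^T)^{-1})^T J (S^T)^{-1} = J$. Consequently, producing a symplectic $S_M$ satisfying $((S_M)^T)^{-1} \vec{q}_1 = \vec{f}$ is equivalent to producing any symplectic $M$ with $M \vec{q}_1 = \vec{f}$, and an identical reduction handles $\vec{v} = (T^T)^{-1} \vec{q}$ on the memory side. Assuming $\vec{f} \neq 0$ and $\vec{v} \neq 0$ (as is standard for observables), it therefore suffices to prove: for any nonzero $\vec{u} \in F^{2n}$ there exists a symplectic $M$ with $M \vec{e}_1 = \vec{u}$, where $\vec{e}_1$ is the canonical basis vector corresponding to the position of the first subsystem.

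Second I would construct such an $M$ by building a symplectic basis whose first element is $\vec{u}$. Since the Poisson bracket is non-degenerate on $F^{2n}$, the linear functional $\vec{w} \mapsto [\vec{u}, \vec{w}]$ is nonzero whenever $\vec{u} \neq 0$, so its image is an $F$-subspace of $F$ other than $\{0\}$; over a field, this forces the image to be all of $F$, yielding some $\vec{u}'$ with $[\vec{u}, \vec{u}'] = 1$. Setting $\vec{e}_1' = \vec{u}$, $\vec{e}_2' = \vec{u}'$, and $W = \langle \vec{e}_1', \vec{e}_2' \rangle$, the form restricts non-degenerately to the symplectic complement $W^\perp = \{\vec{w} : [\vec{w}, \vec{x}] = 0 \text{ for all } \vec{x} \in W\}$, and $F^{2n} = W \oplus W^\perp$. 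Recursing on $W^\perp$ produces a full symplectic basis $\{\vec{e}_k'\}_{k=1}^{2n}$; the linear map sending the canonical symplectic basis to $\{\vec{e}_k'\}$ is then symplectic by construction and carries $\vec{q}_1 = \vec{e}_1$ to $\vec{u}$. The identical argument on the memory phase space produces $T$.

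The main obstacle is not conceptual but structural: the step where I produce $\vec{u}'$ with $[\vec{u}, \vec{u}'] = 1$ uses that any nonzero ideal of $F$ equals $F$, which fails in $\mathbb{Z}_d$ for composite $d$. For instance, for $d=4$ and $\vec{u} = (2, 0, \dots, 0)^T$, the image of $[\vec{u}, \cdot]$ is the proper ideal $2\mathbb{Z}_4$, which does not contain $1$, so no symplectic basis starting at $\vec{u}$ exists. This is precisely why the statement is restricted to continuous and prime dimensions, and this obstruction is worth highlighting in the formal write-up since it delineates exactly where the construction would need to be replaced by a more delicate divisibility analysis in the composite case.
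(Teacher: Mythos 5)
Your proof is correct, and it follows the same overall strategy as the paper's: both reduce the lemma to exhibiting a symplectic matrix with a prescribed (nonzero) first column, which amounts to extending the given vector to a full symplectic basis. Where you differ is in how that basis is built. The paper writes down an explicit candidate set $C$ of mutually commuting vectors obtained by successively truncating $\vec{w}$, and then hand-constructs dual partners with entries such as $-1/w_2$, with casework on which coordinates vanish; this is concrete but somewhat informal (the general step is only sketched as ``in a similar way \dots can be constructed''), and the invertibility of the nonzero $w_i$ is where primality silently enters. You instead invoke the standard hyperbolic-plane splitting: non-degeneracy of the form plus the field property yields $\vec{u}'$ with $[\vec{u},\vec{u}']=1$, you split off $W=\langle \vec{u},\vec{u}'\rangle$, and recurse on $W^{\perp}$. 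This buys a cleaner, casework-free argument, makes explicit the closure of the symplectic group under $S\mapsto (S^T)^{-1}$ (which the paper uses only implicitly when passing between the matrix and its inverse transpose), and isolates exactly where the prime/continuous hypothesis is needed --- your $d=4$, $\vec{u}=(2,0,\dots,0)^T$ obstruction is a worthwhile addition that the paper does not spell out. What the paper's version buys in exchange is an explicit list of matrix columns, which is convenient if one actually wants to write down the transformation.
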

\begin{proof}
Both transformations exist if we can find for any amount of systems a symplectic transformation where the first column of the transformation is given by an arbitrary vector. 
Let us call this vector $\vec w = (w_1,\dots,w_{2n})^T$.
The set of vectors 
\begin{equation}
    C = \left\{(w_1,\dots,w_{2n})^T,(0,0,w_3,w_4,\dots,w_{2n})^T,\dots, (0,\dots,0,w_{2n-1},w_{2n})^T  \right\}
\end{equation}
all commute with each other. 
Furthermore, the first vector listed in this set $(w_1,\dots,w_{2n})^T$ has symplectic inner product $1$ with the vector $(-1/w_2,0,0,\dots,0)$ if $w_2 \neq 0$ (otherwise choose $(0,1/w_1,0,\dots,0)$ and commutes with the rest of the set $C$. For the second vector the same holds for $(1/w_2,0,-1/w_4,0,\dots,0)$ if $w_4 \neq 0$ (otherwise choose $(1/w_2,0,0,1/w_3,\dots,0)$ or $(0,-1/w_1,0,1/w_3,\dots,0)$ if $w_2 = 0$). In a similar way for all vectors $\vec u$ in the set $C$ such a vector $\vec u'$  can be constructed such that it has symplectic inner product is $1$ with $\vec u$ and commutes with all other vectors in $C$.
A transformation is sympletic if its columns are such that, the first two columns have symplectic inner product $1$ with each other but commute with the rest, the same for the third and fourth column and so on for the following pairs of columns. 
Therefore, the transformation where the odd columns are the vectors from the set $C$, starting with the vector $\vec w$ and the even rows are the vectors we constructed from the previous odd column in the manner as described above is by construction symplectic and has $\vec w$ as its first column. 

\end{proof}

\begin{restatable}{theorem}{condtrafo}[Restrictions on conditional transformations]\label{thm:condtrafo}
Let $(V_S^{i},\vec{v_s^{i}})$ be the state of a system and $(V_T^{i},\vec{v_T^{i}})$ the initial state of the system to be prepared in a conditional preparation scenario. Additionally, let $S_{ST}$ be a symplectic transformation. 
Then, for $\vec{v}_{s,1}^{i},\dots, \vec{v}_{s,k}^{i}$ generating a partition of the ontic state space, i.e. 
$((V_S^{i})^{\perp} + \vec{v}_{s,1}^{i})\cup \dots \cup ((V_S^{i})^{\perp} + \vec{v}_{s,k}^{i}) = \Omega$ 
the marginals of the target system the final state $(V_T^{f} ,\vec{v}_T^{f})$ 
must be either identical or orthogonal. The number of identical states is equal for each separate type of identical states. 
\end{restatable}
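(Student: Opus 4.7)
The strategy is to show that after applying $S_{ST}$ and marginalizing onto the target system, the vector space of known observables $V_T^f$ on $T$ is independent of the outcome index $j$. Since the marginal states $(V_T^f, \vec{w}_j)$ then share the same $V_T^f$, any two of them correspond to cosets $(V_T^f)^\perp + \vec{w}_j$ and $(V_T^f)^\perp + \vec{w}_\ell$ that are either equal or disjoint, yielding respectively identical or orthogonal states in the toy-theoretic sense. The equal-multiplicity statement will then follow from the linearity of the induced map $j \mapsto \vec{w}_j \pmod{(V_T^f)^\perp}$.

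\textbf{Execution.} First I would form, for each outcome $j$, the joint pre-transformation state $(V_S^i \oplus V_T^i,\ (\vec{v}_{s,j}^i, \vec{v}_T^i))$ and apply the symplectic update rule to obtain $V_{ST}^f := (S_{ST}^T)^{-1}(V_S^i \oplus V_T^i)$ with valuation $S_{ST}(\vec{v}_{s,j}^i, \vec{v}_T^i)$. Crucially, $V_{ST}^f$ depends only on $S_{ST}$ and the initial known-observable spaces, not on $j$. Taking the marginal on $T$ amounts in the algebraic formalism to intersecting $V_{ST}^f$ with the subspace of observables supported purely on $T$ and restricting the valuation; the resulting $V_T^f$ is therefore independent of $j$, and the marginal is $(V_T^f, \vec{w}_j)$ with $\vec{w}_j$ the restriction of $S_{ST}(\vec{v}_{s,j}^i, \vec{v}_T^i)$ to $V_T^f$. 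By Lemma~\ref{toy:gen:lin1}, two such marginals are identical iff $\vec{w}_j - \vec{w}_\ell \in (V_T^f)^\perp$; otherwise the cosets are disjoint, which is the toy-theoretic notion of orthogonality, settling the dichotomy.

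\textbf{Counting and main obstacle.} For the multiplicity claim, observe that the map $\bar\phi: \vec\delta \mapsto (\text{restriction of } S_{ST}(\vec\delta, \vec 0) \text{ to } V_T^f) \pmod{(V_T^f)^\perp}$ is linear and descends to the quotient $\Omega_S / (V_S^i)^\perp$, because shifts within $(V_S^i)^\perp$ leave the joint $S$-state and hence the marginal on $T$ unchanged. Its fibers all share the common size $|\ker \bar\phi|$, so the number of outcomes $j$ mapping to each marginal state is constant. The most delicate piece is justifying the algebraic marginalization rule — that the marginal on $T$ really is given by intersecting $V_{ST}^f$ with the local-$T$ observable subspace — which must be established by tracking the induced uniform probability distribution on $T$-ontic states, analogously to the end of the proof of Theorem~\ref{thm:coherentcopyobservable}. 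A secondary subtlety is the non-prime discrete setting, where the $V$'s are $\mathbb{Z}_d$-submodules rather than vector spaces and the counting argument must be phrased in terms of finite abelian groups; here one invokes Lemma~\ref{toy:gen:lin4} to preserve the duality between observables and valuations when passing to quotients.
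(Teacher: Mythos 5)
Your proposal follows essentially the same route as the paper's proof: the post-transformation known-observable space $(S_{ST}^{-1})^T(V_S^i\oplus V_T^i)$ and its restriction to $T$ are independent of the outcome valuation, so the marginals share a common $V_T^f$ and differ only by cosets of $(V_T^f)^\perp$ (hence identical or orthogonal), and the outcome-to-valuation map is linear so its fibers all have size equal to that of its kernel, giving equal multiplicities. The paper adds two short remarks you omit --- that irreversible transformations (reversible on a larger system followed by tracing out) can only merge orthogonal classes, never create non-orthogonal ones, and that affine shift vectors $\vec{a}$ contribute only outcome-independent constants --- but these are supplementary to the core argument, which matches yours.
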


\begin{proof}
After the transformation the state of the system is 
\begin{equation}
    \left(\left(S_{ST}^{-1}\right)^T \left(V_S^{i}\oplus V_T^{i}\right) ,S_{ST} \left(\vec{v_s^{i}} \oplus \vec{v_T^{i}}\right)\right).
\end{equation}
This state is allowed as $V_{S}^{i}$ only has support on the first $2 \text{ amount of systems in } S$ entries of the system and $V_{T}^{i}$ on the rest. 
The marginal of the target system after the transformation is
\begin{equation}\label{eq:condtrafo}
    \left(\left(R_{T} \left(S_{ST}^{-1}\right)^T \left(V_S^{i}\oplus V_T^{i}\right)\right) ,\Pi_{T} S_{ST} \vec{v_s^{i}} \oplus \vec{v_T^{i}}\right)
\end{equation}
where $\Pi_{T}$ the projection on the last $2 \text{ amount of systems in } T$ entries of the system and $R_T$ removes the vectors that have support in $S$.
As the transformation cannot depend on $\vec{v}_{S}^{i}$, $\vec V_{T}^{f}$ is independent of $\vec{v}_{S}^{i}$. Therefore, changing $\vec{v}_{S}^{i}$ can only change the valuation $\vec{v}_s^{f}$ of the final state. Furthermore, changing the valuation either result in orthogonal or identical states.

If the transformation is irreversible, then we can always see the transformation as a reversible transformation on a larger system and subsequently tracing out some systems. Taking the trace effectively removes vectors from $V_{T}^f$. Therefore it can make states that were orthogonal identical, but it cannot produce non orthogonal or identical states, as also in this case the transformation cannot depend on $\vec v_{S}^{i}$.

This establishes that the different marginalised states on the target system only differ in their valuations which depend only on $\vec v_{s}^{i}$. By \cref{eq:condtrafo} this dependence is a linear map. Let us call this map $C$. So, two target valuations $\vec v_{T,1}^{f}, \vec v_{T,2}^{f}$ are identical if and only if $\vec v_{T,1}^{f} - \vec v_{T,2}^{f} \in Ker(C)$. Therefore, the sets of identical marginal target states are given by $Ker(C)+\vec v_{T}^{f}$ with $\vec v_{T}^{f}$ some valuation for a target state. Therefore, each set of identical target marginal states has the same number of elements.

\changemarker{Note that all transformations in Spekkens' toy theory are given by a symplectic matrix $S$ and 
a shift vector $\vec{a} \in \Omega$, the shift vector only changes the value of the different known observables by a constant, independent of the vector $\vec v_{s}^{i}$. Therefore, adding shifts to transformations also does not help to obtain non-orthogonal states.}
\end{proof}

\begin{corollary}[Restrictions on conditional transformations: example] \label{lemma:no_conditional_transf_example}
In Spekkens' toy theory, there are no reversible transformations $U$ that implement the action
\begin{equation*} 
    \toys{00} \quad \xrightarrow{U} \quad \toys{00} , \qquad 
    \toys{10} \quad \xrightarrow{U} \quad \toys{0+}.
\end{equation*}
\end{corollary}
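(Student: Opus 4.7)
The plan is to recognize this corollary as a direct instance of Theorem~\ref{thm:condtrafo}, which characterizes the allowed marginals of target states in conditional preparation scenarios. The two source states $\toy0$ and $\toy1$ of the first subsystem correspond to valuations $\vec v_{s,1}^{i}$ and $\vec v_{s,2}^{i}$ of the same known observable (the toy-$Z$ observable on system $A$), and together they generate a partition of the ontic state space of $A$, so they fit the hypothesis of the theorem exactly. The desired target marginals are therefore forced to be either pairwise identical or pairwise orthogonal.

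Next I would simply read off the target marginals that the purported transformation $U$ would have to produce: on input $\toys{00}$ the marginal of the second system is $\toy0 = \{1,2\}$, and on input $\toys{10}$ the marginal is $\toy+ = \{1,3\}$. Since $\{1,2\} \cap \{1,3\} = \{1\} \neq \emptyset$ these two epistemic states are not orthogonal, and they are obviously not identical, so the pair $(\toy0, \toy+)$ violates the alternative required by Theorem~\ref{thm:condtrafo}. This immediately rules out any symplectic $S_{ST}$ realizing the map.

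To cover the ``reversible'' wording of the corollary fully, I would note that Theorem~\ref{thm:condtrafo} already absorbs the irreversible case: any irreversible toy transformation arises from a reversible one on a larger system followed by tracing out the extra subsystems, and the theorem shows that tracing can only turn orthogonal target marginals into identical ones, never produce a non-orthogonal, non-identical pair. Hence even if one permits $U$ to act on an enlarged system and then discards ancillas, the same obstruction applies. Finally, since the statement allows an additional shift vector $\vec a$ in the symplectic transformation, I would invoke the remark at the end of the proof of Theorem~\ref{thm:condtrafo} that shifts only rigidly translate valuations and therefore cannot change which marginal target states are identical versus orthogonal.

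I do not expect a hard step: the entire content of the corollary is already packaged in Theorem~\ref{thm:condtrafo}, and the only work is to verify that $\toy0$ and $\toy+$ are neither orthogonal nor equal as subsets of $\{1,2,3,4\}$. The main thing to be careful about is simply matching notation, in particular confirming that the partition generated by the source states $\toy0$ and $\toy1$ of system $A$ really does cover the full ontic space, so that the hypothesis of the theorem is met verbatim.
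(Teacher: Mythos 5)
Your proposal is correct, and it coincides with the paper's primary justification: the paper's proof opens by declaring the corollary ``a direct application of Theorem~\ref{thm:condtrafo},'' which is exactly the reduction you carry out. You supply the details the paper leaves implicit — that the source states $\{1,2\}$ and $\{3,4\}$ are the two valuations of the toy-$Z$ observable on $A$ and hence generate a partition of its ontic space, and that the required target marginals $\{1,2\}$ and $\{1,3\}$ overlap in a single ontic state and so are neither identical nor orthogonal — and you correctly note that the theorem's treatment of ancillas and shift vectors covers the enlarged-system and affine cases. What the paper actually writes out, however, is a second, independent argument in the stabilizer formalism of~\cite{Pusey_2012}: the map would have to send $\avg{Z_A,Z_S}\to\avg{Z_A,Z_S}$ while sending $\avg{-Z_A,Z_S}\to\avg{Z_A,X_S}$, and a permutation matrix acting by conjugation $g\mapsto V_\Pi\, g\, V_\Pi^\dagger$ is linear, so it cannot act differently depending on the sign of $Z_A$. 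The trade-off is clear: your route inherits full generality from Theorem~\ref{thm:condtrafo} at the cost of invoking the heavier symplectic machinery, whereas the paper's stabilizer argument is self-contained for this two-toy-bit instance and isolates the conceptual obstruction (linearity of the allowed transformations) in a form familiar to readers who know stabilizers. Both are valid; no gap.
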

\begin{proof}
{This is a direct application of Theorem~\ref{thm:condtrafo}, which is the formal version of Theorem~\ref{thm:conditional_action_informal}. It also has a direct and simple proof in the stabilizer version of the toy theory;  we present that proof here for pedagogical purposes for the readers familiar with the stabilizer formalism.}

{In the stabilizer formalization of the toy theory~\cite{Pusey_2012}, for appropriate dimensions, each valid epistemic state can be isomorphically identified with  a set $S=\avg{g_1 g_2,\dots}$ of commuting or anti-commuting Pauli operators forming a valid stabilizer. Each allowed transformation on toy states corresponds to a permutation $\Pi$ on the subset of stabilizers. This permutation can be represented by a (unitary or anti-unitary) permutation matrix $V_\Pi$ that acts on each stabilizer $g\in S$ as $V_\Pi g V_\Pi^T$. 
In this case, we have }
\begin{align*}
&\toys{00} & \xrightarrow{U} & \toys{00}& , \qquad &\toys{10}& \xrightarrow{U} & \toys{0+}, \\   
    &\ket{00}\sim \avg{Z_A, Z_S} 
     & & 
    \avg{Z_A, Z_S}  
    & & 
    \ket{10}\sim \avg{-Z_A, Z_S}  
    & & 
    \ket{0+} \sim \avg{Z_A, X_S} .
\end{align*}
{This transformation $\Pi$ would have to map the stabilized states as}
\begin{align*}
    \Pi: \quad \operatorname{span}\{\mathcal{Z}_A,\mathcal{Z}_S\} 
    &\to \operatorname{span}\{\mathcal{Z}_A,\mathcal{Z}_S\}, \\
    \operatorname{span}\{-\mathcal{Z}_A,\mathcal{Z}_S\} 
    &\to 
    \operatorname{span}\{\mathcal{Z}_A,\mathcal{X}_S\}.
\end{align*}
{This means that permutation  $\Pi$  would  depend on the sign of $\mathcal{Z}_{A}$. This dependence should be explicit in the corresponding permutation matrix $V_\Pi$ (which needs to be either  unitary or anti-unitary~\cite{Pusey_2012}). However, permutation matrices cannot depend on the sign of stabilizers~\cite{Pusey_2012}, due to the linear nature of quantum theory. To see this, note that the permutation matrix would have to act on each stabilizer $g\in\{ \mathcal{Z}_{A} \otimes \id_S ,- \mathcal{Z}_{A}\otimes \id_S, \id_A\otimes \mathcal{Z}_{S}\} $ as $V_\Pi \ g\  V^\dagger_\Pi$, so if the transformation on the first state behaves as }
\begin{align*}
    \Pi: \quad \avg{\mathcal{Z}_A,\mathcal{Z}_S} 
    \to &  
    \avg{V_\Pi (\mathcal{Z}_A \otimes \id_S) V^\dagger_\Pi,\ V_\Pi (\id_A \otimes \mathcal{Z}_S) V^\dagger_\Pi} = \operatorname{span}\{\mathcal{Z}_A,\mathcal{Z}_S\} , 
\end{align*}
{then it must act on the second state as} 
\begin{align*}
     \Pi: \quad
    \operatorname{span}\{-\mathcal{Z}_A,\mathcal{Z}_S\} 
    \to\quad & 
    \operatorname{span}\{V_\Pi (- \mathcal{Z}_A \otimes \id_S)  V^\dagger_\Pi,\ V_\Pi (\id_A \otimes \mathcal{Z}_S)  V^\dagger_\Pi\} \\
    =& 
    \operatorname{span}\{ - V_\Pi ( \mathcal{Z}_A \otimes \id_S)  V^\dagger_\Pi,\ V_\Pi (\id_A \otimes \mathcal{Z}_S)  V^\dagger_\Pi\} \\
    =& \operatorname{span}\{-\mathcal{Z}_A,\mathcal{Z}_S\}
    \neq
    \operatorname{span}\{\mathcal{Z}_A,\mathcal{X}_S\}.
\end{align*}

\end{proof}

\subsection{Predictions with certainty}
\label{appendix:predproofs}
Here you can find the proofs of statements used to formulate rules for agents making predictions with certainty. The linear algebraic statements used in the proofs are formally justified in Appendix~\ref{appendix:la}.

\begin{restatable}{lemma}{paraingrcond}\label{para:ingr:cond}
    Given an epistemic state $(V,\vec{v})$ and a measurement $V_\pi$, an outcome  $\vec{v_{\pi}}$ will be measured with certainty if and only if 
    \begin{equation}
        V^{\perp} + \vec{v} \subset V_{\pi}^{\perp} + \vec{v}_{\pi}.
    \end{equation} 
    This condition is fulfilled if and only if the following two properties hold:
    \begin{enumerate}
        \item $
            V_{\pi} \subset V,
        $
        \item $
            (V^{\perp} + \vec{v}) \cap (V_{\pi}^{\perp} + \vec{v}_{\pi}) \neq \emptyset
        $
    \end{enumerate}
\end{restatable}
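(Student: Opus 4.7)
The plan is to establish the two claimed equivalences separately, using the probability formula for measurement outcomes stated in the continuous formalism review, together with the linear algebra lemmas collected in Appendix~\ref{appendix:la}.

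First I would handle the equivalence ``outcome $\vec{v}_\pi$ is measured with certainty $\iff$ $V^{\perp} + \vec{v} \subseteq V_{\pi}^{\perp} + \vec{v}_{\pi}$''. Unpacking the probability $P(\vec{v}_\pi \mid (V,\vec{v})) = \sum_{\vec{m} \in \Omega}\delta_{V_\pi^{\perp}+\vec{v}_\pi}(\vec{m})\,\mu_{(V,\vec{v})}(\vec{m})$, where $\mu_{(V,\vec{v})}$ is the uniform distribution supported exactly on $V^{\perp}+\vec{v}$, this probability equals $1$ precisely when the indicator is $1$ on every ontic state in the support, i.e.\ when every element of $V^{\perp}+\vec{v}$ also lies in $V_{\pi}^{\perp}+\vec{v}_\pi$. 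This is the first ``iff''.

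Next I would prove the two directions of the second equivalence. For the forward direction, assume $V^{\perp}+\vec{v}\subseteq V_{\pi}^{\perp}+\vec{v}_\pi$. The intersection is trivially nonempty since the left-hand side is itself nonempty and contained in the right-hand side. Moreover $\vec{v}\in V^{\perp}+\vec{v}\subseteq V_{\pi}^{\perp}+\vec{v}_\pi$, so Lemma~\ref{toy:gen:lin1} lets me rewrite $V_{\pi}^{\perp}+\vec{v}_\pi = V_{\pi}^{\perp}+\vec{v}$. Substituting back and cancelling the common shift yields $V^{\perp}\subseteq V_{\pi}^{\perp}$, and dualizing with Lemma~\ref{toy:gen:lin4} gives $V_\pi = (V_\pi^{\perp})^{\perp}\subseteq(V^{\perp})^{\perp}=V$. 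For the reverse direction, assume both numbered conditions and pick any $\vec{u}\in(V^{\perp}+\vec{v})\cap(V_{\pi}^{\perp}+\vec{v}_\pi)$. Lemma~\ref{toy:gen:lin1} replaces the two cosets by $V^{\perp}+\vec{u}$ and $V_{\pi}^{\perp}+\vec{u}$ respectively. The containment $V_\pi\subseteq V$ implies $V^{\perp}\subseteq V_{\pi}^{\perp}$ (any $\vec{g}$ annihilating all of $V$ certainly annihilates the subset $V_\pi$), so $V^{\perp}+\vec{u}\subseteq V_{\pi}^{\perp}+\vec{u}$, which is precisely the required containment.

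I do not anticipate a serious obstacle: the argument amounts to a careful translation between subset containment of affine cosets and of their underlying linear subspaces. The only point demanding some care is that both the coset-rewriting identity ($V^{\perp}+\vec{a} = V^{\perp}+\vec{b}$ whenever $\vec{a}$ and $\vec{b}$ lie in the same coset) and the double-orthogonal identity $(V^{\perp})^{\perp}=V$ must hold uniformly for $d$ prime, $d$ composite, and $d\to\infty$; but since Lemmas~\ref{toy:gen:lin1} and \ref{toy:gen:lin4} are already formulated at that level of generality, no extra work is required.
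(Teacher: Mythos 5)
Your proposal is correct and follows essentially the same route as the paper's proof: the same unpacking of $P(\vec{v}_\pi\mid(V,\vec{v}))$ via the indicator functions to obtain the coset-containment criterion, and the same appeals to Lemma~\ref{toy:gen:lin1} (coset rewriting) and Lemma~\ref{toy:gen:lin4} (double orthogonal complement) for the two directions of the second equivalence. If anything, your forward direction is slightly cleaner: you correctly orient the dualization as $V_\pi=(V_\pi^{\perp})^{\perp}\subseteq(V^{\perp})^{\perp}=V$, whereas the paper's text at that step writes the inclusion with the roles swapped (an apparent typo), and you make explicit the shift-cancellation step that the paper leaves implicit.
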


\begin{proof}
The probability of the measurement outcome $\vec{v}_{\pi}$ given that the system is in an epistemic state $(V,\vec{v})$ is given by the condition in~\cite{Hausmann2021}
\begin{equation}
    P(\vec{v}_{\pi}|(V,\vec{v})) = \sum_{\vec{m} \in \Omega} \xi(\vec{v}_{\pi}|\vec{m}) \mu_{(V,\vec{v})}(\vec{m}) = \frac{1}{N_{V}}\sum_{\vec{m} \in \Omega} \delta_{V_{\pi}^{\perp} +\vec{v}_{\pi}}(\vec{m}) \delta_{V^{\perp} +\vec{v}}(\vec{m}).
\end{equation} 
We are able to predict with certainty that the measurement outcome is $\vec{v}_{\pi}$ if
\begin{equation}
    (\delta_{V^{\perp} +\vec{v}}(\vec{m}) = 1)  \implies
(\delta_{V_{\pi}^{\perp} +\vec{v}_{\pi}}(\vec{m}) = 1).
\end{equation}
By the definition of $\delta_{V^{\perp} +\vec{v}}$ and $\delta_{V_{\pi}^{\perp} +\vec{v}_{\pi}}$ this condition is equivalent to~\cite{Hausmann2021}
\begin{equation}
    V^{\perp} + \vec{v} \subset V_{\pi}^{\perp} + \vec{v}_{\pi}.
\end{equation}
One might wonder why the symmetry is broken between $(V,\vec{v})$ and $(V_{\pi},\vec{v}_{\pi})$. The reason here is that $\delta_{V^{\perp} +\vec{v}}$ is normalized, while $\delta_{V_{\pi}^{\perp} +\vec{v}_{\pi}}$ is not. The
above condition can be further simplified with the following lemma.

    Let both properties be fulfilled. Then there exists a
    vector $\vec{w} \in (V^{\perp} + \vec{v}) \cap (V_{\pi}^{\perp} +
    \vec{v}_{\pi})$ and it holds that $V_{\pi} \subset V$. From the
    latter, it follows that $V_{\pi}^{\perp} \supset V^{\perp}$. This
    can be seen in the following was: if $\vec{w} \in V^{\perp}$ then
    for any $\vec{u} \in V$ it holds that $\vec{u}^T \vec{w} = 0$ and
    as $V_{\pi} \subset V$ this holds in particular for all $\vec{u}
    \in V_{\pi}$. Thus, $\vec{w} \in V_{\pi}^{\perp}$. With \cref{toy:gen:lin1}
    we can then conclude 
    \begin{align}
        V_{\pi}^{\perp} + \vec{v}_{\pi} &= V_{\pi}^{\perp} + \vec{w} \\
        V^{\perp} + \vec{v} &= V^{\perp} + \vec{w}.
    \end{align}
    Thus, it holds that 
    \begin{equation}
        V^{\perp} + \vec{v} \subset V_{\pi}^{\perp} + \vec{v}_{\pi} \neq \emptyset.
    \end{equation}

    Let 
    \begin{equation}
        V^{\perp} + \vec{v} \subset V_{\pi}^{\perp} + \vec{v}_{\pi}.
    \end{equation} be fulfilled. Then the condition 2 holds, and
    $ V^{\perp} \subset V_{\pi}^{\perp}$. Similarly as
    above, it holds that $ (V^{\perp})^{\perp} \subset
    (V_{\pi}^{\perp})^{\perp}$. Since $V \subset \Omega$ is a subset or a submodule, it is true that $(V^{\perp})^{\perp} = V$ (\cref{toy:gen:lin4}, which implies the condition 1.
\end{proof}

\begin{restatable}{lemma}{predcert} \label{pred_cert}
    Let $(V_A, v_{A = 1})$ be a measurement on subsystem $A$ and  $(V_B, v_{B = 1})$ a measurement on subsystem B. The statement ``A = 1 $\implies$ B = 1'' can be made if and only if following conditions are fulfilled
    \begin{enumerate}
        \item $
            V_{B} \subset (V_{\text{commute, A}} \oplus V_A),
        $
        \item $
            (V_{\text{commute, A}}^{\perp} + \vec{v}) \cap (V_{A}^{\perp} + \vec{v}_{A = 1}) \cap
            (V_{B}^{\perp} + \vec{v}_{B = 1}) \neq \emptyset.
        $
    \end{enumerate}
\end{restatable}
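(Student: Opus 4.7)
The plan is to reduce the two-measurement statement to the single-measurement certainty criterion already established in Lemma~\ref{para:ingr:cond}, applied to the post-measurement epistemic state obtained from the measurement update rule (Theorem~\ref{toy:gen:measurementupdate}). Concretely, let the agent hold epistemic state $(V,\vec v)$ before the experiment. If the $A$-measurement yields outcome $\vec v_{A=1}$, the state updates to $(V',\vec v')$ with
\begin{equation*}
V' = V_A \oplus V_{\mathrm{commute},A}, \qquad \vec v' \in (V_A^{\perp} + \vec v_{A=1}) \cap (V_{\mathrm{commute},A}^{\perp} + \vec v).
\end{equation*}
The statement ``$A=1 \implies B=1$'' then means exactly that, starting from $(V',\vec v')$, the $B$-measurement yields outcome $\vec v_{B=1}$ with certainty.

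Next I would apply Lemma~\ref{para:ingr:cond} to $(V',\vec v')$ and the measurement $(V_B,\vec v_{B=1})$. This directly gives two requirements: (i) $V_B \subset V' = V_A \oplus V_{\mathrm{commute},A}$, which is exactly condition~1 of the lemma; and (ii) $((V')^{\perp} + \vec v') \cap (V_B^{\perp} + \vec v_{B=1}) \neq \emptyset$. The work then reduces to rewriting (ii) as the triple intersection in condition~2.

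For that rewriting I would chain the linear-algebra lemmas from Appendix~\ref{appendix:la}. By \cref{toy:gen:lin3},
\begin{equation*}
(V')^{\perp} = (V_A \oplus V_{\mathrm{commute},A})^{\perp} = V_A^{\perp} \cap V_{\mathrm{commute},A}^{\perp}.
\end{equation*}
Since $\vec v'$ lies in both $V_A^{\perp}+\vec v_{A=1}$ and $V_{\mathrm{commute},A}^{\perp}+\vec v$, I can apply \cref{toy:gen:lin1} twice to rebase each affine set at $\vec v'$, and then \cref{toy:gen:lin2} to get
\begin{equation*}
(V')^{\perp} + \vec v' = (V_A^{\perp}+\vec v_{A=1}) \cap (V_{\mathrm{commute},A}^{\perp}+\vec v).
\end{equation*}
Intersecting both sides with $V_B^{\perp}+\vec v_{B=1}$ turns condition (ii) into condition~2 of the lemma. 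This gives the ``$\Leftarrow$'' direction; for ``$\Rightarrow$'', I would run the same chain of identities in reverse, using that any $\vec v'$ witnessing the triple intersection is a valid post-measurement valuation, so the certainty criterion of Lemma~\ref{para:ingr:cond} is satisfied on $(V',\vec v')$.

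The main obstacle I expect is bookkeeping rather than conceptual: one must carefully verify that \cref{toy:gen:lin2} actually applies, i.e.\ that the two affine sets $V_A^{\perp}+\vec v_{A=1}$ and $V_{\mathrm{commute},A}^{\perp}+\vec v$ really do intersect in the scenarios under consideration, and that this intersection is nonempty precisely when a post-measurement valuation $\vec v'$ exists. This is already baked into the measurement update rule (the outcome $\vec v_{A=1}$ had to be possible to begin with), so the hypothesis of \cref{toy:gen:lin2} is automatic. The only subtlety is ensuring the argument is symmetric under the biconditional: one direction assumes the inference holds and must extract the triple-intersection witness, while the other starts from the witness and must show the post-update certainty. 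Both follow from the same chain of set identities, read in the two directions.
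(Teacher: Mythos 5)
Your proposal is correct and follows essentially the same route as the paper: apply the measurement update rule to obtain the post-$A$-measurement state $\bigl(V_{\text{commute},A}\oplus V_A,\ \vec v'\bigr)$, invoke \cref{para:ingr:cond} on that state with the measurement $(V_B,\vec v_{B=1})$, and then convert the resulting pairwise condition into the triple intersection via \cref{toy:gen:lin3} and \cref{toy:gen:lin1} (the paper leaves the \cref{toy:gen:lin2} step implicit but uses the same identities). Your extra care about the nonemptiness hypothesis and the two directions of the biconditional is consistent with, and slightly more explicit than, the paper's presentation.
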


\begin{proof}
    The updated state if we measure $A = 1$ is $((V_{\text{commute, A}}
    \oplus V_a), \vec{v}')$ where $\vec{v}' \in (V_A + \vec{v}_{A = 1}) \cap
    (V_{\text{commute, A}} + \vec{v})$. Thus, \cref{para:ingr:cond} says
    that the statement ``A = 1 $\implies$ B = 1'' can be made if and
    only if 
    \begin{enumerate}
        \item $\begin{aligned}[t]
            V_{B} \subset (V_{\text{commute, A}} \oplus V_A)
        \end{aligned}$
        \item $\begin{aligned}[t] ((V_{\text{commute, A}} \oplus V_A)^{\perp} +
            \vec{v}') \cap (V_{B}^{\perp} + \vec{v}_{B = 1}) \neq \emptyset
            \end{aligned}$
    \end{enumerate}
    The second condition is equivalent to 
    \begin{equation}
        (V_{\text{commute, A}}^{\perp} + \vec{v}') \cap  (V_A^{\perp} +
        \vec{v}') \cap (V_{B}^{\perp} + \vec{v}_{B = 1}) \neq \emptyset.
    \end{equation}
    Based on~\cref{toy:gen:lin1}  
    we can conclude that the second condition is equivalent to
    \begin{equation}
        (V_{\text{commute, A}}^{\perp} + \vec{v}) \cap  (V_A^{\perp} +
        \vec{v}_{A = 1}) \cap (V_{B}^{\perp} + \vec{v}_{B = 1}) \neq \emptyset.
    \end{equation}
\end{proof}

\paragraph{Toy Bell scenario.}
The example of the Bell scenario in the generalized formalism goes as follows. Alice and Bob's shared entangled  state can be expressed as 
\begin{equation}\label{state_ex}
    (V,\vec{v}) = \left( \left\langle \begin{pmatrix}  1\\[6pt] 0 \\[6pt] 1 \\[6pt] 0
    \end{pmatrix}, \begin{pmatrix} 0\\[6pt] 1 \\[6pt] 0
    \\[6pt] -1
    \end{pmatrix} \right\rangle, \begin{pmatrix} 1\\[6pt] 0 \\[6pt] 0
        \\[6pt] 0 \end{pmatrix}  \right).
\end{equation}
Bob's $Z$ measurement corresponds to observable $\left\langle \begin{pmatrix}  0\\[6pt] 0 \\[6pt] 0 \\[6pt] 1 \end{pmatrix} \right\rangle$. If he obtains outcome $p$, he updates his description of the shared stateto 
\begin{equation}
    \left( \left\langle \begin{pmatrix}  0\\[6pt] 0 \\[6pt] 0 \\[6pt] 1
    \end{pmatrix}, \begin{pmatrix} 0\\[6pt] 1 \\[6pt] 0
    \\[6pt] -1
    \end{pmatrix} \right\rangle, \begin{pmatrix} 0\\[6pt] p \\[6pt] 0
        \\[6pt] p \end{pmatrix}  \right) = \left( \left\langle \begin{pmatrix}  0\\[6pt] 0 \\[6pt] 0 \\[6pt] 1
        \end{pmatrix}, \begin{pmatrix} 0\\[6pt] 1 \\[6pt] 0
        \\[6pt] 0
        \end{pmatrix} \right\rangle, \begin{pmatrix} 0\\[6pt] p \\[6pt] 0
            \\[6pt] p \end{pmatrix}  \right).
\end{equation}
Now, if Alice measures her system in $\left\langle \begin{pmatrix}  0\\[6pt] 1 \\[6pt] 0 \\[6pt] 0 \end{pmatrix} \right\rangle$, she will get the outcome $p$ with certainty. Thus, if Alice and Bob share the state $(V,\vec{v})$, the conclusion ``$B = p \implies A = p$'' can be made with certainty.
We can also check that the conditions of \cref{pred_cert} are fulfilled,
\begin{enumerate}
    \item $\left\langle \begin{pmatrix}  0\\[6pt] 1 \\[6pt] 0 \\[6pt]
        0 \end{pmatrix} \right\rangle \subset \left\langle \begin{pmatrix}  0\\[6pt] 0 \\[6pt] 0 \\[6pt] 1
        \end{pmatrix}\right\rangle \oplus \left\langle\begin{pmatrix} 0\\[6pt] 1 \\[6pt] 0
        \\[6pt] -1
        \end{pmatrix} \right\rangle$ , 
    \item $\begin{pmatrix}  0\\[6pt] p \\[6pt] 0 \\[6pt] p
    \end{pmatrix} \in \left( \left\langle \begin{pmatrix}  0\\[6pt] 0 \\[6pt] 0 \\[6pt] 1
    \end{pmatrix}\right\rangle^{\perp} + \begin{pmatrix} 0\\[6pt] p \\[6pt] 0
        \\[6pt] p \end{pmatrix} \right) \cap \left( \left\langle \begin{pmatrix}  0\\[6pt] 0 \\[6pt] 0 \\[6pt] 1
        \end{pmatrix} \right\rangle^{\perp} +
        \begin{pmatrix} 0\\[6pt] p \\[6pt] 0
        \\[6pt] p \end{pmatrix} \right) \cap \left(\left\langle \begin{pmatrix}  0\\[6pt] 1 \\[6pt] 0 \\[6pt] 0
        \end{pmatrix}\right\rangle^{\perp} + \begin{pmatrix} 0\\[6pt] p \\[6pt] 0
        \\[6pt] p \end{pmatrix} \right)$ .
\end{enumerate}

\subsection{No Frauchiger-Renner paradox in the toy theory}
\label{appendix:fr-proof}
\paragraph{Experimental setting.} 
We can write an arbitrary global ontic state of Alice and Bob's labs  before the measurements start as
    \begin{equation}
       \vec{o} = \begin{pmatrix} \vec{o}_R \\[6pt] \vec{o}_A \\[6pt] \vec{o}_S \\[6pt] \vec{o}_B \end{pmatrix},
    \end{equation}
    with $\vec{o}_R \in  \mathbb{Z}_d^{n_R}$ or $\mathbb{R}^{n_R}$, $\vec{o}_A \in  \mathbb{Z}_d^{n_A}$ or $\mathbb{R}^{n_A}$, $\vec{o}_S \in  \mathbb{Z}_d^{n_S}$ or $\mathbb{R}^{n_S}$, and $\vec{o}_B \in  \mathbb{Z}_d^{n_B}$ or $\mathbb{R}^{n_B}$.
 We allow for arbitrary measurements by the different agents, fixing only the range of systems measured and the order of measurements: 
    \begin{description}
        \item[Alice at t = 1] performs the following measurement
        \begin{equation}
            V_{A} = \langle \begin{pmatrix} \vec{v}_{A,1} \\[6pt] \vec{0} \end{pmatrix} , \dots ,  \begin{pmatrix} \vec{v}_{A,{k_A}} \\[6pt] \vec{0} \end{pmatrix}\rangle
        \end{equation}
        where $\vec{v}_{A} \in \mathbb{Z}_d^{n_R}$ or $\mathbb{R}^{n_R}$.
        Alice says she got the outcome ``1'' if she got the result
        $\vec{v}_{A = 1}$.

        \item[Bob at t = 2] performs the following measurement
        \begin{equation}
            V_{B} = \langle \begin{pmatrix} \vec{0} \\[6pt]\vec{v}_{B,{1}} \\[6pt] \vec{0}  \\[6pt] \end{pmatrix}, \dots , \begin{pmatrix} \vec{0} \\[6pt]\vec{v}_{B,{k_B}} \\[6pt] \vec{0}  \\[6pt] \end{pmatrix} \rangle
        \end{equation}
        where $\vec{v}_{B} \in \mathbb{Z}_d^{n_S}$ or $\mathbb{R}^{n_S}$.
        Bob says he got the outcome 1 if he got the result
        $\vec{v}_{B = 1}$.

        \item[Ursula at t = 3] performs the following measurement
        \begin{equation}
            V_{U} = \langle \begin{pmatrix} \vec{v}_{U,1} \\[6pt] \vec{0} \end{pmatrix} , \dots , \begin{pmatrix} \vec{v}_{U,k_U} \\[6pt] \vec{0} \end{pmatrix} \rangle
        \end{equation}
        where $\vec{v}_{U,1}, \vec{v}_{U,2} \in \mathbb{Z}_d^{n_A+n_R}$ or $\mathbb{R}^{n_A+n_R}$.
        Ursula says she measured ``ok'' if she got the outcome
        $\vec{v}_{U,ok}$ and ``fail'' if she got the outcome $\vec{v}_{U,fail}$.
        These two vectors need to correspond to distinct outcomes,
        i.e.\ $\vec{v}_{U,ok} -\vec{v}_{U,fail} \notin V_{U}^{\perp}$.

        \item[Wigner at t = 4] performs the following measurement
        \begin{equation}
            V_{W} = \langle \begin{pmatrix} 
             \vec{0} \\[6pt] \vec{v}_{W,1} \\[6pt] \end{pmatrix}, \dots ,\begin{pmatrix}\vec{0} \\[6pt] \vec{v}_{W,k_W} \\[6pt] \end{pmatrix} \rangle
        \end{equation}
        where $\vec{v}_{W,1}, \vec{v}_{W,2} \in \mathbb{Z}_d^{n_B+n_S}$ or $\mathbb{R}^{n_W+n_s}$.
        Wigner says he measured ``ok'' if he got the outcome
        $\vec{v}_{W,ok}$ and ``fail'' if he got the outcome $\vec{v}_{W,fail}$.
        These two vectors need to correspond to distinct outcomes,
        i.e.\ $\vec{v}_{W,ok} -\vec{v}_{W,fail} \notin V_{W}^{\perp}$.
    \end{description}

The linear algebraic statements used in the proof are formally justified in Appendix~\ref{appendix:la}.
\parathm*
\begin{proof}
    We separate the proof in four parts. First, we find necessary
    conditions on the epistemic state such that predictions can be
    made with certainty. Second, we find conditions such that the
    probability for Wigner and Ursula both get ``ok'' is non zero.
    Finally, we show that two measurement outcomes which would lead to a
    contradiction must be the same and, thus, do not lead to a
    contradiction.

    \paragraph{Epistemic state that allows predictions with certainty.} 

    We follow the chain of reasoning and determine what $V$ needs
    to fulfill to allow for predictions with certainty:

    \begin{description}
        \item[``U = ok $\implies$ B = 1''] 
        The first condition for predictions with certainty requires
        that $V_{B} \subset V_{\text{commute},U} \oplus V_U$ where
        $V_{\text{commute},U}$ denotes subset of $V$ that commutes
        with all vectors in $V_U$. This means for all vectors $\vec{v}_B \in
        V_B$ these exist vectors $\vec{v} \in V_{\text{commute},U}, \vec{v}_U \in V_U$ such that $\vec{v}_B = \vec{v} + \vec{v}_U$.
        \item[``B = 1 $\implies$ A = 1''] 
        The first condition for predictions with certainty requires
        that $V_{A} \subset V_{\text{commute},B} \oplus V_B$ where
        $V_{\text{commute},B}$ denotes subset of $V$ that commutes
        with all vectors in $V_B$. This means for all vectors $\vec{v}_A \in
        V_A$ these exist vectors $\vec{v} \in V_{\text{commute},B},\vec{v}_B \in V_B$ such that $\vec{v}_A = \vec{v} + \vec{v}_B$.
        \item[``A = 1 $\implies$ W = fail''] 
        The first condition for predictions with certainty requires
        that $V_{W} \subset V_{\text{commute},A} \oplus V_A$ where
        $V_{\text{commute},A}$ denotes subset of $V$ that commutes
        with all vectors in $V_A$. This means for all vectors $\vec{v}_W \in
        V_W$ these exist vectors $\vec{v} \in V_{\text{commute},A},\vec{v}_A \in V_A$ such that $\vec{v}_W = \vec{v} + \vec{v}_A$.
    \end{description}
    In total, this means for each $\vec{v}_W \in V_W$ there exist vectors
    $\vec{v}_1\in V_{\text{commute},A}$,$\vec{v}_2 \in V_{\text{commute},B}$,$\vec{v}_3
    \in V_{\text{commute},U}$ , $\vec{v}_A \in V_A$,$\vec{v}_B \in V_B$, and $\vec{v}_U
    \in V_U$ such that
    \begin{align}\label{para:proof:conc1}
        \begin{split}
            \vec{v}_W &= \vec{v}_1 + \vec{v}_A \\
            &= \vec{v}_1 + \vec{v}_2 + \vec{v}_B \\
            &= \vec{v}_1 + \vec{v}_2 + \vec{v}_3 + \vec{v}_U.
        \end{split}
    \end{align}

    \paragraph{$P(ok,ok)$ $\neq$ zero.}
    
    Even if the above chain of reasoning holds, the paradox only
    occurs if, additionally, the probability for Ursula and Wigner to
    both get ``ok'' $P(ok,ok)$ is not zero. As they perform a joint
    measurement their measurement is given by $V_{U} \oplus V_{W}$.
    Because Ursula and Wigner measure different subsystems we can
    choose equivalent measurement outcome vectors such that  $\vec{v}_{U,ok}
    \in V_{W}^{\perp}$ and $\vec{v}_{W,ok} \in V_{U}^{\perp}$. Thus, the
    measurement outcome $W = ok, U = ok$ has has a shift vector
    $\vec{v}_{U,ok} + \vec{v}_{W,ok}$. For $P(ok,ok) \neq 0$ to hold the state
    $(V,\vec{v})$ has to fulfill

    \begin{equation}
        ((V_{U} \oplus V_{W})^{\perp} + \vec{v}_{U,ok} + \vec{v}_{W,ok}) \cap (V^{\perp} + \vec{v}) \neq \emptyset.
    \end{equation}
    \cref{toy:gen:lin2} allows us to conclude that this
    condition is equivalent to 
    \begin{equation} \label{para:proof:cond1}
        \vec{v}_{U,ok} + \vec{v}_{W,ok} - \vec{v} \in (V_{U} \oplus V_{W})^{\perp} \oplus V^{\perp}.
    \end{equation}
    Due to $(V^{\perp})^{\perp} = V$ for a submodule or subset $V$ (\cref{toy:gen:lin4}), we can rewrite the expression $(V_{U}\oplus V_{W})^{\perp} \oplus V^{\perp}$
    
    \begin{align}
        \begin{split}
            \left(V_{U}
    \oplus V_{W}\right)^{\perp} \oplus V^{\perp} &= 
    \left( \left(  \left(V_{U} 
    \oplus V_{W}\right)^{\perp} \oplus V^{\perp}\right)^{\perp} \right)^{\perp} \\
    &= \left( \left(\left(V_{U} \oplus V_{W}\right)^{\perp}\right)^{\perp} \cap\left( V^{\perp}\right)^{\perp}\right)^{\perp} \\
    &=  \left(  \left(V_{U}
    \oplus V_{W}\right) \cap V\right)^{\perp}. 
        \end{split}
    \end{align}
    Thus, \cref{para:proof:conc1} is equivalent to 
    \begin{equation} 
        \vec{v}_{U,ok} + \vec{v}_{W,ok} - \vec{v} \in \left(  \left(V_{U}
        \oplus V_{W}\right) \cap V\right)^{\perp}.
    \end{equation}
    
    
    \paragraph{Getting the correct outcomes with certainty.} 
    We want to ensure that in the reasoning chain we cannot only
    conclude with certainty but we can also make the desired
    conclusions. For example, the previous conditions ensure that when
    Bob measures he gets an outcome with certainty
    (which can be calculated from the epistemic state), but the
    condition we consider here ensures that this outcome is $B = 1$:
    
    \begin{description}
        \item[``U = ok $\implies$ B = 1''] 
        We apply the second condition for predictions with certainty. 
        Thus, it has to hold that 
   
        \begin{equation}\label{para:proof:cond2}
            (V_{B}^{\perp} + \vec{v}_{B = 1}) \cap (V_{\text{commute}, U}^{\perp} + \vec{v}) \cap (V_U^{\perp} + \vec{v}_{U,ok}) \neq \emptyset.
        \end{equation}
        The two measurements are defined on two different subsystems.
        Therefore, we can choose, without loss of generality, $\vec{v}_{B = 1} \in
        V_U^{\perp}$ and $\vec{v}_{U,ok} \in V_B^{\perp}$. Additionally, it does not matter
        which intersection is calculated first. Therefore, we can first
        calculate the intersection $(V_{B}^{\perp} + \vec{v}_{B = 1}) \cap
        (V_U^{\perp} + \vec{v}_{U,ok})$ using~\cref{toy:gen:lin1,toy:gen:lin3}

        \begin{align}
            \begin{split}
                &(V_{B}^{\perp} + \vec{v}_{B= 1}) \cap (V_U^{\perp} + \vec{v}_{U,ok}) \\ &= (V_{B}^{\perp} + \vec{v}_{U,ok} +\vec{v}_{B= 1}) \cap (V_U^{\perp} + \vec{v}_{U,ok} + \vec{v}_{B= 1}) \\
            &= (V_{B}^{\perp}  \cap V_U^{\perp}) + \vec{v}_{B =  1} +\vec{v}_{U,ok} \\
            &= (V_{B}  \oplus V_U)^{\perp} + \vec{v}_{B= 1} +\vec{v}_{U,ok}
            \end{split}.
        \end{align}
        
        Plugging this result into \cref{para:proof:cond2} we find the
        condition
        \begin{equation}
            ((V_{B}  \oplus V_U)^{\perp} + \vec{v}_{B= 1} +\vec{v}_{U,ok}) \cap (V_{\text{commute}, U}^{\perp} + \vec{v}) \neq \emptyset.
        \end{equation}
        Due to \cref{toy:gen:lin1} this condition is equivalent
        to the condition
        \begin{equation} \label{para:proof:cond3}
            \vec{v}_{B= 1} +\vec{v}_{U,ok} - \vec{v} \in (V_{B}  \oplus V_U)^{\perp} \oplus V_{\text{commute}, U}^{\perp} = \left((V_{B}  \oplus V_U) \cap V_{\text{commute}, U}\right)^{\perp}.
        \end{equation} 

        \item[``B = 1 $\implies$ A = 1''] 
        With the same reasoning as above we find 
        \begin{equation}
            \vec{v}_{A= 1} +\vec{v}_{B = 1} - \vec{v} \in  \left((V_{B}  \oplus V_A) \cap V_{\text{commute}, B}\right)^{\perp}.
        \end{equation}
        \item[``A = 1 $\implies$ W = fail''] 
        With the same reasoning as above we find 
        \begin{equation}
            \vec{v}_{A= 1} +\vec{v}_{W,fail} - \vec{v} \in  \left((V_{A}  \oplus V_W) \cap V_{\text{commute}, A}\right)^{\perp}.
        \end{equation}
    \end{description}

    In total, the epistemic state $(V,\vec{v})$ and the measurements
    of Alice, Bob, Ursula, and Wigner need to fulfill
    \begin{enumerate}
        \item $V_{B} \subset V_{\text{commute},U} \oplus V_U$
        \item $V_{A} \subset V_{\text{commute},B} \oplus V_B$
        \item  $V_{W} \subset V_{\text{commute},A} \oplus V_A$
        \item  $\vec{v}_{U,ok} + \vec{v}_{W,ok} - \vec{v} \in \left(  \left(V_{U}
        \oplus V_{W}\right) \cap V\right)^{\perp}$
        \item $\vec{v}_{B= 1} +\vec{v}_{U,ok} - \vec{v} \in \left((V_{B}  \oplus V_U) \cap V_{\text{commute}, U}\right)^{\perp}$
        \item $\vec{v}_{A= 1} +\vec{v}_{B = 1} - \vec{v} \in  \left((V_{B}  \oplus V_A) \cap V_{\text{commute}, B}\right)^{\perp}$
        \item $\vec{v}_{A= 1} +\vec{v}_{W,fail} - \vec{v} \in  \left((V_{A}  \oplus V_W) \cap V_{\text{commute}, A}\right)^{\perp}$
    \end{enumerate}
    
    Let us assume we have found a state $(V,\vec{v})$ and measurements such
    that the above chain of reasoning holds, and $P(ok,ok) \neq 0$.
    Such a state would lead to a paradox. In the following, we show
    that such a state and measurements cannot exist. 
    
    For all $\vec{v}_W$ there exist $\vec{v}_A,\vec{v}_B$, and $
    \vec{v}_U$ be as in \cref{para:proof:conc1}. Without loss of
    generality, we can choose equivalent measurement outcomes such
    that $\vec{v}_{B = 1} \in V_U^{\perp}$ and  $\vec{v}_{U,ok} \in
    V_B^{\perp}$. Then it holds that $ \vec{v}_U - \vec{v}_B  \in V_{\text{commute},
    U}$ and $ \vec{v}_U - \vec{v}_B  \in V_B \oplus V_U$. Thus, \cref{para:proof:cond3} implies that
    
    \begin{equation}\label{para:proof:conc2}
        -\vec{v}_B^T \vec{v}_{B = 1} + \vec{v}_U^T \vec{v}_{U,ok} + \vec{v}_B^T \vec{v} -\vec{v}_{U}^T \vec{v} = 0. 
    \end{equation}
    With the same argument we can find the following conditions
    \begin{align}
        -\vec{v}_B^T \vec{v}_{B = 1} + \vec{v}_A^T \vec{v}_{A = 1} + \vec{v}_B^T \vec{v} -\vec{v}_A^T \vec{v} &= 0, \label{para:proof:conc3} \\
        -\vec{v}_W^T \vec{v}_{W,fail} + \vec{v}_A^T \vec{v}_{A = 1} - \vec{v}_A^T \vec{v} +\vec{v}_W^T \vec{v} &= 0. \label{para:proof:conc4}
    \end{align}
    Subtracting \cref{para:proof:conc3} from \cref{para:proof:conc4}
   find the condition
   \begin{equation}\label{para:proof:conc5}
    \vec{v}_W^T \vec{v}_{W,fail} -\vec{v}_W^T \vec{v} + \vec{v}_B^T \vec{v}_{B = 1} - \vec{v}_B^T \vec{v} = 0.
   \end{equation}
   Adding up \cref{para:proof:conc5} and \cref{para:proof:conc2} we
   find the condition
   \begin{equation}\label{para:proof:conc6}
    -\vec{v}_W^T \vec{v}_{W,fail} +\vec{v}_W^T \vec{v} + \vec{v}_U^T \vec{v}_{U,ok} - \vec{v}_U^T \vec{v} = 0.
\end{equation}

    If $P(ok,ok) \neq 0$ it holds that for all 
    $\vec{w} \in \left(V_{U}
    \oplus V_{W}\right) \cap V$ 
    \begin{eqnarray}
        \vec{w}^T(\vec{v}_{U,ok} + \vec{v}_{W,ok} - \vec{v}) = 0.
    \end{eqnarray} 
    In particular, it holds that $\vec{w} = \vec{v}_U - \vec{v}_W \in \left(V_{U}
    \oplus V_{W}\right) \cap V$. Thus we can conclude that the
    following condition holds
    \begin{equation}\label{para:proof:conc7}
        -\vec{v}_W^T \vec{v}_{W,ok} + \vec{v}_U^T \vec{v}_{U,ok} -\vec{v}_U^T \vec{v} +\vec{v}_W^T \vec{v} = 0.
    \end{equation}
    We can subtract \cref{para:proof:conc6} from
    \cref{para:proof:conc7} and find 
    \begin{equation}\label{para:proof:conc8}
        \vec{v}_W^T (\vec{v}_{W,ok} - \vec{v}_{W,fail}) = 0.
    \end{equation}
    Because $\vec{v}_U,\vec{v}_A,\vec{v}_B$ as in \cref{para:proof:conc1} can be found
    for all $\vec{v}_W$ it holds that $\vec{v}_{W,ok}- \vec{v}_{W,fail} \in
    V_W^{\perp}$. Thus, $\vec{v}_{W,ok}- \vec{v}_{W,fail}$ correspond to the same
    measurement outcome, as they have same valuation for any $\vec{v}_W \in
    V_W$. In summary, if there is a state and measurements such that the
    paradoxical chain  of reasoning would hold, we have shown that
    then the two measurement outcomes that would lead to a paradox
    have to be the same outcome. Therefore, no such paradoxical chain
    of reasoning is possible.
\end{proof}

\section{Review of quantum processes and experiments}
\subsection{Quantum measurements as physical processes}
\label{appendix:quantum-analogues}

The usual way to describe the measurement process in quantum theory is to start with von Neumann measurements~\cite{vonNeumann1955}, which project the system into one of the eigenstates of the observable -- such a measurement can be characterized by a set of projectors. 
However, von Neumann measurements only represent a certain class of measurements, as they contain all information about the observable. Generally, we are also interested in measurements which extract information only partially -- while they reduce the uncertainty about the observable, they don't remove it completely. These generalized measurements are known as POVMs (positive operator-valued measurements)~\footnote{They can be characterized by a generalising the set of projectors above: suppose that we pick $\{\Pi_i\}$ -- a set of $m$ operators with the only restriction $\sum_i \Pi_i^\dagger \Pi_i = \mathbb{1}$, where $m \leq n$.}.
Operationally, POVMs can be implemented by introducing another quantum system, an ancilla (which can play the part of a pointer or a memory), performing a joint unitary on both systems, and then subjecting the ancilla to a von Neumann measurement.
For example, the simplest way to measure a qubit in its computational basis with projectors $\{\proj{0}{0}_S, \proj{1}{1}_S\}$ is to perform a joint CNOT gate on the system and the memory (Figure~\ref{fig:discrete-cnot}).

Now let us consider the memory update for the case of the continuous system. For a system $\mathcal{H}_S$, we define the orthonormal basis $\{\ket{x}_S \ | \ x \in \mathbb{R}\}$ such that the states of the basis fulfill
\begin{equation}
    \hat{x}\ket{x}_S = x \ket{x}_S
\end{equation}
Let us also introduce a memory system $\mathcal{H}_M$, isomorphic to the Hilbert space $\mathcal{H}_S$. Our aim is to describe an operation which would coherently copy the state of the system $S$ to the system $M$.
We define the $CNOT_X$ gate as the transformation
\begin{equation}
    CNOT_X: |x_1\rangle_S \otimes |x_2\rangle_M \to |x_1\rangle_S \otimes |x_2 + x_1\rangle_M.
\end{equation}
We can call $CNOT_X$ a \textit{$X$-memory update}, as it is written w.r.t. the position basis. If we choose $x_2 = 0$, then the position of the first system $S$ (the one being measured) is copied into the memory system $M$. After performing the memory update on the state $|x_1\rangle_S \otimes \ket{0}_M$, the state evolves to $\ket{x_1}_S\ket{x_1}_M$. Physically, this can be implemented as the action of the Hamiltonian $H_{SM} = \hat X_S \otimes \hat P_M$ for time $t=1$, which couples two systems in the following way (we assume $\hbar=1$):
\begin{align*}
    U_{SM}(t) \left (\sum_k \alpha_k \ket{x_k}_S \otimes \ket{x}_M \right) &= \exp\left( -\frac{it}{\hbar} \left( X_S \otimes P_M \right) \right) \left( \sum_k \alpha_k \ket{x_k}_S \otimes \ket{x}_M \right) \\
    &= \sum_k \alpha_k \ket{x_k}_S \otimes \exp\left( -i x_k P_M \right) \ket{x}_M \\
    &= \sum_k \alpha_k \ket{x_k}_S \otimes \ket{x+x_k}_M.
\end{align*}
In principle, we can substitute the observable $\hat X_S$ on the system $S$ with any other physical observable $\hat A_S = \sum_k a_k \ket{a_k}\bra{a_k}_S$. In that case, the Hamiltonian takes the form $\hat H_{SM} = \hat A_S \otimes \hat P_M$, and the memory update $CNOT_X$ acts as
\begin{align*}
    CNOT_X: \sum \alpha_k\ket{a_k}\otimes \ket{x}_M \rightarrow \sum_k \alpha_k \ket{a_k} \otimes \ket{x+a_k}
\end{align*}

Suppose that the pointer $M$ has the initial position wave function $\psi_0(x)$, for instance a Gaussian wave. The interaction Hamiltonian reads $\hat H_{SM} = g\hat A_S \otimes \hat P_M$ (where we also add the factor of $g$ for quantifying the strength of the interaction), which leads to 
\begin{align*}
    \ket{\psi_k}_M &=  e^{-i t g a_k \hat P_M}  \ket{\psi_0}_M
    =
    e^{-i t g a_k \hat P_M}  \int_{-\infty}^{+\infty} dx\  \psi_0(x)\ \ket x_M
    = \int_{-\infty}^{+\infty} dx\ \psi_0(x - t\ g\  a_k)\ \ket x_M.
\end{align*}
The final global state is therefore
\begin{align*}
    \sum_k \alpha_k \ket{a_k}_S \otimes \int_{-\infty}^{+\infty} dx\ \psi_0(x - t \ g a_k)\ \ket x_M,
\end{align*}
where each outcome is correlated with a shift in the position of the pointer, and the weight of each peak corresponds to the probability of observing that outcome~\footnote{This measurement can be made sharper or weaker by tuning the parameters of the initial wave function of the pointer and the interaction time.} (Figure~\ref{fig:continuous-cnot}).   Let us just look at two examples that will be useful later to compare to the toy theory. For simplicity, suppose that we tune the interaction Hamiltonian and interaction time such that $tg=1$, that the system $S$ measured is continuous, and that we are measuring a continuous observable, $\hat A_S = \int_{-\infty}^{+\infty} dk \ a_k \proj{a_k}{a_k}_S$. In particular, let us see what happens when we measure the position ($\hat A_S = \hat X_S = \int_{-\infty}^{+\infty} dx \ x\ \proj x x_S$) of  a system $S$ that's initially in:
\begin{enumerate}
    \item A position eigenstate, $\ket {x'}_S$: the final state becomes  \begin{align}
        \ket{x'}_S \otimes \ket{\psi_{x'}} , \qquad \psi_{x'}(x) = \psi_{0} (x-x') . 
    \end{align}
    If the initial pointer state was also a position eigenstate ($\psi_0(x) = \delta(x-x_0), \ket{\psi_0}_M= \ket{x_0}_M$), we obtain 
    \begin{align}
        \ket{x'}_S \otimes \ket{x_0 + x'}.
    \end{align}
    \item A momentum eigenstate $\ket{p'}_S = (2 \pi \hbar)^{-1/2} \int_{-\infty}^{+\infty} dx' \ e^{ip'x'/\hbar} \ket{x'}_S $: the final state is entangled, 
    \begin{align}
        \frac1{\sqrt{2\pi\hbar}} \int_{-\infty}^{+\infty} dx' \ e^{ip'x'/\hbar} \ket{x'}_S \ket{\psi_{x'}}_M 
        &= 
        \frac1{\sqrt{2\pi\hbar}} \int_{-\infty}^{+\infty} dx' \ \int_{-\infty}^{+\infty} dx \ e^{ip'x'/\hbar}\ \psi_{0} (x-x')  \  \ket{x'}_S  \ket{x}_M .
    \end{align}
   Now we consider the case when the pointer is initialized in a position eigenstate $\ket{x_0}_M$ (that is, $ \psi_{0} (x) = \delta(x-x_0)$). In this case, the global state is simply 
    \begin{align}
        \frac1{\sqrt{2\pi\hbar}} \int_{-\infty}^{+\infty} dx' \ e^{ip'x'/\hbar} \ket{x'}_S \ket{x_0 + x'}_M = \int_{-\infty}^{+\infty} dp e^{-i(p'-p)x_0} \ket{p}_S \ket{p'-p}_M.
    \end{align}
\end{enumerate}

\subsection{Frauchiger-Renner thought experiment}
\label{appendix:fr-quantum}
In this appendix, we present the technical derivation of the Frauchiger-Renner paradox in quantum theory~\cite{Frauchiger_2018}. Here, we assume that the reader has basic knowledge of the postulates and notation of quantum theory. This derivation is adapted from \cite{NL2018} without conditional state preparation.

\paragraph{Systems and initial state.} We have two qubits $R$ and $S$ and two participants Alice and Bob, whose memory registers $A$ and $B$ are also modelled as one qubit each. There are two external agents Ursula and Wigner, whose quantum memories don't need to be explicitly modelled at this stage.  The initial state of $R, S, A, B$ is  a Hardy state of $R$ and $S$ \cite{Hardy1993}, and erased memories.
\begin{align}
    \ket{\psi_0}_{RSAB} &= \frac1{\sqrt3} \left(\ket0_R\ket0_S + \ket1_R \ket0_S + \ket1_R\ket0_S \right) \otimes \ket0_A \ket0_B.
\end{align}
We will describe the protocol and the evolution of the state of $RSAB$ from the perspective of Ursula and Wigner, who put Alice and Bob's labs below the Heisenberg cut in a neo-Copenhagen interpretation (that is, they model Alice and Bob's measurements as reversible physical evolutions of quantum systems). 
\begin{enumerate}[{$t= $} 1.]
    \item Alice measures $R$ in the $Z$ basis and stores the result in her memory. From Wigner's perspective, she is now entangled with $R$, 
    \begin{align}
    \ket{\psi_1}_{RASB} &= \frac1{\sqrt3} \left(\ket0_R\ket0_A\ket0_S + \ket1_R\ket1_A \ket0_S + \ket1_R\ket1_A\ket0_S \right) \otimes  \ket0_B.
\end{align}
    \textit{Notation: we changed the order of the subsystems because this will be easier later.}
    \item Bob measures $S$ in the $Z$ basis and stores the result in his memory. The global state is now a Hardy state between the two labs, 
    \begin{align}
    \ket{\psi_2}_{RASB} &= \frac1{\sqrt3} \left(\ket0_R\ket0_A\ket0_S \ket0_B + \ket1_R\ket1_A \ket0_S \ket0_B + \ket1_R\ket1_A\ket0_S \ket1_B \right)  .
\end{align}

\item Ursula measures Alice's lab ($RA$) in the Bell basis; in particular we care about outcomes with non-zero probability, which correspond to the eigenstates  
\begin{align}
    \ket{\text{ok}}_{RA} &= \frac{\ket0_R \ket0_A - \ket1_R\ket1_A}{\sqrt2}, 
\\
\ket{\text{fail}}_{RA} &= \frac{\ket0_R \ket0_A + \ket1_R\ket1_A}{\sqrt2}.
\end{align}

\item Wigner measures Bob's lab ($SB$) in the Bell basis; again we label the two eigenstates with finite probability as
\begin{align}
    \ket{\text{ok}}_{SB} &= \frac{\ket0_S \ket0_B - \ket1_S\ket1_B}{\sqrt2}, 
\\
\ket{\text{fail}}_{SB} &= \frac{\ket0_S \ket0_B + \ket1_S\ket1_B}{\sqrt2}.
\end{align}
\end{enumerate}

\paragraph{Reasoning and analysis.}
The possibility of both Ursula and Wigner getting the outcome ``ok'' is non-zero:
\begin{gather*}
P[u=w=ok]=|(\bra{\text{ok}}_{RA}\bra{\text{ok}}_{SB})\ket{\psi_2}_{RASB}|^2=\frac{1}{12}.
\end{gather*}
From now on, we post-select on this event. 
At time $t=3$, Ursula reasons about the outcome that Bob observed at $t=2$. Since we can regroup the global state before her measurement as
\begin{align}
    \ket{\psi_2}_{RASB}  = \sqrt{\frac23} \ket{\text{fail}}_{RA}\ket0_S \ket0_B + \frac{1}{\sqrt{3}}\ket{1}_R\ket{1}_A\ket1_S\ket1_B,
\end{align}
Ursula concludes that the only possibility with non-zero overlap with her observation of $\ket{\text{ok}}_{RA}$ is that Bob measured $\ket{1}_S$.  We can write this inference as ``$u=\text{ok} \implies b=1$''. 
She can further reason about what Bob, at time $t=2$ thought about Alice's outcome at time $t=1$. 
Whenever Bob observes $\ket1_S$, he can use the  same form of $\ket{\psi_2}_{RASB}$ to conclude that Alice must have measured $\ket1_R$. We can write this as ``$b=1 \implies a=1$''.
Finally, we can think about Alice's deduction about Wigner's outcome. Using the rewriting of the global state
\begin{align}
    \ket{\psi_2}_{RASB} =  \frac{1}{\sqrt{3}}\ket{0}_R\ket{0}_A \ket0_S\ket0_B + \sqrt{\frac{2}{3}} \ket1_R\ket1_A \ket{\text{fail}}_{SB},
\end{align}
we see that Alice reasons that, whenever she finds $R$ in state $\ket1_R$, then Wigner will obtain outcome ``fail'' when he measures Bob's lab. That is, ``$a=1 \implies w=\text{fail}$''.
Thus, chaining together the statements (the same reasoning that allowed the reader to solve the three hats problem), we reach an apparent contradiction: 
$$w=u=\text{ok} \implies b=1\implies a=1\implies w=\text{fail}.$$ 
That is, 
when the experiments stops with $u=w=\text{ok}$, the agents  can make \emph{deterministic}  statements about each other's reasoning and measurement results, concluding that Alice had predicted $w=\text{fail}$, hence arriving to a logical contradiction.

\newpage
\bibliographystyle{unsrtnat}



\end{document}